\newcommand{\R}{{\mathord{\mathbb R}}}
\newcommand{\Z}{{\mathord{\mathbb Z}}}
\newcommand{\N}{{\mathord{\mathbb N}}}
\newcommand{\C}{{\mathord{\mathbb C}}}
\def\chib {\overline{\chi}}
\newcommand{\KK}{\mathcal{K}}
\newcommand{\HH}{\mathcal{H}}
\newcommand{\FF}{\mathcal{F}}
\newcommand{\VV}{\mathcal{V}}
\newcommand{\WW}{\mathcal{W}}
\newcommand{\hh}{\mathfrak{h}}
\newcommand{\UU}{\mathcal{U}}
\newcommand{\umm}{\underline{m}}
\newcommand{\unn}{\underline{n}}
\newcommand{\upp}{\underline{p}}
\newcommand{\uqq}{\underline{q}}
\newcommand{\uzz}{\underline{0}}
\newcommand{\ran}{{\rm Ran}}
\newcommand{\ben}{\begin{displaymath}}
\newcommand{\een}{\end{displaymath}}
\newcommand{\beqn}{\begin{equation}}
\newcommand{\eeqn}{\end{equation}}
\newcommand{\beqna}{\begin{eqnarray*}}
\newcommand{\eeqna}{\end{eqnarray*}}
\newcommand{\inn}[1]{\langle {#1} \rangle }
\newtheorem{lemma}{Lemma}
\newtheorem{theorem}[lemma]{Theorem}
\newtheorem{remark}[lemma]{Remark}
\newtheorem{proposition}[lemma]{Proposition}
\newtheorem{definition}[lemma]{Definition}
\numberwithin{equation}{section}
\numberwithin{lemma}{section}
\begin{document}
\title{Renormalization Analysis for Degenerate Ground States}
\author{\vspace{5pt} David Hasler and Markus Lange \\
\vspace{-4pt} \small{ Department of Mathematics,
Friedrich Schiller  University  Jena} \\ \small{Jena, Germany } }
\date{}
\maketitle

\begin{abstract}
We consider a Hamilton operator which describes a finite dimensional quantum mechanical system
 with degenerate eigenvalues  coupled to  a field of
relativistic bosons. We show that the ground state projection  and the ground state energy
are analytic functions of the
coupling constant in a cone with apex at the origin, provided a mild infrared assumption holds.
To show the result operator theoretic renormalization is used and extended to degenerate situations.
\end{abstract}

\section{Introduction}

Models of quantum field theory which describe low energy phenomena of quantum
mechanical matter interacting with a quantized field of massless particles have been
 mathematically intensively investigated
(see for example \cite{Spo04} and references therein,  for an early work see \cite{Fro73}).
These  models are used to study
non relativistic matter interacting  with the quantized radiation field or
electrons in a solid interacting with a field of phonons.
Physical properties such as existence of ground states, dispersion relations, and  resonances
 have been treated mathematically rigorous.
In particular, the method of operator theoretic
renormalization, introduced by Bach Fr\"ohlich and Sigal \cite{BacFroSig98-1,BacFroSig98-2},
 has been used in the literature to study    ground states and resonances
\cite{BCFS,GriHas09,Sig09,HasHer11-1,HasHer11-2,
HasHer11-3,FauFroSch14,Fau08,BacCheFroSig07,Che08}.
However, the application of operator theoretic renormalization
usually requires that the unperturbed eigenstate is
non degenerate or at least protected by a symmetry.
In this paper we
 extend  operator theoretic renormalization to situations
where the unperturbed eigenvalue is degenerate and the degeneracy
is lifted after the interaction is turned on.
We note that degenerate situations do occur in physically
realistic models, see for example \cite{AmoFau13}.
To keep  notation  simple we treat the ground state.
Resonances can be treated by the same ideas as used in this
paper with additional notational complexity.
This is planned to be  addressed in a forthcoming paper.

More precisely,
we consider a quantum mechanical  atomic system described by a Hamilton operator acting
on a so called atomic Hilbert space. For simplicity we assume that the
atomic Hilbert space is finite dimensional (we expect that this assumption is not essential and can be relaxed in a straight forward way).
Furthermore, we assume that the  atomic system interacts
with a quantized field of massless bosons by means of a linear coupling. The resulting Hamiltonian
describing the total system is also referred to as generalized spin boson
Hamiltonian. We assume that the interaction  satisfies a mild infrared condition.
The infrared condition is needed for the renormalization analysis to converge. 
It can be shown to include realistic   models of non relativistic quantum electrodynamics  by
means of a so called generalized Pauli Fierz transformation \cite{Sig09}.
We assume that the Hamiltonian  of the atomic subsystem has a degenerate ground state, which
is lifted by formal second order perturbation theory in the coupling constant
(first order perturbation theory does not affect the ground state energy for models which we consider).
We show that the ground
state exists for small values of the coupling constant, a result already known in the literature \cite{Ger00,GriLieLos01,LieLos03,Spo89}. Furthermore,
we show that the ground state projection as well as the ground state energy
are analytic as a function of the coupling constant in an open cone
with apex at the origin. This result is new and it is in  contrast to non degenerate situations,
where it has been shown that the ground state projection and the ground state energy are analytic functions
of the coupling constant \cite{GriHas09}. We do not assume that this is an artefact
of our proof. In fact, we conjecture that  in the degenerate case there may be situations in which the
 ground state projection and possibly the
ground state energy are not analytic
in a neighborhood of zero.
In a related model, where a  hydrogen atom is minimally coupled to the quantized electromagnetic field, non analyticity in the fine structure constant has been
shown \cite{BarCheVouVug10}.

Although we do not obtain analyticity in a neighborhood of zero, analyticity in a cone is of interest in its own right.
It is for example a necessary ingredient to show  Borel summability.  Borel summability
methods allow to recover a function from its asymptotic expansion.
An asymptotic expansion may for example be obtained
using the techniques  employed in   \cite{Ara14,BraHasLan16,HaiSei02,BacFroPiz09,BacFroPiz06}.

In the following section we state the model and the main result.
The subsequent sections are devoted to the proof of the main result.

\section{Model and Statement of Results}

We consider the following model.
Let the atomic Hilbert space be modeled by
$$
\mathcal{H}_{\rm at} = \C^N
$$
and equipped with the standard scalar product. Furthermore we equip  $\mathcal{L}(\HH_{\rm at})$ with the operator norm, which we denote by $\| \cdot \|$.
Let the Fock space
$$
\mathcal{F} = \bigoplus_{n=0}^\infty \hh^{\otimes_s n} ,
$$
with  $\hh = L^2(\R^3 \times \Z_2 )$
model the quantized radiation field.  We denote the Fock vacuum by $\Omega$
and the Hilbert space of the total system by
$$
\HH := \HH_{\rm at} \otimes \mathcal{F} .
$$
We assume that $H_{\rm at} \in \mathcal{L}(\HH_{\rm at})$ is self adjoint.
To simplify our notation we define for $(\boldsymbol{k},\lambda) \in \R^3 \times \Z_2$
\begin{align}\label{eq:easyNotation}
  k := (\textbf{k}, \lambda), \quad
  \int dk :=  \sum_{\lambda=1,2} \int d^3\textbf{k}, \quad  \omega(k) := |k| := |\textbf{k}|
\end{align}
and denote by $a^*(k)$ and $a(k)$ the usual creation and annihilation operator
satisfying canonical commutation relations. For  formal  definitions of the annihilation and creation operator
and associated field operators we refer the reader to   \ref{sec:appfielddef}.
For $G \in L^2(\R^3\times \Z_2 ; \mathcal{L}(\HH_{\rm at}))$ we define
$$
a(G) := \int G^*(k) a(k) dk , \quad    a^*(G) := \int G(k) a^*(k) dk
$$
 which are  densely defined closed linear operators in the Hilbert space.
We define the free field operator by
\begin{equation}  \label{eq:fieldenergy}
H_f := \int{ \omega(k) a^*(k) a(k) dk} ,
\end{equation}
which is defined in the sense of forms.
For $g \in \C$  we  shall study the following operator
\begin{equation}  \label{eq:defhamilton}
H_g := H_{\rm at} +    H_f + g  W     ,
\end{equation}
where  the so called interaction is given by
\begin{equation}  \label{eq:fieldop}
W  = a^*(\omega^{-1/2} G) + a(\omega^{-1/2} G)  .
\end{equation}
We note that $W$ is infinitesimally bounded with respect to $H_f$ if $\omega^{-1} G,\, G  \in L^2(\R^3\times \Z_2 ; \mathcal{L}(\HH_{\rm at}))$.
Let $\epsilon_{\rm at}$ denote the ground state of  $H_{\rm at}$,
 and  let $P_{\rm at}$ denote  the projection onto the eigenspace of $H_{\rm at}$ with
 eigenvalue $\epsilon_{\rm at}$, and let $ \overline{P}_{\rm at} := 1 - P_{\rm at}$.
Define
\begin{align} \label{eq:defofzat}
Z_{\rm at} & : =       -   \int_{}  \frac{dk}{\omega(k)}  P_{\rm at}   G^*(k)    \left[ \frac{ P_{\rm at}}{    |k|} +   \frac{ \overline{P}_{\rm at} }{  H_{\rm at}   - \epsilon_{\rm at}  +    |k|    }
  \right]  G(k)   P_{\rm at}  \quad \upharpoonright \ran P_{\rm at}  ,
 \end{align}
which is a selfadjoint mapping on the ground state space of $H_{\rm at}$.
For $r > 0$ we denote  the open disk in the complex plane by
$$
D_r := \{ z \in \C : |z| < r \} .
$$
In order for the renormalization analysis to be applicable we shall need an infrared condition.
For this we define for   $\mu > 0$
\begin{equation} \label{eq:mu}
L^2_\mu(\R^3\times \Z_2 ; \mathcal{L}(\HH_{\rm at})) :=  \{ G : \R^3\times \Z_2  \to \mathcal{L}(\HH_{\rm at}) :  G \text{ measurable }  , \| G \|_\mu < \infty \} ,
\end{equation}
where we defined
\begin{equation} \label{eq:mu0}
\| G \|_{\mu} :=  \int  \left( \frac{1}{|k|^{3 + 2 \mu}} +  1 \right)  \| G(k) \|^2 dk   .
\end{equation}

\begin{theorem}\label{thm:main} Let $\mu > 0$. Suppose  $G \in L^2_\mu(\R^3\times \Z_2 ; \mathcal{L}(\HH_{\rm at})) $
and let $H_g$ be given by   \eqref{eq:defhamilton}.
Let  $\epsilon^{(2)}_{\rm at}$ denote the smallest eigenvalue of  $Z_{\rm at}$. Assume that $\epsilon^{(2)}_{\rm at}$  is  simple.
Let    $ 0 <  \delta_0  <  \pi/2$, and let
$$S_{\delta_0} := \{ z \in \C : \, | {\rm arg}(z)| < \delta_0 \text{ or } | {\rm arg}(-z)| < \delta_0\} . $$
Then there exists  a $g_0 > 0$  such that for all $g \in D_{g_0} \cap  S_{\delta_0}$
the operator $H_g$ has an  eigenvector $\psi_{g}$ and an eigenvalue $E_{g}$ such that
\begin{equation} \label{expansionmain}
E_{g} = \epsilon_{\rm at} + g^2 \epsilon^{(2)}_{\rm at}   + o(|g|^2) .
\end{equation}
The  eigenvalue and eigenprojection  are  continuous on $S_{\delta_0} \cap D_{g_0}$
and analytic in the interior of $S_{\delta_0} \cap D_{g_0}$.
Furthermore for real $g$  the number  $E_{g}$ is the infimum of the spectrum of  $H_g$.
\end{theorem}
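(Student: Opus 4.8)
The plan is to prove Theorem~\ref{thm:main} by operator-theoretic renormalization \cite{BacFroSig98-1,BacFroSig98-2,BCFS} in the version extended to degenerate spectrum. Fix $z$ in a small complex neighbourhood of $\epsilon_{\rm at}$ and let $\chi_1=\chi_{H_f\le1}$ be a smooth characteristic function of low field energies. Since $W$ is infinitesimally $H_f$-bounded, for $|g|$ small the pair $(H_g-z,\,P_{\rm at}\otimes\chi_1)$ is of smooth Feshbach type, and the associated effective operator $F(z)$ is well defined on $\ran P_{\rm at}\otimes\FF$, of generalized Wick-ordered form, and jointly analytic in $(g,z)$. In contrast to the non-degenerate case, $F(z)$ is not a scalar perturbation of $H_f$ but an operator on the $N$-dimensional atomic subspace tensored with Fock space; expanding the Neumann series that defines $F(z)$ and Wick ordering, its field-vacuum component obeys
\[
 |\Omega\rangle\langle\Omega|\,F(z)\,|\Omega\rangle\langle\Omega|
 \;=\;(\epsilon_{\rm at}-z)\,P_{\rm at}\;+\;g^{2}Z_{\rm at}\;+\;o(|g|^{2})
 \qquad\text{on }\ran P_{\rm at},
\]
the odd orders vanishing on the vacuum, with the remainder and the interaction kernels bounded uniformly in $z$ in terms of $\|G\|_\mu$.

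Next I would reduce this matrix-valued problem to a one-dimensional one using the hypothesis that $\epsilon^{(2)}_{\rm at}$, the smallest eigenvalue of $Z_{\rm at}$, is simple; let $\Pi$ denote the corresponding rank-one spectral projection. The naive attempt to Feshbach immediately onto $\Pi\otimes\chi_1$ fails, because the gap separating $\epsilon^{(2)}_{\rm at}$ from the rest of the spectrum of $Z_{\rm at}$ enters only through $g^{2}$, so the complementary atomic block carries a spectral gap of order $|g|^{2}$ only, whereas the renormalized interaction is of order $|g|$. I therefore first run a ``degenerate'' renormalization flow --- Feshbach decimation onto $\chi_{H_f\le\rho}$ followed by rescaling, carried out with the atomic part still $N$-dimensional --- keeping track of the growing matrix $E^{(n)}(z)\approx\rho^{-n}\bigl((\epsilon_{\rm at}-z)P_{\rm at}+g^{2}Z_{\rm at}\bigr)+H_f$ and of the interaction kernels, which are contracted by $O(\rho^{\mu})$ per step (here $\mu>0$ is used decisively). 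After $n\asymp\log(1/|g|)$ steps one reaches a scale $\rho_*\asymp|g|^{2}$ at which the interaction has been driven below the matrix gap; the complementary block is then $\approx\rho_*^{-1}g^{2}\bigl(Z_{\rm at}-\epsilon^{(2)}_{\rm at}\bigr)\upharpoonright\overline{\Pi}+H_f+o(1)$, which is invertible \emph{provided} $\rho_*^{-1}g^{2}(Z_{\rm at}-\epsilon^{(2)}_{\rm at})$ is sectorial, i.e.\ $\arg(g^{2})$ is bounded away from $\pm\pi$ --- and this is exactly the requirement $g\in S_{\delta_0}$ (with $\delta_0<\pi/2$), the reason analyticity is obtained in a cone rather than a full disk. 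The Feshbach step onto $\Pi\otimes\chi_{H_f\le1}$ at this scale then leaves a one-dimensional atomic variable.

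From here the argument is the by-now standard analytic renormalization of \cite{GriHas09}. The resulting effective operator lies in a small polydisc $\DD(\alpha,\beta,\gamma)$ in the Banach space $\WW_\xi$ of generalized Wick-ordered operators on $\FF_{[0,1]}$, close to $c(\epsilon_{\rm at}+g^{2}\epsilon^{(2)}_{\rm at}-z)+H_f$ with the coupling $O(|g|)$, depends analytically on $(g,z)$, and the renormalization map $\mathcal{R}_\rho$ maps a smaller polydisc into itself, contracting all interaction kernels and the marginal operator while the only expanding direction is the scalar vacuum-energy coordinate. The usual codimension-one fixed-point argument then produces, for each admissible $g$, a unique $z=E_g$ near $\epsilon_{\rm at}$ for which the iterates stay in the polydisc and converge to a multiple of the vacuum projection; isospectrality of the Feshbach map at each scale gives $E_g\in\sigma(H_g)$ together with the eigenvector $\psi_g$ reconstructed from $\Omega$. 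Because every map entering the construction is analytic in $g$ on the interior of $S_{\delta_0}\cap D_{g_0}$, continuous up to the boundary, and the convergence is uniform, $E_g$ and the rank-one eigenprojection $\langle\psi_g,\cdot\rangle\,\psi_g/\|\psi_g\|^{2}$ inherit continuity on $S_{\delta_0}\cap D_{g_0}$ and analyticity in its interior; the expansion \eqref{expansionmain} is read off before the reduction, where $F$ already equals $\epsilon_{\rm at}+g^{2}\epsilon^{(2)}_{\rm at}+o(|g|^{2})$ on the relevant subspace.

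It remains to identify $E_g$ with $\inf\sigma(H_g)$ for real $g$; existence of this infimum as an eigenvalue is already known \cite{Ger00,GriLieLos01,LieLos03,Spo89}. For real $g$ one performs the whole construction keeping all cutoffs real and all projections orthogonal, so every effective operator is self-adjoint and, for $z$ below the bottom of the relevant spectrum, bounded below by a positive constant; hence it fails to be invertible precisely at $z=\inf\sigma(H_g)$, and since the reduction retained the \emph{lowest} eigenvalue $\epsilon^{(2)}_{\rm at}$ of $Z_{\rm at}$ --- so that no spectrum is created below the vacuum coordinate along the flow --- the $E_g$ delivered by the fixed-point argument equals $\inf\sigma(H_g)$. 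The genuinely new difficulty, everything else being an adaptation of \cite{BacFroSig98-1,BacFroSig98-2,BCFS,GriHas09}, lies in the reduction step: extracting $g^{2}Z_{\rm at}$ with an $o(|g|^{2})$ error uniform in the spectral parameter and in the Wick kernels, running the degenerate flow until the interaction drops below the matrix gap, and then performing the rank-one reduction with resolvent bounds uniform on the cone.
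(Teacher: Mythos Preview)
Your outline is conceptually sound and correctly identifies the two essential ingredients: that the second-order matrix $g^{2}Z_{\rm at}$ must be extracted with an $o(|g|^{2})$ error before any rank-one reduction is possible, and that the sectorial restriction $g\in S_{\delta_0}$ enters precisely when one inverts the complementary block containing $g^{2}(Z_{\rm at}-\epsilon_{\rm at}^{(2)})$. The endgame via \cite{GriHas09} and the identification with $\inf\sigma(H_g)$ for real $g$ are also as in the paper.

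Where your route genuinely differs is in how you bridge the gap between the interaction size $O(|g|)$ and the matrix gap $O(|g|^{2})$. You propose to iterate a matrix-valued renormalization flow $n\asymp\log(1/|g|)$ times until the accumulated contraction $\rho^{n\mu}$ drives the interaction below the (rescaled) gap, and only then perform the rank-one Feshbach reduction onto $\Pi$. The paper instead performs exactly \emph{two} initial Feshbach steps, with coupling-dependent scales: a first step at scale $\rho_0$ chosen with $|g|\lesssim\rho_0^{1+\epsilon}$ (so the interaction lands at $O(\rho_0^{\mu+1+\epsilon})$ rather than $O(|g|)$), and a second step at scale $\rho_0\rho_1$ with $\rho_1=\rho_0^{1+2\epsilon+\alpha}$ that simultaneously reduces the field energy and projects onto $\Pi$. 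The constraint $|g|\gtrsim\rho_0^{1+\epsilon+\alpha/2}$ then confines $g$ to a sectorial annulus on which the whole construction is manifestly analytic, and letting $\rho_0\to0$ sweeps out the cone.

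What this buys the paper is that the number of pre-renormalization Feshbach steps is fixed (two), so analyticity in $g$ is immediate from the convergent Neumann expansions; your scheme, with $n$ depending on $|g|$, would need an extra argument to patch together the analytic pieces (e.g.\ fixing $n$ and restricting to an annulus $\rho^{(n+1)/2}\lesssim|g|\lesssim\rho^{n/2}$, which in effect rediscovers the paper's trick). Conversely, your approach is more modular---it separates the field-scale contraction from the atomic rank reduction---and would generalize more transparently to degeneracies lifted at higher order, where the paper's two-step scheme would have to be replaced by several steps anyway (as the paper itself remarks). Both approaches need the same Banach-space machinery for matrix-valued kernels; the paper develops it but uses it for a single matrix-valued step rather than an iterated flow.
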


\begin{remark} {\rm
Let $P_\Omega$ denote the projection onto the vacuum vector $\Omega$. We note that
\begin{equation} \label{eq:secondord}
Z_{\rm at}  \cong  - (  P_{\rm at}  \otimes P_\Omega )  W (H_0 - \epsilon_{\rm at}   )^{-1}
 W  ( P_{\rm at}  \otimes P_\Omega ) \upharpoonright \ran  P_{\rm at}  \otimes P_\Omega ,
\end{equation}
which is exactly the second order energy correction in formal perturbation
theory. }
\end{remark}

\begin{remark}\label{rem:fung2}{\rm
If $N = \sum_\lambda \int a_\lambda(k)^* a_\lambda(k)  $ is the number operator we have the symmetry
$(-1)^N H_g (-1)^N = H_{-g}$. This implies that eigenvalues do not depend on the sign of $g$.
And if they happen to have an asymptotic expansion it cannot depend on odd powers of $g$.}
\end{remark}

\begin{remark}{\rm
Generically one can assume that either a  degeneracy of an eigenvalue
remains after the interaction is added (which happens if the degeneracy is protected by a symmetry, e.g., spin degeneracy) or
it is lifted at some finite order. In this paper  we assume that the
degeneracy of the ground state is lifted at second order.
We believe that the methods used in this paper are also usefull
to treat   degeneracies which are lifted at  higher than second  order,
by possibly inserting several initital Feshbach maps, with energy cutoffs
depending on the coupling constant. }
\end{remark}

\begin{remark}{\rm Borel summability methods  allow in certain situations
to recover  a  function  from its   asymptotic expansion, provided it satisfies a strong asymptotic condition.
Theorem \ref{thm:main} together with Remark \ref{rem:fung2} can be used to  show  that the ground state energy as a function of $g^2$ satisfies  the analyticity requirement of  a strong asymptotic condition \cite{ReeSim4}.
Suppose $\pi/4 <  \delta_0  < \pi/2$ and  $g_0$ are as in Theorem \ref{thm:main}.  Define
the function  $f(w) := E_{\sqrt{w}}$ with $| {\rm arg}(w)| < 2 \delta_0$
and $|\sqrt{w}| < g_0$.
Then $f$ is analytic in the interior of the cone
$S_{2\delta_0} \cap D_{g_0^2}$ and extends continuously onto the boundary.
Moreover $f$ satisfies the analyticity requirement for a strong asymptotic
condition. Now suppose there were  $C$ and $\sigma$ such  that
\begin{align}\label{eq:strongasympcond}
	\left| f(w) - \sum_{n = 0}^N c_n w^n \right|
		\leq C \sigma^{N+1} (N+1)! |w|^{N+1}
\end{align}
for all $N$ and all $w \in S_{2 \delta_0} \cap D_{g_0^2}$.
Then $E_g = f (g^2)$ could be recovered uniquely
by the method of Borel summability, see \cite[Watson's theorem]{ReeSim4}.
}
\end{remark}

The remaining part of the paper is devoted to the proof of Theorem \ref{thm:main}.
First we give an overview of the proof, before we comment on the organization of the paper.
As already mentioned in the introduction, the proof is based on an operator theoretic renormalization analysis.
Such an analysis is based  on the so called smooth Feshbach map and its isospectrality properties.
Let us  introduce the relevant  definitions  and properties. For additional details we refer the reader to  \cite{BCFS,GriHas09}.

Suppose  $\chi$ and $\overline{\chi}$ are commuting, nonzero bounded operators, acting on a separable Hilbert space $\KK$ and satisfying
$\chi^2 + \overline{\chi}^2 = 1$.
We shall refer to $\chi$ and $\overline{\chi}$ as smoothed projections.
By a Feshbach pair $(H,T)$ for $\chi$ we mean a pair of closed operators in $\KK$ with the same domain such that the following properties hold:
\begin{itemize}
\item[(i)] $\chi$ and $\overline{\chi}$ commute with $T$,
\item[(ii)] $T$, $H_{\overline{\chi}} := T + \overline{\chi}W \overline{\chi} : D(T) \cap \ran \overline{\chi} \to \ran \overline{\chi}$ are bijections with bounded inverse.
\item[(iii)] $\overline{\chi} H_{\overline{\chi}}^{-1} \overline{\chi} W \chi : D(T) \to \KK$  is a bounded operator.
\end{itemize}
Given a Feshbach pair $(H,T)$ for $\chi$, we  call the operator
$$
F_\chi(H,T) := H_\chi - \chi W \overline{\chi} H_{\overline{\chi}}^{-1} \overline{\chi} W \chi : D(T) \to \KK
$$
 the Feshbach operator. The mapping $(H,T) \mapsto F_\chi(H,T)$ is called  Feshbach map.
Central for our proof is the  isospectrality property  of  the Feshbach map, stated  in the following theorem.
To formulate it one  introduces the so called auxiliary operator
  $$Q_\chi := \chi - \overline{\chi} H_{\overline{\chi}}^{-1} \overline{\chi} W \chi . $$

\begin{theorem}[\cite{GriHas09}] Let $(H,T)$ be a Feshbach pair for $\chi$ on a separable Hilbert space $\KK$.  Then
\begin{align*}
& \chi : \ker H \to \ker F_\chi(H,T) \\
& Q_\chi   :  \ker F_\chi(H,T)  \to  \ker H
\end{align*}
are linear isomorphisms and inverse to each other.
\end{theorem}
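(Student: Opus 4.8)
My plan is to derive a handful of algebraic intertwining relations and then read off from them that the two maps land in the claimed kernels and are mutually inverse; the invertibility of $T$ and $H_{\overline{\chi}}$ on $\ran\overline{\chi}$ required in (ii) does all the real work, while (iii) only ensures that $Q_\chi$ and $F_\chi(H,T)$ are genuine operators with domain $D(T)=D(H)$. Throughout write $W:=H-T$ and $H_\chi:=T+\chi W\chi$ (the conventions implicit in the definition of $F_\chi$), and introduce the companion operator $Q^{\sharp}_\chi:=\chi-\chi W\overline{\chi}H_{\overline{\chi}}^{-1}\overline{\chi}$. \textbf{Step 1 (intertwining identities).} Using only that $\chi,\overline{\chi}$ commute with $T$ and that $\chi^2+\overline{\chi}^2=1$, one checks on $D(T)$ the two elementary relations
\[
  H\overline{\chi}=\overline{\chi}H_{\overline{\chi}}+\chi^2 W\overline{\chi},
  \qquad
  \overline{\chi}H=H_{\overline{\chi}}\overline{\chi}+\overline{\chi}W\chi^2 .
\]
Inserting the first into $HQ_\chi=H\chi-H\overline{\chi}H_{\overline{\chi}}^{-1}\overline{\chi}W\chi$ and simplifying with $\chi^2+\overline{\chi}^2=1$ gives the key identity $HQ_\chi=\chi\,F_\chi(H,T)$ on $D(T)$, and a symmetric computation with the second relation gives $F_\chi(H,T)\,\chi=Q^{\sharp}_\chi H$ on $D(T)$. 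In particular $\chi$ maps $\ker H$ into $\ker F_\chi(H,T)$ and $Q_\chi$ maps $\ker F_\chi(H,T)$ into $\ker H$.

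\textbf{Step 2 ($Q_\chi\chi=\mathrm{id}$ on $\ker H$).} Let $\psi\in\ker H$. Applying $\overline{\chi}$ to $H\psi=0$ and using the second relation of Step 1, one obtains $H_{\overline{\chi}}\bigl(\overline{\chi}\psi+H_{\overline{\chi}}^{-1}\overline{\chi}W\chi^2\psi\bigr)=\overline{\chi}H\psi=0$; since the bracketed vector lies in $\ran\overline{\chi}\cap D(T)$ and $H_{\overline{\chi}}$ is injective there, $\overline{\chi}\psi=-H_{\overline{\chi}}^{-1}\overline{\chi}W\chi^2\psi$. Substituting this into $Q_\chi\chi\psi=\chi^2\psi-\overline{\chi}H_{\overline{\chi}}^{-1}\overline{\chi}W\chi^2\psi$ yields $Q_\chi\chi\psi=\chi^2\psi+\overline{\chi}^2\psi=\psi$. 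Hence $\chi$ restricted to $\ker H$ is injective with left inverse $Q_\chi$ (restricted to $\ker F_\chi(H,T)$).

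\textbf{Step 3 ($\chi Q_\chi=\mathrm{id}$ on $\ker F_\chi(H,T)$).} Given $\varphi\in\ker F_\chi(H,T)$, set $\psi:=Q_\chi\varphi\in\ker H$ and $\eta:=\chi\psi-\varphi$. By Steps 1--2, $\eta\in\ker F_\chi(H,T)$ and $Q_\chi\eta=Q_\chi\chi\psi-Q_\chi\varphi=\psi-\psi=0$, so it suffices to show $\eta=0$. Put $s:=H_{\overline{\chi}}^{-1}\overline{\chi}W\chi\eta\in\ran\overline{\chi}\cap D(T)$; then $Q_\chi\eta=0$ reads $\chi\eta=\overline{\chi}s$, whence $H_{\overline{\chi}}s=\overline{\chi}W\chi\eta=\overline{\chi}W\overline{\chi}s$, i.e. $Ts=0$, so $s=0$ by injectivity of $T$ on $\ran\overline{\chi}\cap D(T)$, and therefore $\chi\eta=0$. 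Feeding $\chi\eta=0$ into $F_\chi(H,T)\eta=0$ kills every term except $T\eta$, so $T\eta=0$; but then $\eta=(\chi^2+\overline{\chi}^2)\eta=\overline{\chi}^2\eta\in\ran\overline{\chi}\cap D(T)$, and injectivity of $T$ forces $\eta=0$. Thus $\chi Q_\chi\varphi=\varphi$, so $\chi\colon\ker H\to\ker F_\chi(H,T)$ is also surjective. Combining Steps 2 and 3, the restrictions of $\chi$ and $Q_\chi$ are mutually inverse linear bijections between $\ker H$ and $\ker F_\chi(H,T)$.

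\textbf{Expected main difficulty.} Because $\chi^2+\overline{\chi}^2=1$ is not an orthogonal decomposition of $\KK$ and $\chi$ need not commute with $W$ or with $H_{\overline{\chi}}^{-1}$, one cannot mimic the textbook Feshbach/Schur-complement argument with sharp projections; the care lies in arranging each identity in Step 1 so that $H_{\overline{\chi}}^{-1}$ acts only on elements of $\ran\overline{\chi}$, and in routing every injectivity and surjectivity statement through the bijectivity hypotheses (ii) (with attention to domains, where (iii) is used) rather than through nonexistent spectral projections.
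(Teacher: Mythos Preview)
The paper does not give its own proof of this theorem; it is stated with a citation to \cite{GriHas09} and used as a black box. Your argument is correct and self-contained: the intertwining identities $HQ_\chi=\chi F_\chi(H,T)$ and $F_\chi(H,T)\chi=Q^{\sharp}_\chi H$ are verified exactly as you write them, Step~2 correctly exploits the injectivity of $H_{\overline\chi}$ on $D(T)\cap\ran\overline\chi$ to recover $\overline\chi\psi$, and Step~3 reduces $\eta=0$ to two successive applications of the injectivity of $T$ on $D(T)\cap\ran\overline\chi$. The domain bookkeeping (that $Q_\chi$ maps $D(T)$ to $D(T)$ because $H_{\overline\chi}^{-1}$ lands in $D(T)\cap\ran\overline\chi$, and that $\overline\chi$ preserves $D(T)$ since it commutes with $T$) is handled correctly. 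This is essentially the argument given in the cited reference (Griesemer--Hasler, and earlier Bach--Chen--Fr\"ohlich--Sigal), so there is nothing to contrast.
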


The renormalization analysis is based on a  carefully designed  iterated application of the smooth Feshbach map.
Before one can start this machinery, we need to control the degeneracy.
For this, we   perform   two so called initial Feshbach maps.
The first Feshbach map uses a smoothed projection given as a tensor product. The first factor acts on the
atomic Hilbert space and  projects orthogonally onto  $1_{H_{\rm at}=\epsilon_{\rm at}}\HH_{\rm at}$, the degenerate ground state eigenspace of the atomic Hamiltonian.
The second factor acts  on Fock space and is given by a smooth cutoff function of the free field energy,
such that its   range is contained  in a  spectral subspace of field energies between zero and $\rho_0$. 
 The second Feshbach map uses a smoothed projection given again as a tensor product. The first
factor now projects onto the  subspace of $1_{H_{\rm at}=\epsilon_{\rm at}}\HH_{\rm at}$ belonging to the lowest energy eigenvalue  of $Z_{\rm at}$.
The second factor is given again by a smooth cutoff function of the free field energy,  with range contained
in a  spectral subspace of  even smaller field energies between zero and $\rho_0 \rho_1$.
 This procedure  will   resolve the  degeneracy and hence  allows us to initiate the usual renormalization analysis in Fock space
as introduced in   \cite{BCFS}.

In the proof we will choose  $\rho_0$   larger than  $|g|$
but  $\rho_0 \rho_1$  smaller than $|g|^2$. More precisely, we will  show  the following.
 For any   $\rho_0 > 0$ there exists a  $\rho_1 > 0$ and
  positive numbers $ g_-(\rho_0), \, g_+(\rho_0)$ with $g_-(\rho_0) < g_+(\rho_0)$,
 uniformly in the model parameters,  such that
 for all coupling constants $g$  in a sectorial region of the complex plane with
 \begin{equation} \label{eq:basicanaest}
 g_-(\rho_0)  <   |g| <  g_+(\rho_0)
 \end{equation}
 both  Feshbach maps are isospectral and respect necessary  Banach space estimates needed for operator theoretic  renormalization to be applicable, see Figure~\ref{pic:wedgeofanalyticity}.
 Then for $g$ in a sectorial region  of an annulus we shall obtain  analyticity of the ground state projection and
ground state energy by invoking  an analyticity result of Griesemer and Hasler \cite{GriHas09}.
Moreover, we show that we can choose    $g_-(\rho_0)$ such that $g_-(\rho_0) \to 0$ as $\rho_0 \to 0$ (at the same time $g_+(\rho_0) \to 0$ but this
is no problem as long as  $g_-(\rho_0) < g_+(\rho_0)$).
Hence the analyticity in a cone with apex at the origin   will follow as  $\rho_0$ tends to zero.

\begin{figure}[ht]
	\centering
 \includegraphics[width=1.0\linewidth]{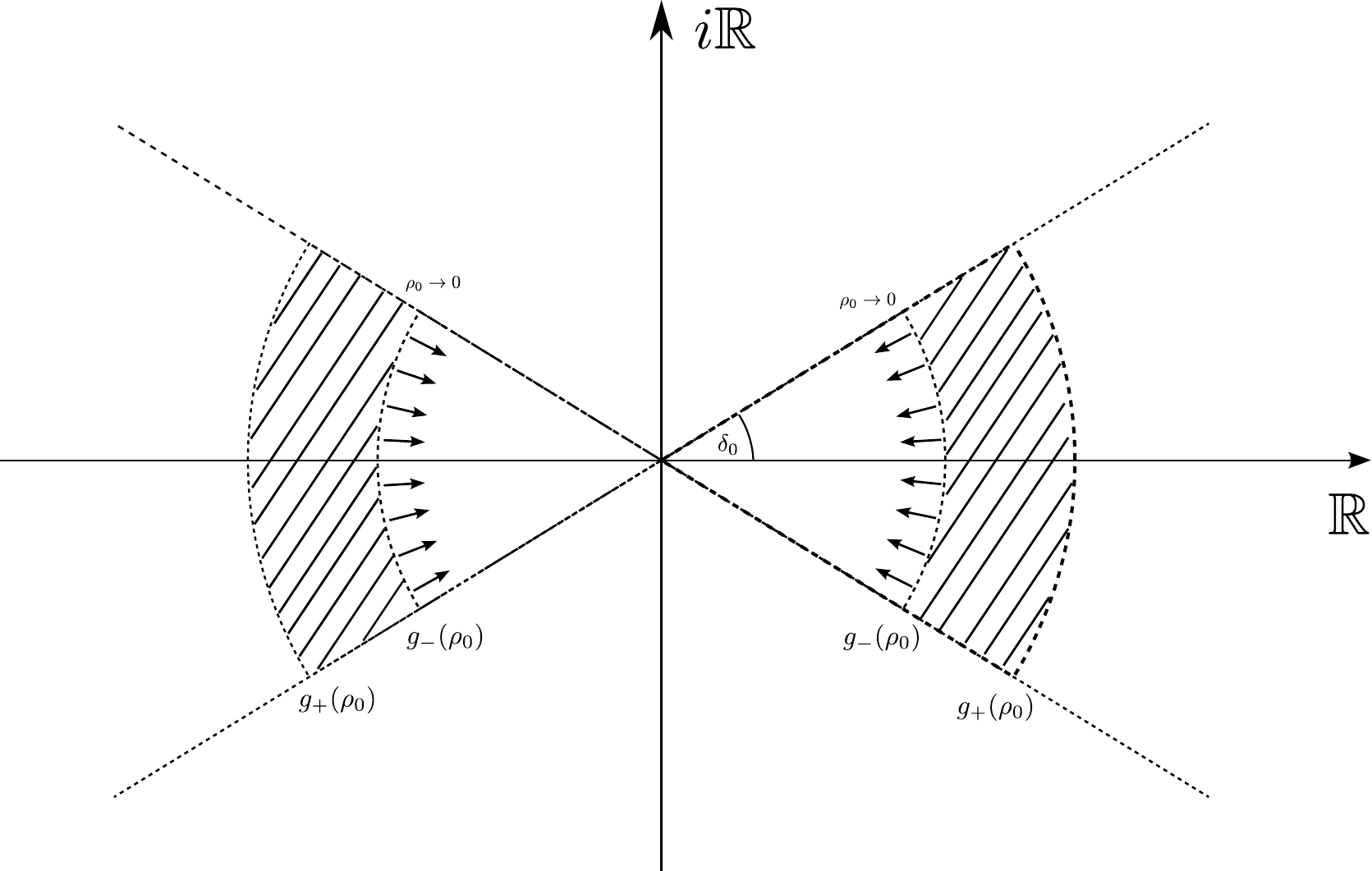}
	\caption{For fixed $\rho_0>0$, the ground state projection and the ground state energy are analytic functions of the coupling constant in the shaded region.}
	\label{pic:wedgeofanalyticity}
\end{figure}

The  remaining part of the paper is organized as follows.
In Section~\ref{sec:firstfeshstep} we define  the first   Feshbach map with parameter $\rho_0$  and prove the
Feshbach pair criterion, which establishes isospectrality.
In Section~\ref{sec:Banachspace} we define  Banach spaces of matrix valued integral kernels. These Banach spaces
  parametrize   subspaces of operators in Fock space  and allow to control the flow of the  operator theoretic renormalization analysis.
 Thereby we extend the Banach spaces of   integral kernels introduced in \cite{BacCheFroSig07}  to
 include matrix valued integral kernels,   in order  to adapt
 degenerate situations.
 In Section \ref{sec:firstStep} we  introduce polydiscs  around the free field energy in a subspace parametrized by a
  Banach space of integral kernels, \eqref{eq:defofBBals}.
 We then show in Theorem~\ref{thm:inimain1} that the first Feshbach operator lies in such a   polydisc, where the radii of the polydisc
  can be made arbitrarily small  for sufficiently small $|g|$.
In Section~\ref{sec:secondFeshStep}  we define  the second Feshbach map with parameter $\rho_1$, which
will be responsible for lifting the degeneracy.
We establish necessary estimates, which will
be used  to show the  Feshbach pair criterion for the second step.
We note that for these estimates, in particular Lemma~\ref{T-invert},  we need  the  coupling constant to lie in  a cone.
We conclude the section with an abstract Feshbach pair criterion.
In Section~\ref{sec:secondStep} we prove an abstract Banach space estimate for the
second Feshbach step, which will be used to show that the second Feshbach operator lies
in a suitable polydisc around the free field energy.
Section~\ref{sec:prooofmain} is devoted to the proof of the main result  Theorem~\ref{thm:main}.
Having the abstract
results of the previous two sections at hand, we  will finally  choose $\rho_1$ as a function of $\rho_0$,
such that the second Feshbach map  satisfies indeed the Feshbach
pair criterion, establishing  isospectrality, and such that the second Feshbach operator
lies in a suitable polydisc, allowing the application
of the analyticity result  \cite{GriHas09}. In particular, we rigorously justify (see \eqref{eq:proofofmain1})
the situation  outlined  in Figure~\ref{pic:wedgeofanalyticity} together with   \eqref{eq:basicanaest}.

\section{Initial Feshbach Step}
\label{sec:firstfeshstep}

Without loss we  assume that the distance of $\epsilon_{\rm at}$ from the rest of the spectrum of $H_{\rm at}$ is
 $1$, i.e.,
\begin{equation} \label{eq:condonenedist}
d_{\rm at} := \inf \sigma(H_{\rm at} \setminus \{ \epsilon_{\rm at} \} ) - \epsilon_{\rm at}  = 1  .
\end{equation}
  This can always be achieved by a suitable scaling.
In this section we define the  initial Feshbach operator. For definitions and conventions we refer  to \cite{BCFS,GriHas09}
and collect for the convenience of the reader  basic notions in the appendix.
We will show  for $z$ in a neighborhood of $\epsilon_{\rm at}$ and for
$|g|$ sufficiently small,  that
$H_g - z$ and $H_0 - z$ are  a Feshbach pair for  a generalized projection, which we will now introduce.
We   fix two  functions  $\chi$ and $\overline{\chi}$  in  $C^\infty(\R ; [0,1])$ satisfying the
following properties:
\begin{itemize}
\item[(i)]  $\chi(t) = 1$, if $t \leq 3/4$, and  $\chi(t) = 0$, if $t \geq 1$,
\item[(ii)]   $\chi^2 + \overline{\chi}^2 = 1$.
\end{itemize}
For $\rho > 0$ we define the following operator valued functions
\begin{align*}
\boldsymbol{\chi}_\rho^{(0)}(r) & :=   P_{\rm at} \otimes \chi(r/\rho)  \\
\overline{\boldsymbol{\chi}}_\rho^{(0)}(r) & :=  \overline{P}_{\rm at} \otimes 1 + P_{\rm at} \otimes \overline{\chi}(r /\rho)
\end{align*}
and we define by means of the spectral theorem the linear operators in $\mathcal{L}(\HH)$
\begin{align*}
\boldsymbol{\chi}_{\rho}^{(0)} & :=    \boldsymbol{\chi}_{\rho}^{(0)}(H_f)     \\
\overline{\boldsymbol{\chi}}_\rho^{(0)} & :=  \overline{\boldsymbol{\chi}}_\rho^{(0)}(H_f)  .
\end{align*}
It is easily verified that   ${(\boldsymbol{\chi}_\rho^{(0)})}^2 + {(\overline{\boldsymbol{\chi}}_\rho^{(0)})}^2  = 1 $.

The following proposition provides us with conditions for which the initial Feshbach operator will be  defined.

\begin{theorem}[Feshbach Pair Criterion  for 1st Iteration]
 \label{pro:firstbound}
Let $\omega^{-1/2} G \!\in\! L^2(\R^3\times \Z_2 ;\! \mathcal{L}(\HH_{\rm at})\!)$,
$0 < \rho \leq \frac{1}{4}$ and $z \in D_{\rho/2}(\epsilon_{\rm at})$.
The operators $H_g-z$ and $H_0-z$ are a Feshbach pair for $\boldsymbol{\chi}_{\rho}^{(0)}$,
if
\begin{align} \label{eq:upperboundong}
	|g| < \frac{ \rho^{1/2}}{10     \|\omega^{-1}G \| } .
\end{align}
Furthermore
one has the absolutely convergent expansion
\begin{equation} \label{eq:firstresolventexp}
F_{\boldsymbol{\chi}_{\rho_0}^{(0)}}( H_g -z   ,   H_0 - z )
	= \epsilon_{\rm at} - z  + H_f   +  \sum_{L=1}^\infty(-1)^{L-1} \boldsymbol{\chi}_{\rho}^{(0)} g W \left(  \frac{ {( \overline{\boldsymbol{\chi}}_{\rho}^{(0)})}^2 }{H_0 - z}  g W  \right)^{L-1}   \boldsymbol{\chi}_{\rho}^{(0)}.
\end{equation}
\end{theorem}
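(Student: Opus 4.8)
The plan is to invoke the abstract Feshbach-pair definition with $\chi = \boldsymbol{\chi}_{\rho}^{(0)}$, $\overline{\chi} = \overline{\boldsymbol{\chi}}_{\rho}^{(0)}$, $H = H_g - z$ and $T = H_0 - z$, so that the abstract perturbation plays the role of $gW$. Note first that $H_g$ and $H_0$ share the domain $D(H_f)$, since $gW$ is $H_f$-bounded with relative bound zero. Property~(i) is immediate: $\boldsymbol{\chi}_{\rho}^{(0)}$ and $\overline{\boldsymbol{\chi}}_{\rho}^{(0)}$ are built from $P_{\rm at}$, which commutes with $H_{\rm at}$, and from bounded functions of $H_f$, hence both commute with $T = H_{\rm at} + H_f - z$.

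For~(ii), I would first check that $T$ restricts to a bijection of $D(T) \cap \ran\overline{\boldsymbol{\chi}}_{\rho}^{(0)}$ onto $\ran\overline{\boldsymbol{\chi}}_{\rho}^{(0)}$ with bounded inverse. Splitting this subspace into $\overline{P}_{\rm at}\HH_{\rm at} \otimes \FF$ and $P_{\rm at}\HH_{\rm at}\otimes \ran\overline{\chi}(H_f/\rho)$: on the former the spectrum of $H_0$ lies in $[\epsilon_{\rm at}+d_{\rm at},\infty)$ with $d_{\rm at}=1$ (see \eqref{eq:condonenedist}) while $|z-\epsilon_{\rm at}| < \rho/2 \le 1/8$, so $\|(H_0-z)^{-1}\| \le 8/7$; on the latter $H_{\rm at}=\epsilon_{\rm at}$ and $H_f > \tfrac34\rho$, so $|H_0-z| \ge \tfrac14\rho$. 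Since $T$ commutes with $\overline{\boldsymbol{\chi}}_{\rho}^{(0)}$ and is thereby bounded below by $\tfrac14\rho$ on $\ran\overline{\boldsymbol{\chi}}_{\rho}^{(0)}$, it restricts to the claimed bijection with $\|T^{-1} \upharpoonright \ran\overline{\boldsymbol{\chi}}_{\rho}^{(0)}\| \le 4/\rho$ (using $\rho \le 1/4$). Then $H_{\overline{\chi}} = T + \overline{\boldsymbol{\chi}}_{\rho}^{(0)} gW \overline{\boldsymbol{\chi}}_{\rho}^{(0)}$ is inverted on $\ran\overline{\boldsymbol{\chi}}_{\rho}^{(0)}$ by the Neumann series $H_{\overline{\chi}}^{-1} = \sum_{n\ge 0}(-1)^n\bigl(T^{-1}\overline{\boldsymbol{\chi}}_{\rho}^{(0)}gW\overline{\boldsymbol{\chi}}_{\rho}^{(0)}\bigr)^n T^{-1}$ provided $\bigl\|T^{-1}\overline{\boldsymbol{\chi}}_{\rho}^{(0)}gW\overline{\boldsymbol{\chi}}_{\rho}^{(0)} \upharpoonright \ran\overline{\boldsymbol{\chi}}_{\rho}^{(0)}\bigr\| < 1$; and then~(iii) holds as well, $\overline{\boldsymbol{\chi}}_{\rho}^{(0)}H_{\overline{\chi}}^{-1}\overline{\boldsymbol{\chi}}_{\rho}^{(0)}gW\boldsymbol{\chi}_{\rho}^{(0)}$ being a composition of bounded operators. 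This is where \eqref{eq:upperboundong} enters: combining the standard relative bound $\|a(\omega^{-1/2}G)\psi\| \le \|\omega^{-1}G\|\,\|H_f^{1/2}\psi\|$ and the pull-through formula for $a^*(\omega^{-1/2}G)$ with the localization $\|H_f^{1/2}\boldsymbol{\chi}_{\rho}^{(0)}\| \le \rho^{1/2}$ and the resolvent estimates above, one bounds the norm in question by $C\,|g|\,\|\omega^{-1}G\|\,\rho^{-1/2}$, which is $< 1$ (in fact comfortably below, say $< 1/5$) precisely under \eqref{eq:upperboundong}.

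For the expansion \eqref{eq:firstresolventexp}, substitute the Neumann series for $H_{\overline{\chi}}^{-1}$ into $F_{\boldsymbol{\chi}_{\rho}^{(0)}}(H_g-z,H_0-z) = H_{\boldsymbol{\chi}_{\rho}^{(0)}} - \boldsymbol{\chi}_{\rho}^{(0)}gW\,\overline{\boldsymbol{\chi}}_{\rho}^{(0)}H_{\overline{\chi}}^{-1}\overline{\boldsymbol{\chi}}_{\rho}^{(0)}gW\boldsymbol{\chi}_{\rho}^{(0)}$. Using $[\overline{\boldsymbol{\chi}}_{\rho}^{(0)},T]=0$, merge the two $\overline{\boldsymbol{\chi}}_{\rho}^{(0)}$'s around each $T^{-1}$ into $(\overline{\boldsymbol{\chi}}_{\rho}^{(0)})^2$; and since $\ran\boldsymbol{\chi}_{\rho}^{(0)} \subseteq P_{\rm at}\HH_{\rm at}\otimes\FF$ and $H_{\rm at}P_{\rm at} = \epsilon_{\rm at}P_{\rm at}$, the term $H_{\boldsymbol{\chi}_{\rho}^{(0)}}$ acts there as $\epsilon_{\rm at} - z + H_f + \boldsymbol{\chi}_{\rho}^{(0)}gW\boldsymbol{\chi}_{\rho}^{(0)}$. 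Re-indexing the resulting double series by $L$, with the extra term $\boldsymbol{\chi}_{\rho}^{(0)}gW\boldsymbol{\chi}_{\rho}^{(0)}$ becoming the $L=1$ contribution, produces \eqref{eq:firstresolventexp}; absolute convergence of the $L$-sum is a consequence of the same operator-norm bound ($<1/5$) that made the Neumann series converge.

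The conceptual framework here is the familiar smooth-Feshbach machinery, so the only genuine obstacle is the norm estimate in the second paragraph — i.e. the $N_\tau$-type bookkeeping for iterated products of creation and annihilation operators interspersed with the resolvent $(\overline{\boldsymbol{\chi}}_{\rho}^{(0)})^2(H_0-z)^{-1}$, where one must track which interaction vertices pick up a factor $\rho^{1/2}$ from the energy cutoff in $\boldsymbol{\chi}_{\rho}^{(0)}$ and which are merely controlled by the resolvent, so as to land on the explicit constant and power of $\rho$ recorded in \eqref{eq:upperboundong}.
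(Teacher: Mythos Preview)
Your overall architecture is right and matches the paper: verify (i) trivially, prove that $T=H_0-z$ is boundedly invertible on $\ran\overline{\boldsymbol{\chi}}_\rho^{(0)}$ with norm $\le 4/\rho$, invert $H_{\overline\chi}$ by a Neumann series, and then read off the expansion \eqref{eq:firstresolventexp} by substituting that series into the definition of $F_\chi$. The algebraic part of the expansion (merging the $\overline{\boldsymbol{\chi}}_\rho^{(0)}$'s, identifying the $L=1$ term) is also fine.

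The gap is in the key norm estimate. You claim a bound on
$\bigl\|T^{-1}\overline{\boldsymbol{\chi}}_\rho^{(0)}gW\overline{\boldsymbol{\chi}}_\rho^{(0)}\upharpoonright\ran\overline{\boldsymbol{\chi}}_\rho^{(0)}\bigr\|$
by combining the relative bound for $a$, the pull-through formula for $a^*$, the localisation $\|H_f^{1/2}\boldsymbol{\chi}_\rho^{(0)}\|\le\rho^{1/2}$, and $\|T^{-1}\|\le 4/\rho$. But the localisation is about $\boldsymbol{\chi}_\rho^{(0)}$, not $\overline{\boldsymbol{\chi}}_\rho^{(0)}$: since $\overline{\boldsymbol{\chi}}_\rho^{(0)}$ contains the summand $\overline P_{\rm at}\otimes 1$, it imposes \emph{no} upper cutoff on $H_f$, and $W\overline{\boldsymbol{\chi}}_\rho^{(0)}$ is genuinely unbounded. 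Your ingredients therefore do not assemble into the stated bound; the factor $\rho^{1/2}$ you need cannot come from $\overline{\boldsymbol{\chi}}_\rho^{(0)}$. (The $\chi$-localisation \emph{is} the right tool for item (iii), where $W\boldsymbol{\chi}_\rho^{(0)}$ appears, but not for the Neumann series in (ii).)

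What the paper does instead is sandwich with $(H_f+\tau)^{\pm 1/2}$: write
\[
H_{\overline\chi}\upharpoonright\ran\overline{\boldsymbol{\chi}}_\rho^{(0)}
=(H_f+\tau)^{1/2}\,A(z,\tau)\bigl[1+gA(z,\tau)^{-1}B(z,\tau,\rho)\bigr](H_f+\tau)^{1/2},
\]
with $A=(H_f+\tau)^{-1/2}(H_0-z)(H_f+\tau)^{-1/2}$ and $B=(H_f+\tau)^{-1/2}\overline{\boldsymbol{\chi}}_\rho^{(0)}W\overline{\boldsymbol{\chi}}_\rho^{(0)}(H_f+\tau)^{-1/2}$. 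The point is that one has the \emph{weighted} resolvent bound
$\|A^{-1}\upharpoonright\ran\overline{\boldsymbol{\chi}}_\rho^{(0)}\|=\|(H_f+\tau)^{1/2}(H_0-z)^{-1}(H_f+\tau)^{1/2}\|\le 1+4\tau/\rho$
(Lemma~\ref{lem:firstfreeresolvbound}), which compensates exactly for the $H_f$-growth of $W$; together with $\|B\|\le 2\|\omega^{-1}G\|\tau^{-1/2}$ (Lemma~\ref{lem:estonint}) and the choice $\tau=\rho$ this gives $\|gA^{-1}B\|\le 10|g|\|\omega^{-1}G\|\rho^{-1/2}<1$, i.e.\ precisely \eqref{eq:upperboundong}. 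Once you replace your localisation argument by this weighted-resolvent mechanism, the rest of your sketch goes through.
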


Let us now provide proof of  Theorem   \ref{pro:firstbound}.
For this we shall make use of  the following two lemmas.

\begin{lemma}\label{lem:estonint}  Let $\rho \geq  0$. Then for  $G \in L^2(\R^3 \times \Z_2 ; \mathcal{L}(\HH_{\rm at}))$ we have
$$
\| (H_f + \rho )^{-1/2} [ a( G) + a^*( G) ] (H_f + \rho)^{-1/2} \|    \leq  2 \| \omega^{-1/2} G\|   \rho^{-1/2}
$$
\end{lemma}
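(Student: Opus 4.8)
The plan is to reduce the estimate to the standard relative bound of the annihilation operator with respect to $H_f^{1/2}$, and then to treat the creation term by duality, exploiting that $(H_f+\rho)^{-1/2}$ is selfadjoint and bounded. First I would record the elementary bound
$$ \| a(G)\psi \| \le \| \omega^{-1/2} G \| \, \| H_f^{1/2}\psi \| $$
for every $\psi$ in the form domain $D(H_f^{1/2})$. For $\psi$ a finite particle vector this follows by writing $a(G)\psi = \int G^*(k) a(k)\psi \, dk$, inserting $1 = \omega(k)^{-1/2}\omega(k)^{1/2}$, using $\| G^*(k) a(k)\psi \| \le \| G(k)\| \, \| a(k)\psi\|$ on the atomic factor, and applying the Cauchy--Schwarz inequality in $k$ together with the identity $\int \omega(k) \| a(k)\psi\|^2 \, dk = \langle \psi, H_f \psi\rangle$; the general case follows since $a(G)$ is closed and finite particle vectors form a core for $H_f^{1/2}$. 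All of this is standard and is collected in \ref{sec:appfielddef}.

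Granting this, assume first $\rho > 0$. For arbitrary $\phi \in \HH$ one has
$$ \| a(G) (H_f+\rho)^{-1/2}\phi \| \le \| \omega^{-1/2}G\| \, \| H_f^{1/2}(H_f+\rho)^{-1/2}\phi\| \le \| \omega^{-1/2}G\| \, \|\phi\| , $$
since $\| H_f^{1/2}(H_f+\rho)^{-1/2}\| \le 1$ by the spectral theorem; applying $(H_f+\rho)^{-1/2}$, whose norm is at most $\rho^{-1/2}$, gives
$$ \| (H_f+\rho)^{-1/2} a(G) (H_f+\rho)^{-1/2}\| \le \rho^{-1/2}\| \omega^{-1/2}G\| . $$
Because $(H_f+\rho)^{-1/2}$ is selfadjoint and $a^*(G) \subseteq a(G)^*$, the operator $(H_f+\rho)^{-1/2} a^*(G) (H_f+\rho)^{-1/2}$ is contained in the adjoint of the bounded operator above, hence is bounded by the same constant. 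The triangle inequality then produces the factor $2$ and completes the proof for $\rho>0$; for $\rho = 0$ the right hand side is $+\infty$ and there is nothing to prove.

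The only genuinely delicate point is the bookkeeping behind the relative bound: controlling $a(G)$ as a closed operator, justifying the interchange of the $k$ integral with the Fock space norm, and verifying that finite particle vectors form a core so that the estimate propagates to all of $D(H_f^{1/2})$. All of this is routine and is precisely the content recorded in the appendix; once it is available, the lemma is a one line application of the spectral calculus together with the duality remark for the creation part.
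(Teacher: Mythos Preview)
Your proof is correct and follows essentially the same route as the paper: both reduce to the relative bound $\|a(G)H_f^{-1/2}\|\le\|\omega^{-1/2}G\|$ (this is \eqref{eq:estoncrea}), factor $(H_f+\rho)^{-1/2}a(G)(H_f+\rho)^{-1/2}$ through $H_f^{1/2}(H_f+\rho)^{-1/2}$ and $(H_f+\rho)^{-1/2}$, and then handle the creation part by the same bound. Your duality argument for $a^*(G)$ is just a more explicit version of the paper's ``likewise''.
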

\begin{proof} This follows from  \eqref{eq:estoncrea}.
For the annihilation  operator we find for $r \geq 0$
\begin{align*}
	\| (H_f+r)^{-1/2} a(G) &(H_f + r )^{-1/2} \| \\
		& \leq  \| (H_f+r)^{-1/2} \| \| a(G) H_f^{-1/2} \| \| H_f^{1/2} (H_f + r )^{-1/2} \| \\
		& \leq   r^{-1/2} \| \omega^{-1/2}  G \|  ,
\end{align*}
and likewise we estimate the corresponding expression involving a creation operator.
\end{proof}

\begin{lemma} \label{lem:firstfreeresolvbound}
Let $0 < \rho \leq \frac{1}{4}$ and  $z \in D_{\rho/2}(\epsilon_{\rm at})$.
The operator  $H_0-z$ is invertible on the range of $\overline{\boldsymbol{\chi}}_{\rho}^{(0)}$
and we have the bound
\begin{align} \label{eq:firstfreeresolvbound1}
\| ( H_0 - z )^{-1}   \upharpoonright \ran   \overline{\boldsymbol{\chi}}_{\rho}^{(0)}   \|
&\leq     \frac{4}{ \rho}
\end{align}
and for  all $\tau \geq 0$  the bound
\begin{align} \label{eq:firstfreeresolvbound2}
\|(H_f + \tau)^{1/2}  ( H_0 - z )^{-1} (H_f + \tau)^{1/2}  \upharpoonright \ran   \overline{\boldsymbol{\chi}}_{\rho}^{(0)} \|
&\leq 1 +  \frac{4 \tau}{ \rho}  .
\end{align}
\end{lemma}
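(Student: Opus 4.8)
The plan is to reduce both statements to elementary functional calculus for the commuting self\nobreakdash-adjoint operators $H_{\rm at}$ and $H_f$ on $\HH = \HH_{\rm at}\otimes\FF$. Since $\overline{\boldsymbol{\chi}}_\rho^{(0)} = \overline{P}_{\rm at}\otimes 1 + P_{\rm at}\otimes\overline{\chi}(H_f/\rho)$ is a bounded function of these commuting operators, it commutes with $H_0 = H_{\rm at}+H_f$ and with $(H_f+\tau)^{1/2}$ for every $\tau\ge 0$; hence $H_0-z$ and the weighting factors all leave the closed subspace $\ran\overline{\boldsymbol{\chi}}_\rho^{(0)}$ invariant, and with respect to the orthogonal decomposition
\[
  \ran\overline{\boldsymbol{\chi}}_\rho^{(0)} \;=\; \big(\overline{P}_{\rm at}\HH_{\rm at}\otimes\FF\big) \;\oplus\; \big(P_{\rm at}\HH_{\rm at}\otimes\overline{\ran\,\overline{\chi}(H_f/\rho)}\big)
\]
they act block diagonally. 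It therefore suffices to estimate each block separately and take the maximum. By property (i) of $\chi$ one has $\overline{\chi}\equiv 0$ on $(-\infty,3/4]$, so the Fock factor of the second summand is contained in $\mathbf{1}_{H_f>3\rho/4}\FF$ and there $H_f\ge 3\rho/4$; on the first summand $H_{\rm at}-\epsilon_{\rm at}\ge d_{\rm at}=1$ by the normalization \eqref{eq:condonenedist}. In either case $H_0-z$ restricted to the block is a normal operator whose spectrum will be seen to lie at distance at least $\rho/4$ from $0$, which already yields the asserted invertibility on $\ran\overline{\boldsymbol{\chi}}_\rho^{(0)}$.

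For the bounds I would write $z=\epsilon_{\rm at}+w$ with $|w|<\rho/2\le\tfrac18$. On the first summand $H_0-z = (H_{\rm at}-\epsilon_{\rm at})+H_f-w$ with $H_{\rm at}-\epsilon_{\rm at}\ge 1$ and $H_f\ge 0$, so by the joint spectral theorem $\|(H_0-z)^{-1}\|\le (1-\rho/2)^{-1}\le\tfrac87\le\tfrac4\rho$, and since $(H_f+\tau)^{1/2}$ commutes with the resolvent the weighted norm is at most $\sup_{s\ge 0}\frac{s+\tau}{\,s+1-\rho/2\,}\le\max\{1,(1-\rho/2)^{-1}\tau\}\le 1+\tfrac{4\tau}{\rho}$, where the last two steps use $\rho\le\tfrac14$. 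On the second summand $H_0-z=H_f-w$ with $H_f\ge 3\rho/4$, so $|s-w|\ge s-\rho/2\ge\rho/4$, giving $\|(H_0-z)^{-1}\|\le 4/\rho$; commuting the factors $(H_f+\tau)^{1/2}$ through and using $(H_f+\tau)(H_f-w)^{-1}=1+(\tau+w)(H_f-w)^{-1}$, one estimates the scalar multiplier $\frac{s+\tau}{|s-w|}$ on $s\ge 3\rho/4$, $|w|<\rho/2$, to obtain a bound of the form $1+O(\tau/\rho)$. Collecting the two blocks and using $\rho\le\tfrac14$ then gives \eqref{eq:firstfreeresolvbound1} and \eqref{eq:firstfreeresolvbound2}.

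There is no serious obstacle here; the argument is routine once this bookkeeping is arranged. The two points that need a little care are: (a) observing that by property (i) of $\chi$ the closed range $\overline{\ran\,\overline{\chi}(H_f/\rho)}$ is contained in the spectral subspace $\mathbf{1}_{H_f>3\rho/4}\FF$, so that $H_f\ge 3\rho/4$ there — this is exactly what makes the hypothesis $z\in D_{\rho/2}(\epsilon_{\rm at})$ ensure $\dist\big(z,\sigma(H_0\!\upharpoonright\!\ran\overline{\boldsymbol{\chi}}_\rho^{(0)})\big)\ge\rho/4>0$; and (b) keeping the $\tau$\nobreakdash-dependence in \eqref{eq:firstfreeresolvbound2} linear, which requires commuting the factors $(H_f+\tau)^{1/2}$ through the resolvent and applying the resolvent identity rather than the crude estimate $\|(H_f+\tau)^{1/2}\|^2\,\|(H_0-z)^{-1}\|$, which is infinite.
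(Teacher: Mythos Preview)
Your approach is essentially the same as the paper's: decompose $\ran\overline{\boldsymbol{\chi}}_\rho^{(0)}$ into the two orthogonal pieces $\overline{P}_{\rm at}\HH_{\rm at}\otimes\FF$ and $P_{\rm at}\HH_{\rm at}\otimes\overline{\ran\,\overline{\chi}(H_f/\rho)}$, and on each block reduce via the joint spectral calculus for $H_{\rm at}$ and $H_f$ to the elementary scalar suprema $\sup_{r\ge 0,\;\lambda\ge\epsilon_{\rm at}+1}\frac{r+\tau}{|\lambda+r-z|}$ and $\sup_{r\ge 3\rho/4}\frac{r+\tau}{|\epsilon_{\rm at}+r-z|}$. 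The paper carries out exactly these estimates; the only place you are less explicit than the paper is the final ``$1+O(\tau/\rho)$'' on the $P_{\rm at}$ block, where writing $\frac{r+\tau}{r-w}=1+\frac{\tau+w}{r-w}$ and using $|r-w|\ge r-\rho/2\ge\rho/4$ is precisely what the paper does to arrive at \eqref{eq:firstfreeresolvbound2}.
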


\begin{proof}
First we show that $H_0 - z $ is bounded invertible on the range of $\overline{\boldsymbol{\chi}}_{\rho}^{(0)}$. It follows
 for a normalized $\psi \in \ran ( {P}_{\rm at}  \otimes \chib(H_f/\rho) ) $ that
\begin{equation*}
\| ( H_0 - z ) \psi \| \geq
\inf_{r \geq \frac{3}{4} \rho  } \left| \epsilon_{\rm at} + r - z \right|
\geq  (3/4-1/2)\rho = \frac{1}{4} \rho ,
\end{equation*}
and it follows  from \eqref{eq:condonenedist} that  for a normalized $\psi \in  \ran  ( \overline{P}_{\rm at}  \otimes 1 )  $ we have
\begin{align*}
\| ( H_0 - z ) \psi   \| &\geq
\inf_{r \geq  0} \| ( H_{\rm at}\overline{P}_{\rm at} + r - z  ) \psi  \|  \\
	&\geq  \inf_{r \geq  0 }\| ( H_{\rm at}\overline{P}_{\rm at}- \epsilon_{\rm at} +  r ) \psi  \|
		- |z-\epsilon_{\rm at}| \\
	&\geq  1 - \rho/2 .
\end{align*}

Thus from the above two inequalities it follows   that  $H_0 - z $ is bounded invertible on the range of $\overline{\boldsymbol{\chi}}_{\rho}^{(0)}$
and that   \eqref{eq:firstfreeresolvbound1} holds.
We estimate  further,
\begin{align*}
\| (H_f + \tau)^{1/2}  ( H_0 - z )^{-1} &(H_f + \tau)^{1/2}
	\upharpoonright \ran ( {P}_{\rm at}  \otimes \chib(H_f/\rho)  )  \|   \\
& =
\sup_{r \geq \frac{3}{4} \rho  } \left|
\frac{r + \tau  }{ \epsilon_{\rm at} + r - z }
\right|
\leq 1 + \left| \frac{\tau}{(3/4 - 1/2)\rho} \right|
\leq 1 +  \frac{4\tau}{\rho}
\end{align*}
and if we set $E_1 := \inf \sigma(H_{\rm at})\setminus \{\epsilon_{\rm at}\}$
\begin{align*}
\|(H_f + \tau)^{1/2}  ( H_0 - z )^{-1} &(H_f + \tau)^{1/2} \upharpoonright \ran  ( \overline{P}_{\rm at}\otimes 1 )   \|  \\
& =
 \sup_{r \geq  0} \sup_{\lambda \in \sigma(H_{\rm at})\setminus\{\epsilon_{\rm at}\}} \left|
\frac{ r + \tau  }{ \lambda + r - z } \right| \\
& \leq
 \sup_{r \geq  0} \left|
\frac{ r + \tau }{ E_{1} - \epsilon_{\rm at} - \rho/2 + r}  \right| \\
&\leq  1 +   \frac{ \tau }{1 - \rho/2} .
\end{align*}
Now the above two inequalities  show  \eqref{eq:firstfreeresolvbound2}.
\end{proof}

\vspace{0.5cm}
\noindent
{\it Proof of Theorem \ref{pro:firstbound}.}
To prove the invertibility, we note that $H_f$ and $H_0$ leave the range of $\overline{\boldsymbol{\chi}}_{\rho}^{(0)}$
invariant and  that   $(H_f + \tau)^{1/2}$ is bounded invertible on
the range of  $\overline{\boldsymbol{\chi}}_{\rho}^{(0)}$.
We use Lemmas  \ref{lem:firstfreeresolvbound} and \ref{lem:estonint}.
We write
\begin{align*}
A(z,\tau) & :=   ( H_f + \tau)^{-1/2}     (   H_0 - z )  (H_f+ \tau)^{-1/2}  \\
B(z,\tau,\rho) & := (H_f+ \tau)^{-1/2} \overline{\boldsymbol{\chi}}_{\rho}^{(0)}      W  \overline{\boldsymbol{\chi}}_{\rho}^{(0)}   (H_f + \tau)^{-1/2}
\end{align*}
and we use the following identity
\begin{align*}
(  & H_0 - z + g     \overline{\boldsymbol{\chi}}_{\rho}^{(0)}    W  \overline{\boldsymbol{\chi}}_{\rho}^{(0)}  )  \upharpoonright \ran \overline{\boldsymbol{\chi}}_{\rho}^{(0)}    \\
  & =
   (H_f+\tau)^{1/2} [
    A(z,\tau)
      +
    g B(z,\tau,\rho)    ]    ( H_f + \tau)^{1/2}    \upharpoonright \ran \overline{\boldsymbol{\chi}}_{\rho}^{(0)}  \\
    & =
   (H_f+\tau)^{1/2}
    A(z,\tau) [ 1
      +
    g A(z,\tau)^{-1}   B(z,\tau,\rho)    ]   ( H_f + \tau)^{1/2}    \upharpoonright \ran \overline{\boldsymbol{\chi}}_{\rho}^{(0)} .
\end{align*}
Thus the bounded invertibility follows from Neumanns theorem provided
$$
 \|  g A(z,\tau)^{-1}   B(z,\tau,\rho)   \|  < 1.
 $$
Now using   Lemma \ref{lem:estonint}, note that $\overline{\boldsymbol{\chi}}_{\rho}^{(0)}$  commutes with $H_f$,  and  \eqref{eq:firstfreeresolvbound2} of Lemma \ref{lem:firstfreeresolvbound},  we obtain
$$
 \|  g A(z,\tau)^{-1}   B(z,\tau,\rho,)   \|   \leq  | g | \left(  1 +  \frac{4 \tau}{ \rho} \right)  2 \| \omega^{-1} G \|  \tau^{-1/2} .
$$
Thus, if we choose $\tau = \rho$, then the proposition follows.
\qed

\section{Banach Space of Integral Kernels}
\label{sec:Banachspace}
To control  the renormalization transformation, in particular proving its convergence,
we introduce the following Banach spaces of integral kernels. We note that the choice
of these spaces is not unique. We choose the Banach spaces such that they are suitable for
the model which we consider. Specifically, the Banach spaces which we introduce below are a straight forward
generalization of the spaces defined  in  \cite{BCFS} or \cite{GriHas09} to matrix
valued integral kernels. A generalization to matrix valued integral kernels seems to be a canonical
choice to accommodate  degenerate situations.

For  $d  \in \N$ we define  the Banach space $\mathcal{W}_{0,0}^{[d]}$ as  the space of continuously
differentiable matrix valued functions
$$
\mathcal{W}_{0,0}^{[d]} := C^1([0,1]; \mathcal{L}(\C^{d})  )
$$
with norm
$$
\| w \|_{C^1}  := \| w \|_\infty + \| w' \|_\infty ,
$$
where $(\cdot)'$ stands for the derivative.
Let $\boldsymbol{B}_1 := \{ \boldsymbol{k} \in \R^3 : |\boldsymbol{k}| \leq 1 \}$.
For a set $\boldsymbol{A} \subset \R^3$ we write
$$
A := \boldsymbol{A} \times \{1,2\}  , \quad
\int_{A} dk := \sum_{\lambda=1,2} \int_{\boldsymbol{A}} d^3\boldsymbol{k} ,
$$
where we recall  the notation of Eq. \eqref{eq:easyNotation}.
For $m,n \in \N$ we write
\begin{align*}
k^{(m)} &:= (k_1,...,k_m) \in   ( \R^3 \times \{1,2\})^m  , \\
\tilde{{k}}^{(n)} &:= (\tilde{{k}}_1,...,\tilde{{k}}_n) \in  ( \R^3 \times \{1,2\})^n , \\
{K}^{(m,n)} &:= ({k}^{(m)} , \tilde{{k}}^{(n)})
\end{align*}
\begin{align*}
d {k}^{(m)} &:= d {k}_1 \cdots d {k}_m    , \\
d \tilde{{k}}^{(n)} &:= d \tilde{{k}}_1 \cdots d \tilde{{k}}_n  \\
d {K}^{(m,n)} &:= d {k}^{(m)} d  \tilde{{k}}^{(n)}
\end{align*}
\begin{align*}
|{k}^{(m)} | &:= | {k}_1| \cdots |{k}_m | ,  \\
|\tilde{{k}}^{(n)}| &:= |\tilde{{k}}_1| \cdots |\tilde{{k}}_n|  \\
 |{K}^{(m,n)}| &:= |{k}^{(m)} | |\tilde{{k}}^{(n)}| .
\end{align*}
Moreover, we shall use
\begin{align*}
& \Sigma[{k}^{(n)}] := \sum_{i=1}^n |k_i | , \qquad
\Sigma[\tilde{{k}}^{(m)}] := \sum_{i=1}^m |\tilde{k}_i |.
\end{align*}
 For $m,n \in \N$ with $m+n \geq 1$ and $\mu > 0$  let
 $\mathcal{W}_{m,n}^{[d]}$ denote the space of  measurable functions $w_{m,n} : {B}_1^{m+n} \to \mathcal{W}_{0,0}^{[d]}$
 satisfying the following three properties.
 \begin{itemize}
 \item[(i)]  $w_{m,n}$  are symmetric with respect to all permutations of the $m$ arguments
 from ${B}_1^m$ and the $n$ arguments from ${B}_1^n$, respectively;
 \item[(ii)]  for $m + n \geq 1$,  we have
 $ w_{m,n}({k}^{(m)} , \tilde{{k}}^{(n)})(r) = 0$,
 provided $ r \! \geq 1 - \max(\Sigma[{k}^{(m)}], \Sigma[\tilde{{k}}^{(n)}])$;
 \item[(iii)]  The following norm is finite
   $$
 \| w_{m,n} \|_\mu^{\#} := \| w_{m,n} \|_\mu + \| \partial_r  w_{m,n}  \|_\mu ,
 $$
 where
  $$
 \| w_{m,n} \|_\mu := \left( \int_{{B}_1^{m+n}}
  \| w_{m,n}({K}^{(m,n)} ) \|_{\infty}^2
  \frac{d {K}^{(m,n)}}{|{K}^{(m,n)}|^{3 + 2 \mu}}   \right)^{1/2} .
 $$
\end{itemize}

\begin{remark}{\rm
We note that  $\mathcal{W}_{m,n}^{[d]}$ is given as the  subspace of
\begin{equation}\label{eq:defofl22}
 L^2 \left( {B}^{m+n} , \frac{d {K}^{(m,n)}}{|{K}^{(m,n)}|^{3 + 2 \mu} } ; \mathcal{W}_{0,0}^{[d]} \right)  ,
\end{equation}
consisting of elements satisfying  Conditions (i) and (ii)  above.
Note that the norm $\| \cdot \|_\mu^\#$ is equivalent to the natural norm of  \eqref{eq:defofl22} (given by the theory of Banach space valued $L^p$-functions), which is the norm chosen in  \cite{GriHas09}.
Moreover, we will identify \eqref{eq:defofl22} as a subspace of $ L^2 \left([0,1] \times {B}^{m+n}  , \frac{d {K}^{(m,n)}}{|{K}^{(m,n)}|^{3 + 2 \mu} } ; \mathcal{L}(\C^d) \right)$
by means of
\begin{equation} \label{eq:identbsp}
 w_{m,n}(r, {k}^{(m)} , \tilde{{k}}^{(n)}) =  w_{m,n}({k}^{(m)} , \tilde{{k}}^{(n)})(r)  .
\end{equation}
Henceforth we shall use this identification without comment.
Furthermore,  we remark that
$$
	\|w_{0,0}\|_{C^1} = \|w_{0,0}\|_\mu^\#
$$
 with the natural convention that for $m=n=0$ the empty Cartesian product consists of a
 single point and that there  is no  integration in that case.}
 \end{remark}

For given $\xi \in (0,1)$ and $\mu > 0$ we define the Banach space
$$
\WW_\xi^{[d]} := \bigoplus_{m,n \in \N_0} \WW_{m,n}^{[d]}
$$
with norm
$$
\| {w} \|_{\mu, \xi}^\#
	:=  \| w_{0,0} \|_{C^1} +  \sum_{m,n \geq 1} \xi^{-(m+n)} \| w_{m,n} \|_\mu^\# ,
$$
for $w = (w_{m,n})_{m,n\in \N_0}\in \WW_\xi^{[d]}$.

Next we define a linear mapping $H : \WW_\xi^{[d]} \to \mathcal{L}(\HH_{\rm red})$, where $\HH_{\rm red} := P_{\rm red}\HH$ and  $P_{\rm red} := 1_{[0,1]}(H_f)$. For this we will  use the notation
\begin{equation*}
 a^*({k}^{(m)}) :=  \prod_{i=1}^m a^*({k}_i)    \quad ,  \quad \quad
  a(\tilde{{k}}^{(n)}) := \prod_{i=1}^n a(\tilde{{k}}_i) .
\end{equation*}
If $w \in \WW_{0,0}^{[d]}$ define $H_{0,0}(w) := w_{0,0}(H_f)$.
For $m+n \geq 1$ and  $w_{m,n} \in  \mathcal{W}_{m,n}^{[d]}  $
define the operator on $\HH_{\rm red}$
\begin{equation} \label{eq:defofhmn}
H_{m,n}(w_{m,n}) := P_{\rm red} \!\left( \!\int_{B_1^{m+n}}\!\! a^*(k^{(m)}) w_{m,n}(H_f, {K}^{(m,n)} ) a(\tilde{k}^{(n)}) \frac{d  {K}^{(m,n)}}{|{K}^{(m,n)}|^{1/2}} \!\right) \!P_{\rm red}.
\end{equation}
The following lemma is proven in \cite[Theorem 3.1]{BCFS}

\begin{lemma} \label{lem:operatornormestimates}  Let $\mu > 0$ and $m,n \in \N_0$ with
$m+n \geq 1$.
For $w_{m,n} \in \mathcal{W}_{m,n}^{[d]}$ we have
\beqn \label{eq:operatornormestimate1}
\|H_{m,n}(w_{m,n}) \|  \leq  \frac{ \| w_{m,n} \|_{\mu} }{ \sqrt{  n^n m^m } }  \; .
\eeqn
\end{lemma}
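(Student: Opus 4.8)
The plan is to estimate the operator norm of $H_{m,n}(w_{m,n})$ by a standard ``pull-out-the-creation/annihilation-operators'' argument, which goes back to the original bounds in \cite{BCFS,GriHas09}, the only new feature here being that $w_{m,n}$ takes values in $\mathcal{L}(\C^d)$ rather than in $\C$. Since the matrix part only multiplies vectors in the finite-dimensional factor $\C^d$ and commutes with everything acting on Fock space, replacing $w_{m,n}(H_f,K^{(m,n)})$ by its operator norm $\|w_{m,n}(H_f,K^{(m,n)})\|_\infty$ at each point reduces the estimate to the scalar case already treated in \cite[Theorem 3.1]{BCFS}; so really the lemma is a corollary of that result together with the observation that the norms $\|\cdot\|_\mu$ were defined precisely with $\|w_{m,n}(K^{(m,n)})\|_\infty$ inside the integral.

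Concretely, I would proceed as follows. First fix $\phi,\psi$ in a dense subset of $\HH_{\rm red}$, say finite particle vectors, and write out $\langle \phi, H_{m,n}(w_{m,n})\psi\rangle$ as an integral over $B_1^{m+n}$ of $\langle a(k^{(m)})\phi,\, w_{m,n}(H_f,K^{(m,n)})\, a(\tilde k^{(n)})\psi\rangle\,|K^{(m,n)}|^{-1/2}\,dK^{(m,n)}$, using that $a^*(k^{(m)})^* = a(k^{(m)})$ on the relevant domain and that $P_{\rm red}$ is a projection. Bounding the integrand pointwise by $\|w_{m,n}(K^{(m,n)})\|_\infty\,\|a(k^{(m)})\phi\|\,\|a(\tilde k^{(n)})\psi\|\,|K^{(m,n)}|^{-1/2}$ and applying Cauchy--Schwarz in $K^{(m,n)}$ with the weight $|K^{(m,n)}|^{3+2\mu}$ splits the bound into $\|w_{m,n}\|_\mu$ times a factor of the form $\big(\int |K^{(m,n)}|^{2+2\mu}\|a(k^{(m)})\phi\|^2\|a(\tilde k^{(n)})\psi\|^2\,dK^{(m,n)}\big)^{1/2}$. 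The remaining factor is controlled by the standard $N_f$-number-operator bound: using $\|a(\tilde k^{(n)})\psi\|\le \big(\prod|\tilde k_i|\big)^{\text{(weights)}}$ together with the combinatorial identity $\int_{B_1^n}\|a(\tilde k^{(n)})\psi\|^2\,d\tilde k^{(n)} \le \|(H_f)^{n/2}\psi\|^2 \le \|\psi\|^2$ (on $\HH_{\rm red}$, where $H_f\le 1$), and noting that $|\tilde k_i|\le 1$ so the extra powers $|K^{(m,n)}|^{2+2\mu}\le 1$ only help; the factors $n^{-n/2}$ and $m^{-m/2}$ arise from the symmetrization/Fock-space combinatorics exactly as in \cite{BCFS}. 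Taking the supremum over $\|\phi\|,\|\psi\|\le 1$ yields \eqref{eq:operatornormestimate1}.

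Since the statement simply cites \cite[Theorem 3.1]{BCFS}, the cleanest write-up is to invoke that theorem directly and only remark on the matrix-valued modification. The one point that genuinely requires a word of justification is that, in passing from the scalar estimate to the matrix-valued one, the pointwise replacement of $w_{m,n}(H_f,K^{(m,n)})$ by the scalar function $K^{(m,n)}\mapsto \|w_{m,n}(\cdot,K^{(m,n)})\|_\infty$ is legitimate inside the operator-norm estimate: this holds because for any $\phi,\psi$ the inner-product integrand is bounded pointwise in $K^{(m,n)}$ using $\|w_{m,n}(r,K^{(m,n)})\|_{\mathcal{L}(\C^d)}$, and because $\mathcal{L}(\C^d)$ acts only on the atomic tensor factor and hence commutes with all the Fock-space operators $a^*(k^{(m)}),a(\tilde k^{(n)})$ and with $H_f$. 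Thus no cancellation between the matrix part and the Fock-space part is lost by this majorization, and the scalar bound of \cite{BCFS} applies verbatim with $\|w_{m,n}\|_\mu$ in place of its scalar analogue. I do not anticipate a serious obstacle; the main (mild) care is in keeping the combinatorial factors $n^n m^m$ correct, but these are inherited unchanged from the cited theorem.
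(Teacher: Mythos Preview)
Your proposal is correct and matches the paper's own treatment: the paper does not give a proof of this lemma at all but simply cites \cite[Theorem 3.1]{BCFS}, and your write-up likewise reduces the statement to that reference, adding only the (accurate) observation that the passage from scalar to $\mathcal{L}(\C^d)$-valued kernels is harmless because the matrix factor acts on the atomic tensor factor alone and may be replaced pointwise by its operator norm. Your sketch of the Cauchy--Schwarz / number-operator argument is a reasonable outline of how the BCFS proof goes, though the combinatorial factor $n^n m^m$ is obtained there not from symmetrization but from inserting and removing suitable powers of $H_f$ (cf.\ the identity \eqref{eq:trivialA} in Appendix~\ref{sec:appfielddef} and Lemma~\ref{kernelopestimate}); since you defer to \cite{BCFS} for that step anyway, this does not affect the correctness of your proposal.
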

Using our notation above and the convention that $p^p := 1$ for $p = 0$ it is
trivial to extend this lemma to the case $m + n = 0$.

For  sequences $w = (w_{m,n})_{(m,n) \in \N_0^2}   \in \WW_\xi^{[d]}$ we define the  operator $H(w)$ as the sum
\begin{align} \label{eq:opHw}
H(w) := \sum_{m,n} H_{m,n}(w_{m,n}) ,
\end{align}
where the sum converges in operator norm, which can be seen using   \eqref{eq:operatornormestimate1}.

A proof of the following theorem can be found in \cite{BCFS} with a modification explained in \cite{HasHer08-2}.

\begin{theorem} \label{thm:injective} Let $\mu > 0$ and $0 < \xi < 1$.
Then the map $H : \WW_\xi^{[d]}  \to \mathcal{B}(\HH_{\rm red})$ is injective and bounded.
For $w \in \WW_\xi^{[d]}$ and for $ \tilde{w} \in \WW_\xi^{[d]}$ with  $\tilde{w}_{0,0} = 0$ we have
\begin{equation}   \label{eq:injective}
\| H(w) \|    \leq \| w \|_{\mu,\xi}^\#  , \quad \|H(\tilde{w}) \|  \leq \xi   \| \tilde{w} \|_{\mu,\xi}^\#.
\end{equation}
\end{theorem}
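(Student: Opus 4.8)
The plan is to reduce the statement to two ingredients: the operator-norm estimate from Lemma~\ref{lem:operatornormestimates}, which handles the boundedness and the two displayed inequalities once injectivity is in place, and a separate argument for injectivity, which is the genuinely nontrivial part. For the boundedness claim, I would simply sum the estimates in \eqref{eq:operatornormestimate1}: writing $w = (w_{m,n})$, one has $\|H(w)\| \le \sum_{m,n}\|H_{m,n}(w_{m,n})\| \le \sum_{m,n} \|w_{m,n}\|_\mu / \sqrt{n^n m^m} \le \|w_{0,0}\|_{C^1} + \sum_{m+n\ge 1}\|w_{m,n}\|_\mu^\#$, and since $\xi < 1$ this last sum is bounded by $\sum_{m+n\ge 1}\xi^{-(m+n)}\|w_{m,n}\|_\mu^\# \le \|w\|_{\mu,\xi}^\#$; this proves the first inequality in \eqref{eq:injective}, and hence boundedness. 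The second inequality in \eqref{eq:injective} follows the same way: if $\tilde w_{0,0} = 0$, every surviving term has $m+n\ge 1$, so each factor $\xi^{-(m+n)}$ carries at least one power of $\xi^{-1}$; factoring out a single $\xi^{-1}$ (equivalently, estimating $\|H_{m,n}(\tilde w_{m,n})\| \le \xi\,\xi^{-(m+n)}\|\tilde w_{m,n}\|_\mu^\#$ since $m+n\ge 1$) gives $\|H(\tilde w)\| \le \xi\|\tilde w\|_{\mu,\xi}^\#$.

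The main obstacle is injectivity of $H$. Here I would follow the standard Wick-ordering / pull-through argument, adapted to matrix-valued kernels. Suppose $H(w) = 0$; I want to conclude $w_{m,n} = 0$ for every $m,n$. The idea is that the integral kernels $w_{m,n}$ can be recovered from the operator $H(w)$ by sandwiching between states of definite photon number and taking appropriate vacuum expectation values. Concretely, for a fixed matrix index, test the operator against vectors of the form $u\otimes a^*(h_1)\cdots a^*(h_m)\Omega$ and $v\otimes a^*(g_1)\cdots a^*(g_n)\Omega$ in $\HH_{\rm red} = \C^d\otimes P_{\rm red}\FF$; because of the normal ordering in \eqref{eq:defofhmn} (all creation operators to the left of all annihilation operators) and the constraint $P_{\rm red}$, the matrix element $\langle u\otimes a^*(h^{(m)})\Omega,\, H(w)\, v\otimes a^*(g^{(n)})\Omega\rangle$ isolates exactly the contribution of $H_{m,n}(w_{m,n})$ (terms $H_{m',n'}$ with $(m',n')\ne(m,n)$ either annihilate the test vectors or produce a different photon number), and evaluates to an explicit integral of $w_{m,n}$ against $\prod h_i \otimes \prod g_j$. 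Varying the test functions $h_i, g_j$ over a total set, and using continuity of $w_{m,n}$ in the energy variable $r = H_f$ — which on these vectors acts as multiplication by $\Sigma[k^{(m)}]$ (resp.\ $\Sigma[\tilde k^{(n)}]$) — forces $w_{m,n}(r, K^{(m,n)}) = 0$ a.e.; the case $m=n=0$ is handled by testing against $u\otimes\Omega$, which gives $w_{0,0}(0)=0$, and then against vectors in higher energy sectors to recover $w_{0,0}(r)$ for all $r\in[0,1]$. This is precisely the content of \cite[Theorem 3.1]{BCFS}; the matrix-valued case is obtained by doing this for each pair of basis vectors $u,v$ of $\C^d$, so no new idea beyond the scalar case is needed, only the bookkeeping of the extra $\mathcal{L}(\C^d)$ index — hence the reference to \cite{BCFS} together with the modification in \cite{HasHer08-2}, which addresses a subtlety in making the ``pull-through'' rigorous on the restricted space $\HH_{\rm red}$.

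I would then close by remarking that injectivity makes $H$ a Banach-space isomorphism onto its image, so the norm estimates above in fact show that $\|\cdot\|_{\mu,\xi}^\#$ is, up to the constant $1$, comparable to the operator norm on $\operatorname{Ran} H$ (a lower bound comparable to $\|H(w)\|$ does not follow from this argument and is not claimed). Since the detailed verification of the injectivity argument is carried out in the cited references in the scalar setting and extends verbatim componentwise, it suffices here to assemble the summation estimates and point to \cite{BCFS,HasHer08-2} for the injectivity.
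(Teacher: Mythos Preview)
Your boundedness arguments are correct and complete; the paper itself gives no proof for this theorem, only the citation to \cite{BCFS} and \cite{HasHer08-2}, so in that respect your proposal actually goes further than the paper and follows the same route the references take.

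One imprecision in your injectivity sketch deserves comment. The claim that the matrix element between $a^*(h^{(m)})\Omega$ and $a^*(g^{(n)})\Omega$ ``isolates exactly the contribution of $H_{m,n}(w_{m,n})$'' is not literally true. For instance, with $m=n=1$ both $H_{1,1}(w_{1,1})$ and $H_{0,0}(w_{0,0})$ preserve photon number and contribute to that matrix element. What is actually true is that only terms $H_{m',n'}$ with $m'-n'=m-n$, $m'\le m$, $n'\le n$ survive; one therefore needs an induction on $m+n$ (first recover $w_{0,0}$, then $w_{1,0}$ and $w_{0,1}$, and so on), together with spectator photons to scan the $r$-variable away from $r=0$. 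This is precisely the refinement handled in \cite{HasHer08-2}, so your deferral to the cited references is appropriate --- just be aware that the one-line isolation claim, as you wrote it, does not stand on its own.
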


The renormalization transformation will involve a rescaling of the energy.
This rescaling is described  by means of a dilation operator, which we shall now define.

\begin{definition} Let  $\rho > 0$.  We define the operator of dilation   on the one particle sector  by
$$
U_\rho : \hh \to \hh  ,  \quad (U_\rho \varphi)(k) = \rho^{3/2} \varphi(\rho k )  ,
$$
where we use the notation
$
	\rho k := (\rho \boldsymbol{k}, \lambda) .
$
We define the operator of  dilation on Fockspace by
$$
\Gamma_\rho =  \bigoplus_{n=0}^\infty   U_\rho^{\otimes n}  \upharpoonright \FF .
$$
 We define the mapping  $S_\rho : \mathcal{L}(\FF) \to \mathcal{L}(\FF)$ called
 rescaling by dilation by
\begin{align} \label{eq:scalingTransfo}
S_\rho(A) := \rho^{-1} \Gamma_{\rho}(A ) \Gamma_{\rho}^* .
\end{align}
\end{definition}

\begin{remark} \label{rm:scalingTransfo} {\rm
We note that 
that by definition
$
\Gamma_\rho \Omega = \Omega .
$
Moreover, one can show that
$$
\Gamma_\rho a^*(k) \Gamma_\rho^* = \rho^{-3/2} a^*(\rho^{-1} k)  , \quad
\Gamma_\rho a(k) \Gamma_\rho^* = \rho^{-3/2} a(\rho^{-1} k) .
$$}
\end{remark}

The following lemma relates the scaling transformation to a  scaling transformation of the
integral kernels. It is straight forward to verify by the substitution formula.

\begin{lemma} For  $w  \in \WW_{\xi}^{[d]}$  define the scaling transformation of the integral kernel by
\begin{equation} \label{eq:scaling}
	s_\rho(w_{m,n})[r,K^{(m,n)}] := \rho^{(m+n)-1} w_{m,n}[\rho r, \rho K^{(m,n)}].
\end{equation}
Then
$$
	S_\rho(H(w)) = H(s_\rho(w)) . 
$$
\end{lemma}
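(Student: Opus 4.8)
The plan is to reduce the identity to a term-by-term statement and then verify it by conjugation and a change of variables. Since $\Gamma_\rho$ is unitary, $S_\rho$ is a bounded linear map on $\mathcal{L}(\FF)$, so by the operator-norm convergence of the series \eqref{eq:opHw} (Lemma~\ref{lem:operatornormestimates}) it suffices to prove $S_\rho(H_{m,n}(w_{m,n})) = H_{m,n}(s_\rho(w_{m,n}))$ for each fixed pair $(m,n)$. The two tools I would use are: (a) the transformation rules of Remark~\ref{rm:scalingTransfo}, which upon taking products give $\Gamma_\rho a^*(k^{(m)})\Gamma_\rho^* = \rho^{-3m/2}a^*(\rho^{-1}k^{(m)})$ and $\Gamma_\rho a(\tilde k^{(n)})\Gamma_\rho^* = \rho^{-3n/2}a(\rho^{-1}\tilde k^{(n)})$; and (b) the scaling relation for the free field energy, $\Gamma_\rho H_f\Gamma_\rho^* = \rho H_f$, which is a direct computation from the definition of $U_\rho$ (or follows from (a) together with $\Gamma_\rho\Omega=\Omega$), and which via the spectral theorem yields $\Gamma_\rho\, w_{m,n}(H_f, K^{(m,n)})\, \Gamma_\rho^* = w_{m,n}(\rho H_f, K^{(m,n)})$.

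Substituting these into $S_\rho(H_{m,n}(w_{m,n})) = \rho^{-1}\Gamma_\rho H_{m,n}(w_{m,n})\Gamma_\rho^*$ and pushing $\Gamma_\rho$ and $\Gamma_\rho^*$ inside the integral \eqref{eq:defofhmn}, the integrand becomes $\rho^{-3(m+n)/2}\, a^*(\rho^{-1}k^{(m)})\, w_{m,n}(\rho H_f, K^{(m,n)})\, a(\rho^{-1}\tilde k^{(n)})$. I would then invoke the substitution formula, rescaling each integration variable $k_i\mapsto\rho k_i$, so that the arguments of the field operators revert to $k^{(m)},\tilde k^{(n)}$: this contributes a Jacobian $\rho^{3(m+n)}$ together with a factor $\rho^{-(m+n)/2}$ from $|K^{(m,n)}|^{-1/2}$, and turns the kernel into $w_{m,n}(\rho H_f, \rho K^{(m,n)})$. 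Collecting all powers of $\rho$, namely $-1 - \tfrac{3}{2}(m+n) + 3(m+n) - \tfrac{1}{2}(m+n) = m+n-1$, one is left with $\rho^{m+n-1} w_{m,n}(\rho H_f, \rho K^{(m,n)})$, which is precisely $s_\rho(w_{m,n})(H_f, K^{(m,n)})$ by \eqref{eq:scaling}; hence the expression is $H_{m,n}(s_\rho(w_{m,n}))$, and summing over $(m,n)$ gives the claim.

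The step I expect to require the most care — the only genuinely non-formal point — is the bookkeeping of the spectral cutoffs and integration domains. Conjugating $P_{\rm red} = 1_{[0,1]}(H_f)$ by $\Gamma_\rho$ produces $1_{[0,\rho^{-1}]}(H_f)$, and the change of variables enlarges the momentum region from $B_1^{m+n}$ to $B_{1/\rho}^{m+n}$; one must check that, compressed onto $\HH_{\rm red}$ (as is built into $H(\cdot)$ and hence applied to $S_\rho(H(w))$ as well), both sides collapse to the same operator. This is where the support property (ii) of the kernels enters, together with the positivity of $H_f$: the outer factors $P_{\rm red}$ force any contribution with $\Sigma[k^{(m)}]>1$ or $\Sigma[\tilde k^{(n)}]>1$ to vanish, so the integral over $B_{1/\rho}^{m+n}$ may be replaced by the integral over $B_1^{m+n}$, and $P_{\rm red}\,1_{[0,\rho^{-1}]}(H_f) = P_{\rm red}$ lets one replace the conjugated cutoffs by $P_{\rm red}$. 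Granting this routine reconciliation, the term-by-term identity holds and the lemma follows.
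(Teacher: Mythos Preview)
Your proposal is correct and is exactly the approach the paper has in mind: the paper simply states that the identity ``is straight forward to verify by the substitution formula'' and gives no further details, so your conjugation-plus-change-of-variables computation is the intended argument. Your discussion of the cutoff/domain bookkeeping is an appropriate elaboration of the one nontrivial point, and your interpretation of the equality as one between operators on $\HH_{\rm red}$ (via the outer $P_{\rm red}$) is the right reading, since $H$ is by definition $\mathcal{L}(\HH_{\rm red})$-valued.
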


\begin{remark} {\rm For  $m+n \geq 1$  one finds
$$
	\|s_\rho(w_{m,n})\|_\mu \leq \rho^{\mu(m+n)}\|w_{m,n}\|_\mu.
$$
This  illustrates that after a renormalization step, which will be introduced below,
the relative size of the
perturbative part $\|w_{m,n}\|_\mu$, $m + n \geq 1$, will shrink compared to the unperturbed
part $\|w_{0,0}\|_{C^1}$.  } 
\end{remark}

\section{Banach Space Estimate for the first  Step}
\label{sec:firstStep}

In Section~\ref{sec:firstfeshstep} we  showed  that the operators  $H_g - z$ ,    $ H_0 - z $    are a Feshbach pair for $\boldsymbol{\chi}_{\rho}^{(0)}$,
provided the  coupling constant  is in
a sufficiently small neighborhood of zero and the  spectral parameter is sufficiently close to the unperturbed ground state energy.
In particular, if   the assumptions of   Theorem   \ref{pro:firstbound} hold,   we can define
 the operator
\begin{align}\label{eq:initialFeshbachOperator}
{H}_g^{(1,\rho)}(z) :=   S_{\rho}(  F_{\boldsymbol{\chi}_{\rho}^{(0)}}( H_g - z  ,    H_0 - z  ) ) ,
\end{align}
which we call the first   Feshbach operator.
The goal of this section is to show  that the first Feshbach operator
is close to the free field energy. The distance  will be measured in terms of the norms introduced in
 the previous section. More precisely, we define the following   polydiscs  of the free field energy.
For  given $\alpha, \beta, \gamma \in \R_+$ we define
$\mathcal{B}^{[d]}(\alpha, \beta, \gamma) \subset H(\WW_\xi^{[d]})$ 
by
\begin{equation}  \label{eq:defofBBals}
\mathcal{B}^{[d]}(\alpha, \beta, \gamma ) := \left\{ H(w) : \|w_{0,0}(0) \| \leq \alpha , \| w_{0,0}' - 1 \|_\infty \leq \beta, \| w - w_{0,0} \|_{\mu , \xi }^\# \leq \gamma \right\} .
\end{equation}
We  shall denote the dimension of the space of ground states of $H_{\rm at}$ by
$$
d_0 := \dim(\ran P_{\rm at}) .
$$
Moreover, we introduce the following  global constant
\begin{equation}  \label{eq:defofCF}
C_F := 10 \| \chi'\|_\infty + 20 ,
\end{equation}
\begin{equation}  \label{eq:defofCFhat}
\hat{C}_F :=  20 ,
\end{equation}
which will be used in various estimates. For the renormalization analysis to be applicable, we need a stronger
infrared condition, which is expressed in terms of the norm $\| \cdot \|_\mu$.
We note that as a consequence of the definition we have
\begin{equation}
 \label{eq:diffnormonGest}
 \left\| \frac{ G}{\omega}  \right\| \leq \| G \|_\mu .
\end{equation}
This inequality  shows that the criterion for  the Feshbach pair property obtained in Section  \ref{sec:firstfeshstep} can be expressed in terms of  $\| G \|_\mu$.
We can now state the main theorem of this section.

\begin{theorem}[Banach Space Estimate for 1st Feshbach Operator] \label{thm:inimain1}
Let  $G \in L^2_\mu(\R^3\times \Z_2 ; \mathcal{L}(\HH_{\rm at}))$, and $\xi \in (0,1)$.
Then there  exist constants  $C_1, C_2, C_3$,
 such that,
if     $0 < \rho < 1/4$,  $z \in D_{\rho/2}(\epsilon_{\rm at})$ and
\begin{equation} \label{eq:condongrho1}
|g| < C_0  \rho^{1/2}    , \qquad \text{where} \quad  C_0 := \frac{1}{ 8 \xi^{-1} C_F  \| G \|_\mu  }   ,
\end{equation}
the pair of operators   $(H_g -z, H_0 - z)$ is a Feshbach pair for $\boldsymbol{\chi}_{\rho}^{(1)}$, and
$$
H_g^{(1,\rho)}(z) - \rho^{-1} (\epsilon_{\rm at}  - z )  \in  \mathcal{B}^{[d_0]}(\alpha_0, \beta_0, \gamma_0 )  ,
$$
for
$$
\alpha_0 =  C_1 |g|^2 \rho^{-1}  , \quad  \beta_0  =  C_2 |g|^2  \rho^{-1}   , \quad \gamma_0 = C_3 \rho^\mu ( |g| + \rho^{-1} |g|^2 + \rho^{-2} |g|^3 ) .
$$
\end{theorem}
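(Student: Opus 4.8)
The plan is to start from the absolutely convergent expansion \eqref{eq:firstresolventexp}, apply the dilation $S_\rho$, bring the result into Wick-ordered form, and estimate the resulting matrix-valued integral kernels order by order. First I would check that \eqref{eq:condongrho1} implies hypothesis \eqref{eq:upperboundong} of Theorem~\ref{pro:firstbound}: by \eqref{eq:diffnormonGest}, $\|\omega^{-1}G\| \le \|G\|_\mu$, and since $\xi\in(0,1)$ and $C_F = 10\|\chi'\|_\infty + 20 \ge 20 > 10$, the constant $C_0 = (8\xi^{-1}C_F\|G\|_\mu)^{-1}$ obeys $C_0 \le (10\|\omega^{-1}G\|)^{-1}$, so $|g| < C_0\rho^{1/2}$ forces $|g| < \rho^{1/2}/(10\|\omega^{-1}G\|)$; hence by Theorem~\ref{pro:firstbound} the pair $(H_g - z, H_0 - z)$ is a Feshbach pair for $\boldsymbol{\chi}_\rho^{(0)}$ and \eqref{eq:firstresolventexp} holds in operator norm. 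Applying $S_\rho$, using $S_\rho(H_f) = H_f$ and that $S_\rho$ acts as multiplication by $\rho^{-1}$ on multiples of the identity, we obtain
\begin{equation*}
  H_g^{(1,\rho)}(z) - \rho^{-1}(\epsilon_{\rm at} - z) = H_f + S_\rho\!\left( \sum_{L=1}^\infty (-1)^{L-1}\, \boldsymbol{\chi}_\rho^{(0)}\, gW \left( \frac{(\overline{\boldsymbol{\chi}}_\rho^{(0)})^2}{H_0 - z}\, gW \right)^{L-1} \boldsymbol{\chi}_\rho^{(0)} \right),
\end{equation*}
an operator supported in $P_{\rm red}\bigl((\ran P_{\rm at})\otimes\FF\bigr)$, hence representable by an element of $\WW_\xi^{[d_0]}$.

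Next I would rewrite each summand in the form $H(\cdot)$. Each one is of the type $\chi(H_f/\rho)\,F_1\,W\,F_2\,W\cdots W\,F_L\,\chi(H_f/\rho)$, where $W$ is linear in the creation and annihilation operators with coupling $\omega^{-1/2}G$ and the $F_j$ are bounded functions of $H_f$ and $H_{\rm at}$ built from $\overline{\boldsymbol{\chi}}_\rho^{(0)}$ and $(H_0 - z)^{-1}\upharpoonright\ran\overline{\boldsymbol{\chi}}_\rho^{(0)}$. The generalized Wick theorem of \cite[Sec.~3]{BCFS} — which carries over verbatim to $\mathcal{L}(\HH_{\rm at})$-valued couplings once scalar products of coupling functions are replaced by operator products $G^{*}(p)\,(\cdot)\,G(p)$ sandwiched between $P_{\rm at}$ — rewrites such a product as $\sum_{m+n\le L,\ L-m-n\ \text{even}} H_{m,n}(w^{(L)}_{m,n})$, where $w^{(L)}_{m,n}[r,K^{(m,n)}]$ is a finite sum, over the choices of which $m+n$ of the $L$ vertices stay external, of an integral over the $c = (L-m-n)/2$ contraction momenta of a product of the $F_j$ evaluated at $r$ plus partial sums of the present boson energies, times the external couplings and the contraction factors $\omega(p)^{-1}G^{*}(p)\,(\cdot)\,G(p)$. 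Collecting in $(m,n)$ and summing over $L$ produces $v = (v_{m,n})\in\WW_\xi^{[d_0]}$ with $\sum_L H(w^{(L)}) = H(v)$, and by the scaling lemma relating $S_\rho$ to the kernel rescaling $s_\rho$ one has $S_\rho(H(v)) = H(s_\rho v)$. Adding the kernel $r\mapsto r$ of $H_f$ gives $H_g^{(1,\rho)}(z) - \rho^{-1}(\epsilon_{\rm at} - z) = H(w)$ with $w_{0,0}(r) = r + (s_\rho v)_{0,0}(r)$ and $w_{m,n} = (s_\rho v)_{m,n}$ for $m+n\ge 1$; in particular $w_{0,0}(0) = \rho^{-1}v_{0,0}(0)$, $\ w_{0,0}' - 1 = v_{0,0}'(\rho\,\cdot\,)$ and $w - w_{0,0} = s_\rho(v - v_{0,0})$, so the assertion reduces to estimates on $v$.

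Then I would estimate the Wick kernels $w^{(L)}_{m,n}$ and sum over $L$. The inputs are the resolvent bounds of Lemma~\ref{lem:firstfreeresolvbound} — in particular the $(H_f+\rho)^{1/2}$-conjugated bound, which is what keeps the inverse powers of $\rho$ under control — Lemma~\ref{lem:estonint} applied with the coupling $\omega^{-1/2}G$ (a factor $\le 2\|\omega^{-1}G\|\rho^{-1/2} \le 2\|G\|_\mu^{1/2}\rho^{-1/2}$ per interaction vertex), Lemma~\ref{lem:operatornormestimates}, and the scaling behaviour of $s_\rho$: for $m+n\ge1$ one has $\|s_\rho(v_{m,n})\|_\mu \le \rho^{\mu(m+n)}\|v_{m,n}\|_\mu$ and the corresponding bound for $\|\partial_r(\cdot)\|_\mu$ (the extra power $\rho$ from the chain rule cancelling the $\rho^{-1}$ that $\partial_r$ produces), so that each uncontracted leg carries a gain $\rho^\mu$, while $(s_\rho v)_{0,0}(0) = \rho^{-1}v_{0,0}(0)$. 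For the contracted legs one uses in addition that the cutoffs $\overline{\chi}(\,\cdot\,/\rho)$ confine each contraction momentum to $|k|\gtrsim\rho$, so that the contraction integrals are bounded by $\|G\|_\mu$ uniformly in $\rho$ (via $\int_{|k|\gtrsim\rho}|k|^{-2}\|G(k)\|^2\,dk \le 2\|G\|_\mu$) rather than merely by $\rho^{-1}\|\omega^{-1}G\|^2$. Since $\xi^{-1}C_F\|G\|_\mu\,|g|\,\rho^{-1/2} < 1/8$ by \eqref{eq:condongrho1}, the resulting series over $L$ converges geometrically and is dominated by its first few terms: the $L=1$ term (one external leg) accounts for the summand $\rho^\mu|g|$ of $\gamma_0$; the $L=2$ term accounts, through its two-external-leg part (one intermediate resolvent, two legs), for the summand $\rho^\mu\rho^{-1}|g|^2$ of $\gamma_0$, and, through its fully contracted part, for the bounds $\|v_{0,0}(0)\| \le C|g|^2$ and $\|v_{0,0}'\|_\infty \le C|g|^2\rho^{-1}$ (the inverse power of $\rho$ here coming only from $\chi'(\,\cdot\,/\rho)$ and $\overline{\chi}'(\,\cdot\,/\rho)$, the differentiated resolvent being tempered by the contraction integral); the $L=3$ term accounts, through its one-external-leg part (two intermediate resolvents, one leg), for the summand $\rho^\mu\rho^{-2}|g|^3$; and all remaining terms are absorbed by the geometric bound. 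This yields $\alpha_0 = C_1|g|^2\rho^{-1}$, $\beta_0 = C_2|g|^2\rho^{-1}$, $\gamma_0 = C_3\rho^\mu(|g| + \rho^{-1}|g|^2 + \rho^{-2}|g|^3)$ with $C_1,C_2,C_3$ depending only on $\chi$, $\xi$, $\mu$ and $\|G\|_\mu$, hence $H_g^{(1,\rho)}(z) - \rho^{-1}(\epsilon_{\rm at}-z)\in\mathcal{B}^{[d_0]}(\alpha_0,\beta_0,\gamma_0)$.

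The technical heart — and the main obstacle — is the combination of the Wick ordering with the $\rho$-power bookkeeping. The Wick-ordered kernels grow combinatorially in $L$, and the whole argument hinges on showing that every inverse power of $\rho$ produced by an intermediate resolvent is absorbed either by a $\rho^\mu$-gain from an uncontracted leg (after $s_\rho$) or by the smallness $|g| < C_0\rho^{1/2}$, uniformly for $z\in D_{\rho/2}(\epsilon_{\rm at})$ — which is exactly why the $(H_f+\rho)^{1/2}$-conjugated resolvent bound of Lemma~\ref{lem:firstfreeresolvbound} and the precise form of the constant $C_0$ in \eqref{eq:condongrho1} are needed. The $w_{0,0}$-estimates are the most sensitive, since there every leg is contracted: the bounds $\alpha_0,\beta_0 = O(|g|^2\rho^{-1})$ depend on the contraction integrals being controlled by $\|G\|_\mu$ through the $\overline{\chi}$-confinement to $|k|\gtrsim\rho$, not merely by the crude estimate $\rho^{-1}\|\omega^{-1}G\|^2$.
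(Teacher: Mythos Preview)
Your overall plan matches the paper's proof: verify the Feshbach pair property by reducing \eqref{eq:condongrho1} to \eqref{eq:upperboundong} via \eqref{eq:diffnormonGest}, expand in the Neumann series \eqref{eq:firstresolventexp}, Wick-order (Theorem~\ref{thm:wicktheorem}, which the paper states for matrix-valued kernels), rescale by $s_\rho$, and estimate the resulting kernels $V_{\umm,\upp,\unn,\uqq}^{(0,\rho)}$ using the $(H_f+\rho)^{1/2}$-conjugated resolvent bound of Lemma~\ref{lem:firstfreeresolvbound} together with the vertex bound of Lemma~\ref{lem:estonint}. The paper packages this as Proposition~\ref{initial:thmE22} and Lemma~\ref{ini:lemelemestimates}, and then sums over $L$ to obtain \eqref{eq:feb2:1}--\eqref{eq:feb2:3}.

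One point is misattributed, however. You claim that the bounds $\alpha_0,\beta_0=O(|g|^2\rho^{-1})$ ``depend on the contraction integrals being controlled by $\|G\|_\mu$ through the $\overline\chi$-confinement to $|k|\gtrsim\rho$''. This heuristic is correct only for the single-contraction term $L=2$; for $L\ge 4$ the cutoff $\overline{\boldsymbol\chi}_\rho^{(0)}$ restricts only the \emph{total} intermediate field energy, not individual contracted momenta, so no such confinement is available. The paper does not use this mechanism at all. The entire $\rho$-bookkeeping is carried by the $B_\rho^{\pm 1/2}$-sandwiching you already mentioned: each interior resolvent is $O(1)$ by \eqref{eq2:ini:lemelemestimates}, each interior vertex costs $O(|g|\rho^{-1/2})$ by \eqref{eq1:ini:lemelemestimates}, and the two endpoint vertices are handled by the sharper bounds \eqref{eq1:ini:lemelemestimates2}--\eqref{eq1:ini:lemelemestimates3} (equivalently, by $\|\chi(\cdot/\rho)B_\rho^{1/2}\|\le\sqrt{2\rho}$). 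This produces the extra factor $\rho^{\frac12(|\upp|-p_1+|\uqq|-q_L)}$ of Proposition~\ref{initial:thmE22}, which the paper's Remark flags as essential for the proof --- it is what makes the series one in powers of $|g|\rho^{-1/2}$ rather than $|g|\rho^{-1}$, so that it converges under \eqref{eq:condongrho1}. Drop the confinement argument and keep the sandwiching, and your proof is the paper's.
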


\begin{remark} {\rm We note that the explicit form of the  constants $C_1$, $C_2$, and $C_3$ can be read off from  Inequalities
 \eqref{eq:feb2:1}--\eqref{eq:feb2:3}. Only  $C_3$ depends on $\xi$.  }
\end{remark}

The remaining part of this section is devoted to the proof of Theorem \ref{thm:inimain1}.
First observe that  in view of \eqref{eq:diffnormonGest} and   \eqref{eq:defofCF}  we see that assumption  \eqref{eq:condongrho1}   implies 
the assumption \eqref{eq:upperboundong}  of Theorem \ref{pro:firstbound} holds. Thus
$(H_g -z, H_0 - z)$ is a Feshbach pair for $\boldsymbol{\chi}_{\rho}^{(0)}$ provided $g$ is in a neighborhood of zero and $z$ is
in a neighborhood of the unperturbed ground state energy. Thus by Theorem \ref{pro:firstbound}
 we can expand the resolvent in the Feshbach operator in a absolutely convergent Neumann series   \eqref{eq:firstresolventexp}. 
We shall bring the resulting Neumann series into normal order by using the pull-through formula,
see Lemma \ref{lem:pullthrough}, and applying a generalized version of Wicks theorem, see Theorem \ref{thm:wicktheorem}.
First we use an  alternative notation for our original Hamiltonian and  we define
\begin{align}
& w^{(0)}_{g,0,0}(z)(r) := H_{\rm at} - z + r ,  \nonumber \\
&w^{(0)}_{g,1,0}(z)(r, {k})   :=  g   G({k}  )  , \label{defofwI} \\
&w^{(0)}_{g,0,1}(z)(r, \tilde{{k}}) :=  g  G^*(\tilde{{k}}) . \nonumber
\end{align}
We set  $w_g^{(0)}:=(w^{(0)}_{g,m,n})_{0 \leq m+n \leq 1 }$ and use  the notation
\begin{align}
 \label{eq:defhlinemn}
& {H}_{m,n}^{^{(0)}}(w_{m,n}) := \int_{{A}^{m+n}} \frac{ dK^{(m,n)}}{|K^{(m,n)}|^{1/2}}
a^*(k^{(m)}) w_{m,n}(H_f, K^{(m,n)}) a(\widetilde{k}^{(n)}) .
\end{align}
We note that the expression  is analogous to   \eqref{eq:defofhmn}, apart from the fact that
the domain of integration is different and there is no projection. To distinguish this difference
 we use a   superscript  zeroth order.  In the new notation the interaction reads
\begin{equation} \label{eq:newwayint}
g W = H_{1,0}^{^{(0)}}(w_{g,1,0}^{(0)}) +  H_{0,1}^{^{(0)}}(w_{g,0,1}^{(0)}) .
\end{equation}
For the bookkeeping of the terms in the Neumann expansion we introduce the following multi-indices for $L \in \N$
\begin{align*}
	\underline{m} &:= (m_1,...,m_L) \in \N_0^{L }, \\
	|\underline{m}|  &:= m_1 + \cdots + m_L   , \\
	\underline{0} &:= (0,...,0) \in \N_0^{L } . 
\end{align*}
Now inserting the alternative  expression for the interaction, \eqref{eq:newwayint}, into  the convergent Neumann Series \eqref{eq:firstresolventexp},
and using the generalized Wick theorem, Theorem   \ref{thm:wicktheorem},   we
obtain a sum of terms of the form
\begin{align} \label{eq:defofv}
&V_{\umm,\upp,\unn,\uqq}^{(0,\rho)}[w](r, K^{(|\umm|,|\unn|)})\\
&\quad :=
(P_{\rm at}  \otimes P_\Omega )  F_0^{(0,\rho)}[w](H_f + \rho( r +\tilde{r}_0) )\nonumber \\
 &\qquad\quad\;\, \prod_{l=1}^L \left\{{{W}}_{\quad p_l,q_l}^{^{(0)}m_l,n_l}[w](\rho K^{(m_l,n_l)})
 F_l^{(0,\rho)}[w](H_f + \rho ( r + \tilde{r}_l) ) \right\} (P_{\rm at}  \otimes P_\Omega), \nonumber
\end{align} 
where  the definition of $\widetilde{r}_l$ is given in   \eqref{eq:rltildedef},
and where we used the definitions
\begin{align} \label{eq:defofWW1}
&{{W}}_{\quad p,q}^{^{(0)}m,n}[w](K^{(m,n)}) \\
	&\quad\; := \!\int \displaylimits_{(\R^3 \times  \{1,2\} )^{p+q}}
	\!\!a^*(x^{(p)}) \,
		w_{m+p,n+q}[k^{(m)}, x^{(p)} , \tilde{k}^{(n)} , \tilde{x}^{(q)} ] \,
		a(\tilde{x}^{(q)})\, \frac{d X^{(p,q)}}{|X^{(p,q)}|^{1/2}}, \nonumber
\end{align}
\begin{align*}
F_l^{(0,\rho)}[w](r) &:= F^{(0,\rho)}[w](r) :=  \frac{ {(\boldsymbol{\chib}_{\rho}^{(0)})}^2(r) }{w_{0,0}(r)  },
\qquad \textrm{ for } l=1,...,L-1 ,  \\
  F_0^{(0,\rho)}[w](r) & := F_L^{(0,\rho)}[w](r) := \chi(r/\rho )   .
\end{align*}
We use the natural  convention that there is no integration if  $p=q=0$  and that the
argument  $K^{(m,n)}$ is dropped if $m=n=0$.
Note that the appearance of the  $\rho$'s in the arguments on the r.h.s. of  \eqref{eq:defofv} is due to the
scaling transformation $S_\rho$ in eq. \eqref{eq:initialFeshbachOperator}.
Thus we have shown  the algebraic part  of  the following result.

\begin{proposition} \label{intkernelrep1st} Suppose the assumptions of  Theorem \ref{pro:firstbound}  hold, i.e.,  let $0 < \rho \leq \frac{1}{4}$,  $z \in D_{\rho/2}(\epsilon_{\rm at})$ and  \eqref{eq:upperboundong}.
Define
\begin{align}\label{eq:defofwmnschlange(11)}
\hat{w}^{(1,\rho)}_{g, 0,0}&(z)(r) \\ &:=  \rho^{-1} \left( \epsilon_{\rm at} - z + \rho r   + \sum_{L=2}^\infty (-1)^{L+1}
\sum_{\upp,\uqq \in \N_0^{L}: p_l+q_l = 1}
V_{\uzz,\upp,\uzz,\uqq}^{(0,\rho)}[w_g^{(0)}(z)](r) \right) \nonumber  ,
\end{align}
and for  $M,N \in \N_0$ with  $M+N \geq 1$  define
\begin{align} \label{eq:defofwmnschlange(0)}
 \hat{w}^{(1,\rho)}_{g, M,N}&(z)(r , K^{(M,N)}) \\
 &:= \sum_{L=1}^\infty (-1)^{L+1} \rho^{M+N - 1 }
\sum_{\substack{ \umm,\upp,\unn,\uqq  \in \N_0^{L}: \\ |\umm|=M, |\unn|=N, \\  m_l+p_l+q_l+n_l =1 } } 
V_{\umm,\upp,\unn,\uqq}^{(0,\rho)}[w_g^{(0)}(z)](r,K^{(M,N)}).  \nonumber
\end{align}
Assume that  the right hand sides converge with respect to the norm  $\| \cdot \|_{\mu,\xi}^\#$ for some $\mu  >  0$ and $\xi \in (0,1)$.  Then
for the symmetrization ${w}_g^{(1,\rho)}(z) := [\hat{w}_g^{(1,\rho)}(z)]^{\rm sym}$
we have
$$
{H}_g^{(1,\rho)}(z)  = H({w}_g^{(1,\rho)}(z)) .
$$
\end{proposition}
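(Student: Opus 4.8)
The plan is to start from the absolutely (operator-norm) convergent Neumann expansion \eqref{eq:firstresolventexp}, which is available by Theorem~\ref{pro:firstbound} under the present hypotheses, bring each summand into Wick (normal) order, reorganize the resulting terms according to the total number of external creation and annihilation operators, identify the reorganized kernels with the $\hat{w}^{(1,\rho)}_{g,M,N}(z)$, and finally pass to the operator $H(\cdot)$ using the convergence assumption together with the injectivity statement of Theorem~\ref{thm:injective}.

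First I would substitute the decomposition $gW = H_{1,0}^{^{(0)}}(w_{g,1,0}^{(0)}) + H_{0,1}^{^{(0)}}(w_{g,0,1}^{(0)})$ from \eqref{eq:newwayint} into \eqref{eq:firstresolventexp}. The $L$-th summand then becomes a finite sum over multi-indices $(\umm,\unn) \in \N_0^L\times\N_0^L$ with $m_l+n_l=1$ of products of the operators $H_{m_l,n_l}^{^{(0)}}(w_{g,m_l,n_l}^{(0)})$ interlaced with functions of $H_f$, namely $\chi(\,\cdot\,/\rho)$ at the two ends (coming from $\boldsymbol{\chi}_{\rho}^{(0)}$) and ${(\overline{\boldsymbol{\chi}}_{\rho}^{(0)})}^2/w_{g,0,0}^{(0)}(z)$ in between (coming from ${(\overline{\boldsymbol{\chi}}_{\rho}^{(0)})}^2(H_0-z)^{-1}$). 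Next I would apply the pull-through formula, Lemma~\ref{lem:pullthrough}, to commute all annihilation operators to the right past the $H_f$-functions; this produces the argument shifts $\rho(r+\tilde{r}_l)$ recorded in \eqref{eq:rltildedef}, the factor $\rho$ being an artifact of the dilation $S_\rho$ in \eqref{eq:initialFeshbachOperator}. Then I would invoke the generalized Wick theorem, Theorem~\ref{thm:wicktheorem}, which rewrites each such product as a sum over contractions of creation with annihilation operators: the uncontracted legs of the $l$-th block carry the external variables $K^{(m_l,n_l)}$, while the contracted ones become the internal integration variables $X^{(p_l,q_l)}$ appearing in \eqref{eq:defofWW1}. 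This is precisely the algebraic identity already recorded in \eqref{eq:defofv}.

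With this in hand, I would collect, for each fixed $(M,N)$, all terms with $|\umm|=M$ external creation and $|\unn|=N$ external annihilation operators. For $M+N\geq 1$ this yields precisely $\hat{w}^{(1,\rho)}_{g,M,N}(z)$ of \eqref{eq:defofwmnschlange(0)}, and collecting the $M=N=0$ block together with the unperturbed contribution $\epsilon_{\rm at}-z+\rho r$ yields $\hat{w}^{(1,\rho)}_{g,0,0}(z)$ of \eqref{eq:defofwmnschlange(11)}; the prefactor $\rho^{M+N-1}$ is again the bookkeeping of $S_\rho$. Since, by assumption, all these series converge in the kernel norm and the sequence $\hat{w}_g^{(1,\rho)}(z)$ lies in $\WW_\xi^{[d_0]}$, the operator $H(\hat{w}_g^{(1,\rho)}(z))$ is well defined via \eqref{eq:opHw}; matching the partial sums of the operator-norm convergent Neumann series with the partial sums of $\sum_{M,N}H_{M,N}$ (both being finite sums of the same operators in $\mathcal{L}(\HH_{\rm red})$) gives $H_g^{(1,\rho)}(z)=H(\hat{w}_g^{(1,\rho)}(z))$. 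Finally, because $a^*(k^{(M)})$ and $a(\tilde{k}^{(N)})$ are symmetric under permutations of their arguments, $H_{M,N}$ depends only on the symmetrization of its kernel, whence $H(\hat{w}_g^{(1,\rho)}(z))=H([\hat{w}_g^{(1,\rho)}(z)]^{\rm sym})=H(w_g^{(1,\rho)}(z))$, which is the claim.

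The main obstacle I anticipate is the rigorous justification of the rearrangement: the Neumann series converges a priori only in operator norm, whereas the reorganized series is controlled in the kernel norm $\|\cdot\|_{\mu,\xi}^\#$, and these are genuinely different topologies. The convergence hypothesis of the proposition, combined with the boundedness and injectivity of $H$ from Theorem~\ref{thm:injective} and the operator-norm bounds of Lemma~\ref{lem:operatornormestimates}, is exactly what bridges this gap, by letting one pass to the limit on both sides of the partial-sum identity. The tracking of the shifts $\tilde{r}_l$ and of the $\rho$-powers under $S_\rho$, although somewhat lengthy, is routine.
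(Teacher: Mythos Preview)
Your proposal is correct and follows essentially the same approach as the paper's proof: truncate the Neumann series at finite $L_0$, apply the generalized Wick theorem to identify the truncated kernels $\hat{w}_g^{(1,\rho,L_0)}$, use operator-norm convergence of the Neumann series to get $H_g^{(1,\rho)}(z)=\lim_{L_0\to\infty}H(\hat{w}_g^{(1,\rho,L_0)})$, and then invoke the convergence hypothesis together with the boundedness of $H$ from Theorem~\ref{thm:injective} to pass the limit inside $H$ and symmetrize. Your identification of the topology mismatch as the main obstacle, and of the convergence assumption plus continuity of $H$ as its resolution, matches exactly the paper's argument.
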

\begin{proof} Let $\hat{w}_g^{(1,\rho,L_0)}$ be defined as the integral kernel obtained  by the right hand sides of  \eqref{eq:defofwmnschlange(0)}  and \eqref{eq:defofwmnschlange(11)}, if we sum $L$ only up to $L_0$.
Then by the absolute convergence of the Neumann Series  \eqref{eq:firstresolventexp} and the application of the generalized Wick theorem, Theorem   \ref{thm:wicktheorem}, as discussed above, we find
$$
{H}_g^{(1,\rho)}(z) = \lim_{L_0 \to \infty} H(\hat{w}_g^{(1,\rho,L_0)}) .
$$
A detailed description how to obtain the integral kernels
is given in Appendix A of \cite{GriHas09}, see also \cite{BCFS}.
The assumption that the right hand sides of  \eqref{eq:defofwmnschlange(0)}  and \eqref{eq:defofwmnschlange(11)} converge with respect to the norm  $\| \cdot \|_{\mu,\xi}^\#$ imply in view of Theorem \ref{thm:injective}
that
\begin{align*}
 \lim_{L_0 \to \infty} H(\hat{w}_g^{(1,\rho,L_0)}) &=  \lim_{L_0 \to \infty} H([\hat{w}_g^{(1,\rho,L_0)}]^{\rm sym}) \\ &= H( \lim_{L_0 \to \infty}  [\hat{w}_g^{(1,\rho,L_0)}  ]^{\rm sym})   =   H({w}_g^{(1,\rho)}(z)) .
 \tag* \qedhere
 \end{align*}
\end{proof}

Our next goal is to show Inequalities \eqref{eq:feb2:1}, \eqref{eq:feb2:2}, and \eqref{eq:feb2:3},
below. These estimates will on the one hand imply that
right hand sides of   \eqref{eq:defofwmnschlange(0)}  and \eqref{eq:defofwmnschlange(11)} converge with respect to the norm  $\| \cdot \|_{\mu,\xi}^\#$,
and on the other hand  establish a  proof of Theorem \ref{thm:inimain1}.
To obtain the desired estimates  we will use  the  bounds, collected  in the  following   lemma.

\begin{proposition} \label{initial:thmE22}
For  all  $G \in L^2_\mu(\R^3\times \Z_2 ; \mathcal{L}(\HH_{\rm at}))$,  $\rho \in (0,1/4)$,  $z \in D_{\rho/2}(\epsilon_{\rm at})$,  $L \in \N$, and
 $\umm,\upp,\unn,\uqq \in \N_0^L$ we have
\begin{align}  \label{initial:thmmain:eq2}
 \rho^{|\umm|+|\unn| - 1 } \| V_{\umm,\upp,\unn,\uqq}^{(0,\rho)}[w_g^{(0)}] \|_\mu^\#  
&   \leq
 ( L + 2 )    \hat{C}_F^{L-1} (1 + \| \chi' \|_\infty)   |g|^L        \rho^{-L + \frac{1}{2}(|\upp|-p_1 +|\uqq|- q_L)}   \\
 & \qquad \qquad \qquad\times \rho^{(1+ \mu)(|\umm|+|\unn|) }
  \left\|\frac{G}{\omega} \right\|^{|\upp|+|\uqq|}   \| G \|_{\mu}^{|\umm |+ |\unn|} \nonumber
\end{align}
 and
\begin{align} \label{initial:thmmain:eq3}
\rho^{- 1 } \| V_{\underline{0},\upp,\underline{0},\uqq}^{(0,\rho)}[w_g^{(0)}] \|_{\infty} &   \leq
 \hat{C}_F^{L-1}    |g|^L    \rho^{-L + \frac{1}{2}(|\upp| +|\uqq|)}  \left\|\frac{G}{\omega} \right\|^{|\upp|+|\uqq|}   , \\
 \rho^{- 1 } \| \partial_r  V_{\underline{0},\upp,\underline{0},\uqq}^{(0,\rho)}[w_g^{(0)}] \|_{\infty} &   \leq
 ( L + 1  ) \hat{C}_F^{L-1} (1 + \| \chi' \|_\infty)   |g|^L    \rho^{-L + \frac{1}{2}(|\upp| +|\uqq|)}  \left\|\frac{G}{\omega} \right\|^{|\upp|+|\uqq|} \! ,  \label{initial:thmmain:eq3v2}
\end{align}
 where   $\underline{0} \in \N_0^L$.
\end{proposition}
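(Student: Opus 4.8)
The plan is to estimate $V_{\umm,\upp,\unn,\uqq}^{(0,\rho)}[w_g^{(0)}]$ directly from the definition \eqref{eq:defofv}, bounding the chain of factors one at a time. For the particular choice $w=w_g^{(0)}$ only the components $w^{(0)}_{g,0,0}(r)=H_{\rm at}-z+r$, $w^{(0)}_{g,1,0}=gG$ and $w^{(0)}_{g,0,1}=gG^*$ are nonzero (see \eqref{defofwI}), so each building block occurring in \eqref{eq:defofv} is either identically $0$ (in which case $V=0$ and there is nothing to prove), a bounded multiplication operator $g\,G(\rho k)$ or $g\,G^*(\rho\tilde k)$ attached to an external momentum variable, or an integrated creation/annihilation operator $g\,a^{\#}(\omega^{-1/2}G)$ over an internal variable; the resolvent factors are $F_l^{(0,\rho)}[w_g^{(0)}]=(\overline{\boldsymbol{\chi}}_\rho^{(0)})^2/w^{(0)}_{g,0,0}$ for $1\le l\le L-1$ and $F_0^{(0,\rho)}=F_L^{(0,\rho)}=\chi(\,\cdot\,/\rho)$. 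A useful first reduction is that the contributions with $p_1\ge 1$ or $q_L\ge 1$ vanish identically: after normal ordering the leftmost creation operators of $W_1$ sit immediately to the right of the vacuum functional and $\langle\Omega|\,a^*(k)=0$, while the rightmost annihilation operators of $W_L$ act directly on the vacuum and $a(k)|\Omega\rangle=0$. This is why the exponents in \eqref{initial:thmmain:eq2} may be written with $-p_1$ and $-q_L$; from now on I assume $p_1=q_L=0$.

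The core is a pointwise bound, in $r\in[0,1]$ and in the external momenta, for the operator norm of $V$ on $\ran(P_{\rm at}\otimes P_\Omega)$. I would insert resolvent weights $(H_f+\rho)^{\pm 1/2}$ between consecutive factors and commute them past the creation and annihilation operators by the pull-through formula, Lemma \ref{lem:pullthrough}; this is consistent with the shifts $\tilde r_l$ already recorded in \eqref{eq:defofv}, which guarantee that the argument of each $F_l$ dominates the free-field energy of the intermediate state on which it acts. Then each of the $L-1$ interior resolvents, regularized by $(H_f+\rho)^{1/2}$ on either side, is bounded by a constant $\le\hat{C}_F$ via estimate \eqref{eq:firstfreeresolvbound2} of Lemma \ref{lem:firstfreeresolvbound} applied with $\tau$ of order $\rho$, yielding $\hat{C}_F^{L-1}$ overall; each internal block $g\,a^{\#}(\omega^{-1/2}G)$, regularized by $(H_f+\rho)^{-1/2}$ on either side, is controlled by Lemma \ref{lem:estonint} and contributes $|g|\,\rho^{-1/2}\,\|G/\omega\|$ (the extra $\omega^{-1/2}$ upgrades the $\|\omega^{-1/2}G\|$ of Lemma \ref{lem:estonint} to $\|\omega^{-1}G\|=\|G/\omega\|$); each external block contributes $|g|\,\|G(\rho k_i)\|$; and the two end cutoffs $\chi(\,\cdot\,/\rho)$, supported on arguments $\le\rho$, have operator norm $1$ but, together with the $(H_f+\rho)^{1/2}$ weight absorbed from the adjacent block, contribute $O(\rho^{1/2})$ each. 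Multiplying everything and tracking the powers of $\rho$ — with the two end cutoffs absorbing exactly the weights needed to regularize the blocks next to them — reproduces the power of $\rho$ in \eqref{initial:thmmain:eq2}, with prefactor $|g|^L\,\|G/\omega\|^{|\upp|+|\uqq|}\,\prod_i\|G(\rho k_i)\|$ and a constant of the form $\hat{C}_F^{L-1}$ times an $L$-independent factor.

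To pass from the pointwise bound to the norm $\|\cdot\|_\mu$ I would square and integrate $\prod_i\|G(\rho k_i)\|^2$ against $|K^{(M,N)}|^{-3-2\mu}\,dK^{(M,N)}$ over $B_1^{M+N}$; the substitution $\kappa_i=\rho k_i$ turns each factor into $\rho^{2\mu}\int_{|\kappa_i|\le\rho}\|G(\kappa_i)\|^2|\kappa_i|^{-3-2\mu}\,d\kappa_i\le\rho^{2\mu}\|G\|_\mu^2$, which, combined with the prefactor $\rho^{|\umm|+|\unn|-1}$ on the left of \eqref{initial:thmmain:eq2}, produces the overall $\rho^{(1+\mu)(|\umm|+|\unn|)}$ and the $\|G\|_\mu^{|\umm|+|\unn|}$. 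The derivative term $\|\partial_r V\|_\mu$ is handled by the Leibniz rule: $\partial_r$ hits only the $L+1$ $r$-dependent factors $F_0,\dots,F_L$, producing on each a factor $\rho$ from the chain rule times the derivative of $F_l$, bounded by $C(1+\|\chi'\|_\infty)\rho^{-2}$ for an interior resolvent (one power $\rho^{-1}$ from differentiating $(\overline{\boldsymbol{\chi}}_\rho^{(0)})^2$ and one from the resolvent) and by $\|\chi'\|_\infty\rho^{-1}$ for an end cutoff; summing the $L+1$ differentiated terms together with the undifferentiated bound accounts for the prefactor $(L+2)(1+\|\chi'\|_\infty)$ of \eqref{initial:thmmain:eq2}. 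Estimates \eqref{initial:thmmain:eq3} and \eqref{initial:thmmain:eq3v2} are the special case $\umm=\unn=\uzz$, where there are no external legs and no $\mu$-rescaling, so the bound reduces to the operator-norm estimate above with $(L+1)$ in place of $(L+2)$ in the derivative count and with $\hat{C}_F^{L-1}$ as the only exponential constant.

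The main obstacle is exactly the bookkeeping in the operator-norm estimate: arranging the insertions of $(H_f+\rho)^{\pm1/2}$ so that they telescope correctly along the whole chain in the presence of the shifts $\tilde r_l$ coming from the pull-through formula, and extracting precisely the stated power of $\rho$, including the cancellations at the two ends and the $\mu$-dependent gain from rescaling the external momenta. This is the matrix-valued counterpart of the estimates carried out in Appendix A of \cite{GriHas09} (see also \cite{BCFS,BacCheFroSig07}); the only genuinely new point is that $G$ takes values in $\mathcal{L}(\HH_{\rm at})$, so operator norms replace absolute values throughout, which leaves the structure of the argument unchanged.
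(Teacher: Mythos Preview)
Your proposal is essentially correct and follows the same strategy as the paper: insert weights $(H_f+\rho)^{\pm 1/2}$ along the chain, bound the interior resolvents via Lemma~\ref{lem:firstfreeresolvbound}, bound the creation/annihilation blocks via Lemma~\ref{lem:estonint}, apply the Leibniz rule for $\partial_r$, and finally rescale $\kappa=\rho k$ in the $\|\cdot\|_\mu$ integral to extract $\rho^{\mu(|\umm|+|\unn|)}\|G\|_\mu^{|\umm|+|\unn|}$.

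The only visible difference from the paper is your treatment of the endpoints. You first observe that the terms with $p_1\ge 1$ or $q_L\ge 1$ vanish (correct: $P_\Omega a^*=0$ and $aP_\Omega=0$ after commuting past the scalar cutoffs $F_0,F_L$), and then insert $B_\rho^{\pm1/2}$ uniformly. The paper instead keeps $p_1,q_L$ arbitrary and uses two asymmetric endpoint estimates, \eqref{eq1:ini:lemelemestimates2} and \eqref{eq1:ini:lemelemestimates3} of Lemma~\ref{ini:lemelemestimates}, which place only one $B_\rho^{-1/2}$ next to $W_1$ (on the right) and $W_L$ (on the left) and use the vacuum projection directly on the other side; this is what produces the exponents $(q_1-1)/2$ and $(p_L-1)/2$ and hence the $-p_1,-q_L$ in \eqref{initial:thmmain:eq2}. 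Both routes give the same bound. Note, however, that in your uniform insertion scheme the external (multiplication) blocks are sandwiched by $B_\rho^{-1/2}$ on both sides and therefore carry a factor $\rho^{-1}$, not $1$ as you wrote; combined with the two endpoint factors $\rho^{1/2}$ this does reproduce the correct overall exponent $\rho^{-L+1+\frac12(|\upp|+|\uqq|)}$ (with $p_1=q_L=0$), so the final claim is right even though the stated per-factor contribution is off. The derivative bookkeeping is likewise slightly mis-described (in the paper's normalization the weighted bound on $\partial_r F_l$ is of the \emph{same} order in $\rho$ as the weighted bound on $F_l$, cf.\ \eqref{eq2:ini:lemelemestimates}--\eqref{eq3:ini:lemelemestimates}, so no net power of $\rho$ is lost), but the conclusion you draw---an extra factor $(L+1)(1+\|\chi'\|_\infty)$ and unchanged $\rho$-power---is the correct one.
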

\begin{remark}{\rm
We note that in contrast to Eq.  \eqref{eq:basicestseond} in Lemma  \ref{lem:approx:V} (see also  \cite{BCFS} Lemma 3.10)  we have an additional  factor  $\rho^{\frac{1}{2}(|\upp|-p_1 +|\uqq|- q_L)} $,  which  yields  an improved estimate. The proof of
of Theorem \ref{thm:inimain1},
which we present, will need  this improved estimate.}
\end{remark}
To show the above  proposition we will use the estimates from the following lemma. 
\begin{lemma} \label{ini:lemelemestimates} 
For  $\rho \geq 0$ let  $B_\rho =  H_f + \rho$.
\begin{itemize}
\item[(a)] Let   $\omega^{-1/2} G \in L^2(\R^3 \times \Z_2 ; \mathcal{L}(\HH_{\rm at}))$.
Then  for all  $m,n,p,q \in \N_0$, with  $m+n+p+q=1$,   all ${K}^{(m,n)}\in {B}_1^{m+n}$, and $\rho \geq 0$
\begin{align}
 \| B_{\rho}^{-1/2} {{W}}_{\quad p,q}^{^{(0)}m,n}&[w_g^{(0)}] 
(K^{(m,n)}) B_{\rho}^{-1/2} \| \nonumber\\
	&\leq  \left\|\frac{G}{\omega} \right\|^{p+q}  |g| \{ \| G({k}_1) \| \}^m \{  \| G(\tilde{{k}}_1) \| \}^n
		\rho^{\frac{1}{2}( p + q    ) - 1   } , \label{eq1:ini:lemelemestimates}
\end{align}
\begin{align}
 \| 1_{[0,1]}(H_f) {{W}}_{\quad p,q}^{^{(0)}m,n}&[w_g^{(0)}](K^{(m,n)}) B_{\rho}^{-1/2} \| \nonumber\\
	&\leq  \left\| \frac{G}{\omega} \right\|^{p+q}  |g| \{ \| G(k_1) \|\}^m \{  \| G(\tilde{{k}}_1) \| \}^n
		\rho^{\frac{1}{2}( q    - 1  )    } , \label{eq1:ini:lemelemestimates2}
\end{align}
\begin{align}
 \| B_{\rho}^{-1/2} {{W}}_{\quad p,q}^{^{(0)}m,n}&[w_g^{(0)} ]
(K^{(m,n)}) 1_{[0,1]}(H_f)  \| \nonumber\\
	&\leq   \left\| \frac{G}{\omega} \right\|^{p+q}   |g| \{  \| G({k}_1) \| \}^m \{   \| G(\tilde{{k}}_1) \| \}^n
		\rho^{\frac{1}{2}( p  - 1  )   }  \label{eq1:ini:lemelemestimates3} .
\end{align}
\item[(b)]  For  all $\rho \in (0, 1/4)$,   $z \in D_{\rho/2}(\epsilon_{\rm at})$ and  $r \in [0,\infty)$  we have
\begin{align}
 \| B_{\rho}^{1/2} F^{(0,\rho)}[w_g^{(0)}(z) ](H_f + \rho r   ) B_{\rho}^{1/2} \| &\leq 5 \leq \hat{C}_F \label{eq2:ini:lemelemestimates} , \\
 \| B_{\rho}^{1/2} \partial_r F^{(0,\rho)}[w_g^{(0)}(z)](H_f  + r\rho ) B_{\rho}^{1/2} \| 
 &\leq 20 + 10 \| \chi' \|_\infty 
\leq  \hat{C}_F (1  + \| \chi' \|_\infty)    \label{eq3:ini:lemelemestimates} .
\end{align}
\end{itemize}
\end{lemma}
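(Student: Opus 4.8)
\textbf{Proof strategy for Lemma \ref{ini:lemelemestimates}.}

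The plan is to reduce everything to the elementary creation/annihilation bounds already available, namely Lemma \ref{lem:estonint} (together with \eqref{eq:estoncrea}) and the definitions \eqref{defofwI} of $w^{(0)}_{g,m,n}$. For part (a), I would treat the four cases $m+n+p+q=1$ separately, but all of them follow the same pattern. When $p=1$ (so $m=n=q=0$) the operator ${{W}}_{\quad 1,0}^{^{(0)}0,0}[w_g^{(0)}]$ is just $a^*$ smeared against $w^{(0)}_{g,1,0}=gG$, divided by $|X^{(1,0)}|^{1/2}$; i.e. it is $g\, a^*(\omega^{-1/2}G)$ up to the weight bookkeeping, so Lemma \ref{lem:estonint} with the shift $\rho$ gives a bound $\le 2|g|\|\omega^{-1/2}G\|\rho^{-1/2}$, and I would absorb the constant and the $\|G/\omega\|$ versus $\|\omega^{-1/2}G\|$ discrepancy into the stated form using \eqref{eq:diffnormonGest} (noting $\|\omega^{-1/2}G\|^2 = \int \omega^{-1}\|G\|^2 \le \|G/\omega\|\cdot\|G\|$ pointwise-style, or more simply bounding directly). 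The case $q=1$ is the adjoint statement. When $m=1$ (so $p=q=n=0$), ${{W}}_{\quad 0,0}^{^{(0)}1,0}[w_g^{(0)}](k_1)$ is simply the multiplication operator $gG(k_1)$ with no integration and no creation/annihilation operator, so $B_\rho^{-1/2}\cdot g G(k_1)\cdot B_\rho^{-1/2}$ has norm $\le |g|\,\|G(k_1)\|\,\rho^{-1}$, matching the right side with $p=q=0$. The case $n=1$ is identical with $\tilde k_1$. Inequalities \eqref{eq1:ini:lemelemestimates2} and \eqref{eq1:ini:lemelemestimates3} are the same computations, but replacing one factor $B_\rho^{-1/2}$ by $1_{[0,1]}(H_f)$: one uses $\|1_{[0,1]}(H_f)\| \le 1$ and $\|1_{[0,1]}(H_f) a^*(G) H_f^{-1/2}\|\le\|\omega^{-1/2}G\|$ type bounds so that the asymmetric placement saves one power of $\rho^{1/2}$ on the appropriate side (the creation operator side for \eqref{eq1:ini:lemelemestimates3}, the annihilation side for \eqref{eq1:ini:lemelemestimates2}), which is exactly why the exponent becomes $\tfrac12(q-1)$ or $\tfrac12(p-1)$ rather than $\tfrac12(p+q)-1$.

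For part (b), recall from \eqref{defofwI} that $w^{(0)}_{g,0,0}(z)(r)=H_{\rm at}-z+r$, and that on $\ran\boldsymbol{\chib}^{(0)}_\rho$ the spectral variable satisfies $r\ge \tfrac34\rho$ on the $P_{\rm at}$-sector while on the $\overline P_{\rm at}$-sector the gap \eqref{eq:condonenedist} gives $\|w_{0,0}(r)^{-1}\|\le (1-\rho/2)^{-1}$. So $F^{(0,\rho)}[w_g^{(0)}(z)](r) = ({(\boldsymbol{\chib}^{(0)}_\rho)}^2/w_{0,0})(r)$ restricted to $\ran\boldsymbol{\chib}^{(0)}_\rho$, and the function $B_\rho^{1/2}F^{(0,\rho)}B_\rho^{1/2}$ is a scalar spectral multiplier in $H_f$ of the form $(\rho + r)\,{(\boldsymbol{\chib}^{(0)}_\rho)}^2(r)/(\epsilon_{\rm at}+r-z)$ on the relevant sector; one checks $|(\rho+r)/(\epsilon_{\rm at}+r-z)|\le 1 + \tau/(\tfrac14\rho)$ type estimates as in the proof of Lemma \ref{lem:firstfreeresolvbound} with $\tau=\rho$, yielding the constant $5$. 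For $F_0^{(0,\rho)}=F_L^{(0,\rho)}=\chi(r/\rho)$ one has $|(\rho+r)\chi(r/\rho)|\le$ const$\cdot\rho$ on $\supp\chi(\cdot/\rho)$ i.e. $r\le\rho$, giving $\le 2\rho$ before the $B_\rho$ factors; combined this is well below $5$. For \eqref{eq3:ini:lemelemestimates} one differentiates: $\partial_r$ hits either the $(\rho+r)$ prefactor (harmless, bounded by the same spectral multiplier argument), or produces a $\rho^{-1}\chi'(r/\rho)$ or $\rho^{-1}\chib'(r/\rho)$ factor, which when multiplied by the $B_\rho^{1/2}\cdot B_\rho^{1/2} = (\rho+r)$ factor and divided by $w_{0,0}$ contributes $\|\chi'\|_\infty$ times a bounded spectral function — hence the $20+10\|\chi'\|_\infty$.

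The computations here are all routine; I expect no genuine obstacle. The one place requiring mild care is the correct bookkeeping of the weight $|X^{(p,q)}|^{-1/2}$ in the definition \eqref{eq:defofWW1} of ${{W}}_{\quad p,q}^{^{(0)}m,n}$ versus the $\omega^{-1/2}$ built into $a(\omega^{-1/2}G)$ in the original interaction $W$, since the kernels $w^{(0)}_{g,1,0}=gG$ carry no explicit $\omega^{-1/2}$; one must verify that ${{W}}_{\quad 1,0}^{^{(0)}0,0}[w_g^{(0)}](\,\cdot\,)$ reassembles precisely to $g\,a^*(\omega^{-1/2}G)$ so that Lemma \ref{lem:estonint} applies verbatim, and likewise track how the asymmetric $1_{[0,1]}(H_f)$ placement in \eqref{eq1:ini:lemelemestimates2}–\eqref{eq1:ini:lemelemestimates3} interacts with the pull-through / number-operator bound $\|a(G)H_f^{-1/2}\|\le\|\omega^{-1/2}G\|$. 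Once that is set up, each displayed inequality is a two-line estimate.
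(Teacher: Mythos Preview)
Your approach is essentially the same as the paper's. For part (a) the paper also reduces to the proof of Lemma~\ref{lem:estonint}; for the asymmetric estimates \eqref{eq1:ini:lemelemestimates2}--\eqref{eq1:ini:lemelemestimates3} the paper explicitly invokes the projection bound \eqref{eq:projineq1}, namely $\|a(G)1_{H_f\le r}\|\le \bigl(\int_{|k|\le r}\|G(k)\|^2/|k|\,dk\bigr)^{1/2}r^{1/2}$, which is the clean replacement for your vaguer ``$\|1_{[0,1]}(H_f)a^*(G)H_f^{-1/2}\|$ type'' bounds and is what actually makes the asymmetric $\rho$-exponents come out right. For part (b) the paper likewise appeals to Lemma~\ref{lem:firstfreeresolvbound} for \eqref{eq2:ini:lemelemestimates} and computes $\partial_r F^{(0,\rho)}$ explicitly for \eqref{eq3:ini:lemelemestimates}, exactly as you outline. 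Your worry about a $\|G/\omega\|$ versus $\|\omega^{-1/2}G\|$ discrepancy is unfounded: since ${{W}}_{\quad 1,0}^{^{(0)}0,0}[w_g^{(0)}]=g\,a^*(\omega^{-1/2}G)$, applying Lemma~\ref{lem:estonint} with $G$ replaced by $\omega^{-1/2}G$ produces $\|\omega^{-1}G\|=\|G/\omega\|$ on the nose.
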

\begin{proof} Eq.  \eqref{eq1:ini:lemelemestimates} follows from the proof of  Lemma  \ref{lem:estonint}.
Eqns. \eqref{eq1:ini:lemelemestimates2} and \eqref{eq1:ini:lemelemestimates3} follow also from the proof of  Lemma  \ref{lem:estonint}
and  \eqref{eq:projineq1}.
Estimate \eqref{eq2:ini:lemelemestimates}
follows from Lemma \ref{lem:firstfreeresolvbound}. To show Estimate \eqref{eq3:ini:lemelemestimates}
we calculate the derivative
\begin{align} \label{partialF}
& \partial_r F^{(0,\rho)}[w_g^{(0)}](H_f + r\rho)   \\
& \qquad \qquad =     \frac{ 2 \boldsymbol{\chib}_{\rho}^{(0)}(H_f + r\rho ) (\boldsymbol{\chib}_{1}^{(0)})'(\rho^{-1} H_f + r )    }{H_{\rm at} + H_f + r\rho  - z }
+  \frac{{(\boldsymbol{\chib}_{\rho}^{(0)})}^2\!(H_f + r\rho) \,  \rho    }{(H_{\rm at} + H_f + r\rho  - z )^2} . \nonumber
\end{align}
To estimate the terms on the right hand side  we use again  Lemma \ref{lem:firstfreeresolvbound}   together with
\begin{equation}
	\left\|\frac{B_{\rho}^{1/2}}{B_{\rho + r \rho }^{1/2}}\right\| 
		\leq 1 \,  . \tag*{\qedhere}
\end{equation}
\end{proof}

\noindent {\it Proof of Proposition \ref{initial:thmE22}.} Let $B_\rho =  H_f + \rho$.
We estimate $\| V_{\umm,\upp,\unn,\uqq}^{(0,\rho)}[w^{(0)}_g] \|_\mu$ using
\begin{equation} \label{eq:babyestimate1}
| \langle \varphi_{\rm at} \otimes \Omega , A_1 A_2 \cdots A_n \varphi_{\rm at} \otimes \Omega \rangle | \leq \| A_1 \| \| A_2 \| \cdots \| A_n \| ,
\end{equation}
where $\| \cdot \|$ denotes the operator norm,
and Inequalities \eqref{eq1:ini:lemelemestimates}--\eqref{eq3:ini:lemelemestimates}.
First we have for $ r \geq 0$
\begin{align}
&\|  V_{\umm,\upp,\unn,\uqq}^{(0,\rho)}[w_g^{(0)}](r, K^{(|\umm|,|\unn|)}) \|
\nonumber \\ &\leq
\bigg\| ( P_{\rm at}  \otimes P_\Omega )
F_0^{(0,\rho)}[w_g^{(0)}](H_f + \rho( r + \tilde{r}_0))
{{W}}_{\quad p_1,q_1}^{^{(0)}m_1,n_1}[w_g^{(0)}](\rho K^{(m_1,n_1)})
B_{\rho}^{-1/2} \nonumber \\
& \quad \times  B_{\rho}^{1/2}
 F_1^{(0,\rho)}[w_g^{(0)}](H_f + \rho( r + \tilde{r}_1) )   B_{\rho}^{1/2}   \nonumber\\
&\quad \times  \prod_{l=2}^{L-1}  \left\{
 B_{\rho}^{-1/2}   {{W}}_{\quad p_l,q_l}^{^{(0)}m_l,n_l}[w_g^{(0)}](\rho K^{(m_l,n_l)}) B_{\rho}^{-1/2}
 B_{\rho}^{1/2}  F_l^{(0,\rho)}[w_g^{(0)}](H_f + \rho( r + \tilde{r}_l))  B_{\rho}^{1/2}  \right\}\nonumber \\
&\quad \times  B_{\rho}^{-1/2}  {{W}}_{\quad p_L,q_L}^{^{(0)}m_L,n_L}[w_g^{(0)}](\rho K^{(m_l,n_l)})  F_L^{(0,\rho)}[w_g^{(0)}](H_f + \rho( r + \tilde{r}_L)) (P_{\rm at}  \otimes P_\Omega) \bigg\| \nonumber \\
&\leq \hat{C}_F^{L-1}
 \left\|\frac{G}{\omega} \right\|^{|\upp|+|\uqq|}  |g|^L  \left[   \prod_{l=1}^{L}
 [ \| G( \rho {k}_{m_l})\| ]^{m_l} [ \|G( \rho \tilde{{k}}_{n_l})\|]^{n_l} \right] \nonumber \\
  &\qquad\qquad\quad \times  \left[  \prod_{l=2}^{L-1} \rho^{\frac{1}{2}( p_l + q_l    ) - 1   }  \right] \rho^{\frac{1}{2}(q_1 + p_L) -1} \nonumber \\
  &\leq  \hat{C}_F^{L-1}   \left\|\frac{G}{\omega} \right\|^{|\upp|+|\uqq|}  |g|^L       \rho^{-L+1}\rho^{
  \frac{1}{2}( |\upp|-p_1 + |\uqq|-q_L ) }
   \prod_{l=1}^L[ \| G( \rho {k}_{m_l})\| ]^{m_l} [ \| G( \rho \tilde{{k}}_{n_l})\|]^{n_l} . \label{banachfirstbasicder1}
\end{align} 
To calculate the derivative we use Leibniz rule and we obtain similarly
 using again equation \eqref{eq:babyestimate1}
and Inequalities \eqref{eq1:ini:lemelemestimates}--\eqref{eq3:ini:lemelemestimates}. We find for $r \geq 0$
\begin{align} \label{banachfirstbasicder2}
\|\partial_r  V_{\umm,\upp,\unn,\uqq}^{(0,\rho)}&[w_g^{(0)}](r, K^{(|\umm|,|\unn|)})\|
\\
& \leq \hat{C}_F^{L-1}(1 + \| \chi'\|_\infty) (L+1)
\left\|\frac{G}{\omega} \right\|^{|\upp|+|\uqq|} |g|^L
\rho^{-L+1 + \frac{1}{2}(|\upp|-p_1 +|\uqq|-q_L)} \nonumber \\
  &\qquad\qquad \times \prod_{l=1}^L
 [ \| G( \rho {k}_{m_l}) \|]^{m_l} [  \| G( \rho \tilde{{k}}_{n_l})\| ]^{n_l}
 \nonumber   .
\end{align}
 Now   we can  estimate  inserting \eqref{banachfirstbasicder1} and
 \eqref{banachfirstbasicder2}, respectively,
 \begin{align*}
 \| & V_{\umm,\upp,\unn,\uqq}^{(0,\rho)}[w_g^{(0)}] \|_\mu  \\
  & =    \left( \int_{{B}_1^{|\umm|+|\unn|}}
  \|  V_{\umm,\upp,\unn,\uqq}^{(0,\rho)}[w_g^{(0)}]  (  K^{(|\umm|,|\unn|)})                        \|_{\infty}^2
  \frac{d {K}^{(\umm,\unn)}}{|{K}^{(\umm,\unn)}|^{3 + 2 \mu}} \right)^{1/2}   \\
   & \leq    \hat{C}_F^{L-1}   |g|^L  \left\|\frac{G}{\omega} \right\|^{|\upp|+|\uqq|} \rho^{-L +1  + \frac{1}{2}(|\upp|-p_1 +|\uqq|-q_L)}     \\
  &\qquad\quad  \times
  \left(  \int_{{B}_1^{|\umm|+|\unn|}}
   \prod_{l=1}^L \left\{
  \| G( \rho {k}_{m_l}) \|^{ 2 m_l}  \| G( \rho \tilde{{k}}_{n_l})\| ]^{ 2 n_l} \right\}
  \frac{d {K}^{(\umm,\unn)}}{|{K}^{(\umm,\unn)}|^{3 + 2 \mu}} \right)^{1/2}  \\
  & \leq
  \hat{C}_F^{L-1}   |g|^L    \left\|\frac{G}{\omega} \right\|^{|\upp|+|\uqq|} \rho^{-L+1  + \frac{1}{2}(|\upp|-p_1 +|\uqq|-q_L)}        \rho^{ \mu (|\umm| + |\unn|)}
    \| G \|_{\mu}^{|\umm |+ |\unn|} ,
\end{align*}
\begin{align*}
 \| \partial_r & V_{\umm,\upp,\unn,\uqq}^{(0,\rho)}[w_g^{(0)}] \|_\mu \\
  & =   \left( \int_{{B}_1^{|\umm|+|\unn|}}
  \|  \partial_r V_{\umm,\upp,\unn,\uqq}^{(0,\rho)}[w_g^{(0)}]  (  K^{(|\umm|,|\unn|)})                        \|_{\infty}^2
  \frac{d {K}^{(\umm,\unn)}}{|{K}^{(\umm,\unn)}|^{3 + 2 \mu}} \right)^{1/2} \\
   & \leq   ( L + 1 ) \hat{C}_F^{L-1} (1 + \| \chi'\|_\infty)   |g|^L  \left\|\frac{G}{\omega} \right\|^{|\upp|+|\uqq|} \rho^{-L +1  + \frac{1}{2}(|\upp|-p_1 +|\uqq|-q_L)}       \\
  &\qquad\quad  \times
   \left( \int_{{B}_1^{|\umm|+|\unn|}}
   \prod_{l=1}^L \left\{
  \| G( \rho {k}_{m_l}) \|^{2 m_l}  \| G( \rho \tilde{{k}}_{n_l})\| ]^{ 2 n_l} \right\}
  \frac{d {K}^{(\umm,\unn)}}{|{K}^{(\umm,\unn)}|^{3 + 2 \mu}}\right)^{1/2}  \\
  & \leq
   ( L + 1 ) \hat{C}_F^{L-1} (1 + \| \chi' \|_\infty)   |g|^L    \left\|\frac{G}{\omega} \right\|^{|\upp|+|\uqq|} \\ 
   & \qquad\quad \times \rho^{-L+1  + \frac{1}{2}(|\upp|-p_1 +|\uqq|-q_L)}       \rho^{ \mu (|\umm| + |\unn|)}
    \| G \|_{\mu}^{|\umm |+ |\unn|} .
\end{align*}
Adding above estimates yields  \eqref{initial:thmmain:eq2}.
Eqns. \eqref{initial:thmmain:eq3} and  \eqref{initial:thmmain:eq3v2} follow similarly  noting that
$|\umm|=0$ and $|\unn|=0$ can only occur if $L$ is even and  on the very left we have a
annihilation operator and on the very right a creation operator.
\qed

\vspace{0.5cm}

\noindent {\it Proof of Theorem  \ref{thm:inimain1}.}
It suffices to  establish inequalities \eqref{eq:feb2:1}--\eqref{eq:feb2:3}, below.
Let $S^L_{M,N}$ denote the set of tuples $(\umm,\upp,\unn,\uqq) \in \N_0^{4L}$ with
$|\umm|=M$, $|\unn|=N$, and
\begin{equation} \label{eq:condonsum}
m_l+p_l+q_l+n_l = 1 .
\end{equation}
Such tuples obviously  satisfy  $|\umm|+|\unn|+|\unn|+|\uqq|=L$.
Using this identity, we  now estimate the norm of \eqref{eq:defofwmnschlange(0)} using \eqref{initial:thmmain:eq2} and  \eqref{eq:diffnormonGest}.
This yields
\begin{align}
\| &( w^{(1,\rho)}_{g,M,N})_{M+N \geq 1}(z) \|_{\mu,\xi}^\#   \nonumber \\
&= \sum_{M+N \geq 1} {\xi}^{-(M+N)} \| {w}^{(1,\rho)}_{g,M,N}(z) \|_\mu^\#   \nonumber \\
&\leq \sum_{M+N\geq 1} \sum_{L=1}^\infty \sum_{(\umm,\upp,\unn,\uqq) \in S^L_{M,N}}
{\xi}^{-(M+N)} \rho^{M+N-1}  \| V_{\umm,\upp,\unn,\uqq}^{(0,\rho)}[w_g^{(0)}(z)] \|_\mu^\# \nonumber \\
&\leq \sum_{L=1}^\infty \sum_{M+N\geq 1} \sum_{(\umm,\upp,\unn,\uqq) \in S^L_{M,N} }
{\xi}^{-|\umm|-|\unn|} ( L + 2 ) \hat{C}_F^{L-1}(1+\| \chi'\|_\infty) \nonumber \\ 
&\qquad\qquad \times |g|^L   \rho^{-L/2 + (\frac{1}{2} +\mu)(|\umm|+|\unn|) - \frac{1}{2}(p_1+q_L) }
          \| G \|_{\mu}^L 
           \nonumber  \\
&\leq    \rho^{\mu } (1+\| \chi' \|_\infty )  \bigg[ \rho^{1/2} 6  {\xi}^{-1}|g|  \rho^{-1/2}    \| G \|_{\mu}  + 
 {64}   \left( \xi^{-1} |g| \rho^{-1/2} \hat{C}_F  \| G \|_{\mu}  \right)^2 \nonumber \\
& \qquad\qquad\qquad\qquad +    \rho^{-1/2} \sum_{L=3}^\infty  (L+2)  \left( 4  \xi^{-1}    |g| \rho^{-1/2} \hat{C}_F   \| G \|_{\mu}   \right)^L \bigg]  \; , \label{eq:feb2:1}
\end{align}
where  in the last inequality we estimated   the summands  $L=1$,  $L=2$ separately  and summed over the  terms with   $L \geq 3$,
as we now outline.
First we note that  \eqref{eq:condonsum}  implies that $S_{M,N}^L$ is empty unless   $M + N  \leq L$, and  that the number of elements  $(\umm,\upp,\unn,\uqq) \in \N_0^{4L}$ with
$\eqref{eq:condonsum} $ is bounded above by $4^{L}$.
Specifically   for $L=1$, we have only two  terms:  $(m_1,p_1,n_1,q_1)$ equal $(1,0,0,0)$ or  equal $(0,0,1,0)$.
For $L=2$ we use that  $1 \leq M + N $ implies   $p_1+q_L \leq 1$. For $L \geq 3$ we use $|\umm |+ |\unn| -  ( p_1+q_L ) \geq -1  $.
These considerations  establish  \eqref{eq:feb2:1}.
We now estimate the norm of \eqref{eq:defofwmnschlange(11)} using \eqref{initial:thmmain:eq3v2},
by means of a similar  but simpler estimate
\label{eq:defofwmnschlange(-1)}
\begin{align}
\sup_{r \in [0,1]} |  \partial_r  w^{(1,\rho)}_{g, 0,0}(z)(r) -  1|
&\leq \rho^{-1}  \sum_{L=2}^\infty \sum_{\substack{ \upp,\uqq  \in \N_0^{L}: \\ p_l+q_l= 1}}
 \|  \partial_r  V_{\uzz,\upp,\uzz,\uqq}^{(0,\rho)}[w_g^{(0)}(z)] \|_{\infty} \nonumber \\
&\leq (1+\| \chi' \|_\infty )  \sum_{L=2}^\infty  (L+1)  \left(2    \left\|\frac{G}{\omega} \right\|  |g| \rho^{-1/2} \hat{C}_F \right)^L \; .  \label{eq:feb2:2}
\end{align}
Analogously we have using \eqref{initial:thmmain:eq3}
\begin{align}
|  w^{(1,\rho)}_{g,0,0}(z)(0) + \rho^{-1} (z - \epsilon_{\rm at})  |
& \leq \rho^{-1} \sum_{L=2}^\infty \sum_{\substack{ \upp,\uqq \in \N_0^{L}:\\ p_l+q_l= 1}}
\| V_{\uzz,\upp,\uzz,\uqq}^{(0,\rho)}[w_g^{(0)}(z)] \|_{\infty} \nonumber
\\ \label{eq:feb2:3}
& \leq  \sum_{L=2}^\infty    \left(2   \left\|\frac{G}{\omega} \right\|  |g|  \rho^{-1/2} \hat{C}_F \right)^L \; .
\end{align} 
The series on the
 right hand sides in \eqref{eq:feb2:1}--\eqref{eq:feb2:3} converge if  \eqref{eq:condongrho1} holds.
Thus Theorem \ref{thm:inimain1} now follows in view of  equations  \eqref{eq:feb2:1}--\eqref{eq:feb2:3}.
\qed

\section{Second Feshbach Step}
\label{sec:secondFeshStep}

In this section we perform our second Feshbach step.
We want to note that the field energy cutoff of  the first Feshbach step will be henceforth denoted by $\rho_0$ while
the field energy cutoff of the second Feshbach step will be denoted by 
$\rho_1$.
First we  approximate $w_{g,0,0}^{(1,\rho_0)}(z)$, which is the content of
the following lemma.
We recall the  mapping $Z_{\rm at}$, which  was defined in \eqref{eq:defofzat}, and
 that its ground state energy, $\epsilon^{(2)}_{\rm at}$, is by assumption a simple eigenvalue.

 \begin{lemma}[Free Approx. to  1st  Feshbach]\label{norm:w:(0b)} There exists a constant $C$ such that the following holds. Let
 $G \in L^2_\mu(\R^3\times \Z_2 ; \mathcal{L}(\HH_{\rm at}))$,
 $0 < \rho_0 < 1/4$, and suppose
\begin{align} \label{eq:upperboundongv2}
	|g| < \frac{ \rho_0^{1/2}}{4 C_F    \|\omega^{-1}G \| } .
\end{align}
If   $z \in   D_{\rho_0 / 2}(\epsilon_{\rm at})$,  then  the defining series  of  $w^{(1,{\rho_{0}})}_{g, 0,0}(z)$, i.e.,  the r.h.s. of \eqref{eq:defofwmnschlange(11)},
converges absolutely and
\begin{align*}
 \sup_{0 \leq r \leq 1} \|\rho_0^{-1} (\epsilon_{\rm at} - z + \rho_0  r +   \chi^2(r)  g^2 Z_{\rm at} ) -   w^{(1,{\rho_{0}})}_{g, 0,0}(z)(r)   \|  
 \leq C (
\| G \|_\mu |g|^2 +  \| G\|_\mu^4  C_F^4 |g|^4 \rho_0^{-2} )   .
\end{align*}
\end{lemma}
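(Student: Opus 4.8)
\emph{Proof proposal.} The plan is to extract $w^{(1,\rho_0)}_{g,0,0}(z)$ from the integral-kernel representation of Proposition~\ref{intkernelrep1st}, to isolate the unique order-$g^2$ contribution, and to show that it equals $\chi^2(r)g^2Z_{\rm at}$ up to the stated error. First note $C_F\ge 20$, so hypothesis \eqref{eq:upperboundongv2} is more restrictive than \eqref{eq:upperboundong}; hence Theorem~\ref{pro:firstbound} applies and, by Proposition~\ref{intkernelrep1st} and \eqref{eq:defofwmnschlange(11)},
\[
w^{(1,\rho_0)}_{g,0,0}(z)(r)=\rho_0^{-1}\Big(\epsilon_{\rm at}-z+\rho_0 r+\sum_{L\ge 2}(-1)^{L+1}\!\!\!\sum_{\upp,\uqq\in\N_0^L:\,p_l+q_l=1}\!\!\! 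V^{(0,\rho_0)}_{\uzz,\upp,\uzz,\uqq}[w_g^{(0)}(z)](r)\Big)
\]
once the series is shown to converge. Since these terms carry no external legs, their vacuum expectation vanishes unless $|\upp|=|\uqq|$, so only even $L$ contribute; for $L=2$ only the tuple $(\upp,\uqq)=((0,1),(1,0))$ survives, and this is the unique term of order $g^2$.

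For convergence and for the higher-order remainder I would apply \eqref{initial:thmmain:eq3} of Proposition~\ref{initial:thmE22}: with $p_l+q_l=1$ one has $|\upp|+|\uqq|=L$, so $\rho_0^{-1}\|V^{(0,\rho_0)}_{\uzz,\upp,\uzz,\uqq}\|_\infty\le\hat{C}_F^{L-1}|g|^L\rho_0^{-L/2}\|\omega^{-1}G\|^L$, while at most $2^L$ tuples occur. Thus the $L$-th summand is bounded by $\hat{C}_F^{-1}q^L$ with $q:=2\hat{C}_F|g|\rho_0^{-1/2}\|\omega^{-1}G\|$, and \eqref{eq:upperboundongv2} together with $C_F\ge\hat{C}_F=20$ gives $q<10/C_F\le 1/2$. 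This proves absolute convergence, and summing the geometric tail bounds the contribution of all $L\ge 4$ terms by $2\hat{C}_F^{-1}q^4\le C'\,C_F^4|g|^4\rho_0^{-2}\|G\|_\mu^4$ for a universal constant $C'$ (using \eqref{eq:diffnormonGest}).

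The heart of the matter is the $L=2$ term. I would compute it either directly from \eqref{eq:firstresolventexp}, applying the scaling $S_{\rho_0}$ and retaining the diagonal ``bubble'', or by unwinding the definition \eqref{eq:defofv} for the tuple $((0,1),(1,0))$; both routes give
\[
-\rho_0^{-1}V^{(0,\rho_0)}_{\uzz,(0,1),\uzz,(1,0)}[w_g^{(0)}(z)](r)=-\rho_0^{-1}\chi^2(r)\,g^2\!\int\!\frac{dk}{|k|}\,P_{\rm at}G^*(k)\!\left[\frac{\overline{P}_{\rm at}}{H_{\rm at}\overline{P}_{\rm at}-z+\rho_0 r+|k|}+\frac{P_{\rm at}\,\overline{\chi}^2(r+|k|/\rho_0)}{\epsilon_{\rm at}-z+\rho_0 r+|k|}\right]\!G(k)\,P_{\rm at},
\]
the two denominators being invertible on the indicated ranges by \eqref{eq:condonenedist} and because $\overline{\chi}^2(r+|k|/\rho_0)$ vanishes unless $r+|k|/\rho_0>3/4$, in which case the denominator $\epsilon_{\rm at}-z+\rho_0 r+|k|$ has modulus $\ge\rho_0/4$ (cf.\ Lemma~\ref{lem:firstfreeresolvbound}). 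Comparing this with $\rho_0^{-1}\chi^2(r)g^2Z_{\rm at}$ via \eqref{eq:defofzat} and $|z-\epsilon_{\rm at}|<\rho_0/2$, their difference is $\rho_0^{-1}\chi^2(r)g^2$ times an integral $\int\frac{dk}{|k|}P_{\rm at}G^*(k)\,\Delta(k)\,G(k)P_{\rm at}$, where $\Delta(k)$ collects a shift of the $\overline{P}_{\rm at}$-resolvent and the replacement of $\overline{\chi}^2(r+|k|/\rho_0)/(\epsilon_{\rm at}-z+\rho_0 r+|k|)$ by $1/|k|$ on $\ran P_{\rm at}$.

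It then remains to bound this error integral. The $\overline{P}_{\rm at}$-part of $\Delta(k)$ is $O(\rho_0)$ uniformly in $k$ by a resolvent identity, since $H_{\rm at}\overline{P}_{\rm at}-\epsilon_{\rm at}\ge 1$ and the perturbing shift is $O(\rho_0)$. For the $P_{\rm at}$-part I would split the integral at $|k|=\rho_0$: when $|k|>\rho_0$ one has $\overline{\chi}^2(r+|k|/\rho_0)=1$ and the mismatch is $O(\rho_0/|k|^2)$, while for $|k|\le\rho_0$ the crude bound $1/|k|+4/\rho_0$ suffices. Integrating against $\|G(k)\|^2/|k|$ and invoking the infrared hypothesis through $\int_{|k|\le\rho_0}|k|^{-j}\|G(k)\|^2\,dk\le\rho_0^{3+2\mu-j}\|G\|_\mu$ ($j\le 3$), $\int|k|^{-1}\|G(k)\|^2\,dk\le\|G\|_\mu$, and $\int_{|k|>\rho_0}|k|^{-3}\|G(k)\|^2\,dk\le\|G\|_\mu$, each piece is at most a universal constant times $\rho_0\|G\|_\mu$; hence the $L=2$ error is at most a universal constant times $|g|^2\|G\|_\mu$, which together with the $L\ge 4$ bound gives the claim. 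I expect the step requiring most care to be the $P_{\rm at}$-part for small $|k|$: there the smooth cutoff $\overline{\chi}^2(r+|k|/\rho_0)$ need not vanish (for $r$ near $1$), so the genuine $1/|k|$ singularity of $Z_{\rm at}$ is present and is tamed only by the strengthened infrared weight $|k|^{-3-2\mu}$, which supplies the extra factor $\rho_0^{2\mu}\cdot\rho_0$ needed to absorb the prefactor $\rho_0^{-1}$ and leave an error of order $|g|^2\|G\|_\mu$ rather than $\rho_0^{-1}|g|^2$.
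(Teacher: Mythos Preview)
Your proposal is correct and follows essentially the same route as the paper: identify that only even $L$ contribute with the unique $L=2$ tuple $(\upp,\uqq)=((0,1),(1,0))$, bound the tail $L\ge 4$ by \eqref{initial:thmmain:eq3} as a geometric series, compute the $L=2$ vacuum expectation via pull-through exactly as in \eqref{eq:zeroapprox1}, and then compare to $Z_{\rm at}$ by resolvent identities together with the observation that the cutoff discrepancy $\overline{\chi}^2(r+|k|/\rho_0)-1$ is supported in $|k|\le\rho_0$, where the $\mu$-weighted norm controls the small-$|k|$ integral. The only cosmetic difference is that the paper inserts the intermediate step $z\to\epsilon_{\rm at}$ (splitting the $L=2$ error as in \eqref{eq:mainapprox1} and \eqref{eq:mainapprox2}) before comparing to $Z_{\rm at}$, whereas you combine both shifts into a single $\Delta(k)$; the estimates and the use of the infrared condition are the same.
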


\begin{remark}
{\rm We note that if we replaced  the $\chi^2(r)$ in front of  $Z_{\rm at}$ by one, then the proof given below  would yield a bound  only  of order $ |g|^2 \rho_0^{-1}  $.  }
\end{remark}

\begin{proof}
From \eqref{eq:defofwmnschlange(11)} we see that 
\begin{align*}
 {w}^{(1,{\rho_{0}})}_{g, 0,0}(z)(r)  = \rho_0^{-1} \left( 
 \epsilon_{\rm at} - z + \rho_0  r   + \sum_{L=2}^\infty (-1)^{L+1}
\sum_{\upp,\uqq \in \N_0^{L}: p_l+q_l = 1}\
V_{(\uzz,\upp,\uzz,\uqq)}^{(0,{\rho_{0}})}[w^{(0)}_g(z)](r) \right) \!.
\end{align*}
From the definition   \eqref{eq:defofv} we see, that  the summand  with $L=2$ is only non-vanishing if $\upp = (0,1)$ and $\uqq = (1,0)$ and that summands with odd $L$ vanish.
Using this   we can write
\begin{align}
  \rho_0^{-1} &(\epsilon_{\rm at} - z + \rho_0 r +   \chi^2(r)   g^2 Z_{\rm at} ) -   w_{g,0,0}^{(1,{\rho_{0}})}(z)(r)  \nonumber \\
 &=  \rho_0^{-1} \bigg( \chi^2(r)   g^2 Z_{\rm at}   -   X_g^{\rho_0}(z)(r) \bigg)  -    Y_g^{\rho_0}(z)(r) \nonumber  \\
& =  \rho_0^{-1} \bigg( \chi^2(r)   g^2 Z_{\rm at}   -   X_g^{\rho_0}(\epsilon_{\rm at})(r) +  X_g^{\rho_0}(\epsilon_{\rm at})(r)  -   X_g^{\rho_0}(z)(r) \bigg)      -    Y_g^{\rho_0}(z)(r) \label{norm:w:(0b)eq:1}
\end{align}
where we introduced the notation
\begin{align*}
  X_g^{\rho_0}(z)(r)  & :=   -   V_{(\uzz,(0,1),\uzz,(1,0))}^{(0,{\rho_{0}})}[w^{(0)}_g(z)](r)    \\
    Y_g^{\rho_0}(z)(r)  & :=   \rho_0^{-1}    \sum_{L=4}^\infty (-1)^{L+1}
\sum_{\upp,\uqq \in \N_0^{L}: p_l+q_l = 1}
V_{(\uzz,\upp,\uzz,\uqq)}^{(0,{\rho_{0}})}[w^{(0)}_g(z)](r)    .
\end{align*}
The second term  can be estimated similarly to  \eqref{eq:feb2:3}, i.e., using \eqref{initial:thmmain:eq3} we find
\begin{align}  \label{eq:zeroapprox0}
\sup_{r \in [0,1]} \|  Y_g^{\rho_0}(z)(r)   \|
\leq   \sum_{L=4}^\infty
\left(2  |g|  \left\|\frac{G}{\omega} \right\| \hat{C}_F \rho_0^{-1/2} \right)^L
\end{align}
In order to obtain a suitable estimate  for   $X_g^{\rho_0}$  we use that   $(\boldsymbol{\chib}_{\rho_0}^{(0)})^2$ has a natural decomposition into a sum of two terms
and we calculate the vacuum expectation  using the pull-through formula
\begin{align}
 X_g^{\rho_0}(z)(r) 
  = &\, g^2  \chi(r) P_{\rm at}  \int_{} \frac{ dk}{|k|}    G^*(k)    {P}_{\rm at}  \frac{ \chib^2( \rho_0^{-1} |k| + r  ) }{  \epsilon_{\rm at}  - z +  |k| + \rho_0  r  }    P_{\rm at}  G(k)P_{\rm at}  \chi(r) \nonumber  \\
	&\quad  +  g^2  \chi(r) P_{\rm at} 
 \int_{} \frac{ dk}{|k|}   G^*(k) \overline{P}_{\rm at} 
  \frac{1}{ H_{\rm at} - z +   |k|  +  \rho_0  r  } 
  \overline{P}_{\rm at} G(k) P_{\rm at}  \chi(r).   \label{eq:zeroapprox1} 
\end{align}
First we estimate the relative error if we replace $z$ by $\epsilon_{\rm at}$, that is
we show for $z \in D_{\rho_0/2}(\epsilon_{\rm at})$
 \begin{equation}  \label{eq:mainapprox1}
 \sup_{r\in[0,1]} \| R_g^{\rho_0}(z)(r) - R_g^{\rho_0}(\epsilon_{\rm at})(r) \| \leq    |g|^2 \rho_0 ( 3 + 1) \| G \|_\mu
 \end{equation}
 To estimate  the first term in   \eqref{eq:zeroapprox1} we use common denominators
 \begin{align}
 &  \frac{1}{  \epsilon_{\rm at}  - z + |k| +  \rho_0 r }    - \frac{1}{  |k| +  \rho_0 r } \label{eq:estfreeone}  \\
 &\qquad\qquad\qquad =     \frac{1}{  \epsilon_{\rm at}  - z + |k| + \rho_0  r}  (z - \epsilon_{\rm at}  )  \frac{1}{   |k| + \rho_0 r }  \nonumber  \\
  &\qquad\qquad\qquad =   (z - \epsilon_{\rm at}  )    \left[ 1 + \frac{ z - \epsilon_{\rm at} }{\epsilon_{\rm at} - z +  |k| + \rho_0  r    }    \right] \frac{1}{ (  |k| +   \rho_0  r     )^{2} }  . \nonumber
 \end{align}
 This yields
  \begin{align*}
\chib^2( \rho_0^{-1} |k| + r  ) \left|   \text{ l.h.s. of } \eqref{eq:estfreeone}  \right|   \leq    \rho_0 \left( 1 + \frac{\rho_0}{\frac{3}{4}\rho_0 -\frac{1}{4} \rho_0} \right) |k|^{-2} .
 \end{align*}
 This explains  the first contribution  on the right hand side of  \eqref{eq:mainapprox1}.
We estimate the second term in   \eqref{eq:zeroapprox1} similarly. For $E  \in \sigma(H_{\rm at} ) \setminus \{ \epsilon_{\rm at} \}$ we write
 \begin{align}
  &\frac{1}{ E  - z + |k| +  \rho_0 r  }  -
  \frac{1}{ E - \epsilon_{\rm at}  +    |k| +  \rho_0 r } \label{eq:estfreeotwo} \\
 &\quad  =
 \frac{1}{  E  - z + |k| +  \rho_0 r  }   (z - \epsilon_{\rm at}  )
 \frac{1}{  E - \epsilon_{\rm at} +   |k| +  \rho_0 r   }  \nonumber 
 \end{align}
  This yields
   \begin{align*}
 \left| \text{ l.h.s. of } \eqref{eq:estfreeotwo})  \right|   \leq    \rho_0  |k|^{-2} .
 \end{align*}
  This explains  the second  contribution  on the right hand side of  \eqref{eq:mainapprox1}.
 Next  we show that
 \begin{equation} \label{eq:mainapprox2}
 \sup_{r\in[0,1]}\| \chi^2(r) g^2 Z_{\rm at} - R_g^{\rho_0}(\epsilon_{\rm at} )(r) \| \leq  3  \| G \|_\mu |g|^2 \rho_0  .
 \end{equation}
 To  estimate  the first term in \eqref{eq:zeroapprox1} with $z = \epsilon_{\rm at}$  we use
  \begin{align*}
   \frac{1}{ |k| +  \rho_0 r  }    - \frac{1}{ |k| }   =
  \frac{1}{ |k| +  \rho_0 r }      ( - \rho_0 r   ) \frac{1}{   |k|  }  
   \end{align*}
and make use of
$$
|  \chib^2(\rho_0^{-1} |k| + r ) -  1   | \leq   \left\{ \begin{array}{ll} 0 ,  & |k| \geq \rho_0 \\ 1 ,  &  |k| \leq \rho_0  . \end{array} \right.
$$
 We estimate    the second term  in \eqref{eq:zeroapprox1} with $z = \epsilon_{\rm at}$   using for $E  \in \sigma(H_{\rm at} ) \setminus \{ \epsilon_{\rm at} \}$ that
   \begin{align*}
 \frac{1}{E - \epsilon_{\rm at}  +   |k| +  \rho_0 r  }    - \frac{1}{ E  - \epsilon_{\rm at}  +   |k|  }
   =
 \frac{1}{ E  - \epsilon_{\rm at}   +   |k| +  \rho_0 r  }   ( -  \rho_0 r   ) \frac{1}{ E  - \epsilon_{\rm at}   +   |k|   } .
 \end{align*}
 This gives   \eqref{eq:mainapprox2}. Finally,
inserting  estimates of  \eqref{eq:zeroapprox0},   \eqref{eq:mainapprox1}, and  \eqref{eq:mainapprox2} into  \eqref{norm:w:(0b)eq:1}  shows
the lemma.
\end{proof}

Let $P_{{\rm at}}^{(2)}$ denote the projection onto the one dimensional eigenspace of $Z_{\rm at}$ with eigenvalue $\epsilon_{\rm at}^{(2)}$ and let
$
\overline{P}_{\rm at}^{(2)} = 1-P_{\rm at}^{(2)}$. We mention that the superscript $(2)$ originates from the fact that these expressions are obtained
by  formal second order perturbation theory.
For  $\rho_1 >0$ define
\begin{align*}
\boldsymbol{\chi}^{(1)}_{\rho_1}(r)  &= P_{\rm at}^{(2)} \otimes \chi(r / \rho_1) \\
\boldsymbol{\chib}^{(1)}_{\rho_1}(r)  &= \overline{P}_{\rm at}^{(2)}  \otimes 1 + P_{\rm at}^{(2)}  \otimes \chib(r / \rho_1)
\end{align*}
and
\begin{align}
\boldsymbol{\chi}^{(1)}_{\rho_1}  &=      \boldsymbol{\chi}^{(1)}_{\rho_1}(H_f)       \label{eq:defofsecproj1}      \\
\boldsymbol{\chib}^{(1)}_{\rho_1}   &=  \boldsymbol{\chib}^{(1)}_{\rho_1}(H_f) .  \label{eq:defofsecproj2}
\end{align}
Recall that we assumed that the distance, $d_{\rm at}$,  from the lowest to the second lowest eigenvalue of $H_{\rm at}$ is
 one.  By the  assumption $0 \leq \delta_0 < \pi/2$ the following expression is positive
 \begin{equation} \label{eq:secondfeshestm0}
c_{\delta_0}  := \inf_{g \in S_{\delta_{0}}  } | d_{\rm at} + g^{-2} |   > 0  ,
\end{equation}
which follows from an easy minimization problem, yielding
$$
c_{\delta_0} = \left\{ \begin{array}{ll} d_{\rm at} ,  & \text{ if }  0 \leq \delta_0  \leq \pi/4 \\ d_{\rm at} \sin ( \pi - 2 \delta_0 )  ,  &  \text{ if } \pi/4 < \delta_0   <  \pi/2  .  \end{array} \right.
$$
Let $\rho_0 > 0$. We shall assume the following inequalities
 \begin{equation} \label{eq:secondfeshestm1}  
 \rho_0^{-1} |g|^2 <  \frac{\frac{1}{4}}{ \| Z_{\rm at}  \|  +  c_{\delta_0}  }
\end{equation}
and
\begin{equation} 
 \label{eq:secondfeshest2}
\rho_1 \rho_0 \leq  |g|^2 c_{\delta_0}
\end{equation}
hold. 
Next  we show the following lemma, which will establish the required invertibility.

\begin{lemma}[Invertibility of Free Approx to 1st  Feshbach]\label{T-invert}  Suppose $\rho_0 , \rho_1 \in (0,1/2]$. 
Let   $g  \in S_{\delta_0}$ satisfy \eqref{eq:secondfeshestm1} and \eqref{eq:secondfeshest2}.
Then for  $z \in   D_{\rho_0 \rho_1/2}(\epsilon_{\rm at}  + g^2 \epsilon_{\rm at}^{(2)})$
we have
\begin{equation} \label{eq:Tg(a)invertible}
\left\| \left( \rho_0^{-1}   (\epsilon_{\rm at}  - z + \rho_0  H_f  +   \chi^2(H_f)  g^2 Z_{\rm at}  ) \upharpoonright  \ran \boldsymbol{\chib}^{(1)}_{\rho_1} \right)^{-1} \right\| \leq \frac{4}{\rho_1}
 .
\end{equation}
\end{lemma}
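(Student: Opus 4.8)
\emph{Proof idea.}
Write $A := \epsilon_{\rm at} - z + \rho_0 H_f + \chi^2(H_f) g^2 Z_{\rm at}$ and $\delta := z - \epsilon_{\rm at} - g^2\epsilon_{\rm at}^{(2)}$, so that $|\delta| < \tfrac12\rho_0\rho_1$ by hypothesis; recall $\chi^2 + \overline{\chi}^2 = 1$ with $\chi\equiv 1$, $\overline{\chi}\equiv 0$ on $(-\infty,\tfrac34]$, and note that \eqref{eq:secondfeshestm1} yields $|g|^2\|Z_{\rm at}\| < \rho_0/4$ (indeed $|g|^2\|Z_{\rm at}\| + |g|^2 c_{\delta_0} < \rho_0/4$) while \eqref{eq:secondfeshest2} yields $|g|^2 c_{\delta_0} \ge \rho_0\rho_1$. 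Since $P_{\rm at}^{(2)}$ is a spectral projection of $Z_{\rm at}$ and $H_f$ acts on the field factor alone, $A$ commutes with $P_{\rm at}^{(2)}\otimes 1$ and with every bounded function of $H_f$, hence with $\boldsymbol{\chib}^{(1)}_{\rho_1}$, so $\ran\boldsymbol{\chib}^{(1)}_{\rho_1}$ reduces $A$; moreover $A$ is of the form ``scalar $+$ complex linear combination of the two commuting self-adjoint operators $\rho_0 H_f$ and $\chi^2(H_f)Z_{\rm at}$'', hence normal. Therefore it suffices to prove the lower bound $\|A\psi\| \ge \tfrac14\rho_0\rho_1\,\|\psi\|$ for all $\psi\in\ran\boldsymbol{\chib}^{(1)}_{\rho_1}$: normality then gives $\|(A\upharpoonright\ran\boldsymbol{\chib}^{(1)}_{\rho_1})^{-1}\|\le 4/(\rho_0\rho_1)$, and dividing by $\rho_0$ gives \eqref{eq:Tg(a)invertible}. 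Finally, by the commutation just noted $A$ preserves each summand of the orthogonal decomposition $\ran\boldsymbol{\chib}^{(1)}_{\rho_1} = (\ran\overline{P}_{\rm at}^{(2)}\otimes 1)\oplus(\ran P_{\rm at}^{(2)}\otimes\ran\overline{\chi}(H_f/\rho_1))$, so by the Pythagorean theorem it is enough to prove the lower bound on each summand separately.

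On the second summand $Z_{\rm at}$ acts as the scalar $\epsilon_{\rm at}^{(2)}$, so $A$ restricts to the scalar function of $H_f$ given by $f(s) := \epsilon_{\rm at} - z + \rho_0 s + \chi^2(s)g^2\epsilon_{\rm at}^{(2)} = \rho_0 s - \overline{\chi}^2(s)g^2\epsilon_{\rm at}^{(2)} - \delta$, and by the spectral theorem one only needs $|f(s)|\ge\tfrac14\rho_0\rho_1$ for $s$ in the spectrum of $H_f\upharpoonright\ran\overline{\chi}(H_f/\rho_1)$, i.e.\ for $s\ge\tfrac34\rho_1$. If $s\le\tfrac34$ then $\overline{\chi}(s)=0$, so $|f(s)| = |\rho_0 s - \delta| \ge \tfrac34\rho_0\rho_1 - \tfrac12\rho_0\rho_1 = \tfrac14\rho_0\rho_1$; if $s>\tfrac34$ then $\rho_0 s > \tfrac34\rho_0$ dominates the remaining terms, since $|\overline{\chi}^2(s)g^2\epsilon_{\rm at}^{(2)}|\le|g|^2\|Z_{\rm at}\|<\tfrac14\rho_0$ and $|\delta|<\tfrac12\rho_0\rho_1\le\tfrac14\rho_0$, whence $|f(s)|>\tfrac14\rho_0\ge\tfrac14\rho_0\rho_1$. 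This settles the second summand.

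On the first summand the field cut-off is absent, but the atomic factor lies in the orthogonal complement of the ground space of $Z_{\rm at}$, which supplies a positive energy gap: there $A = -\delta + \rho_0 H_f + g^2\bigl(\chi^2(H_f)Z_{\rm at} - \epsilon_{\rm at}^{(2)}\bigr)$, and where $\chi^2(H_f)=1$ this is $-\delta + \rho_0 H_f + g^2(Z_{\rm at}-\epsilon_{\rm at}^{(2)})$ with $Z_{\rm at}-\epsilon_{\rm at}^{(2)}>0$ on $\ran\overline{P}_{\rm at}^{(2)}$ (and, on the part of this complement coming from excited atomic levels, the relevant excitation is $\ge d_{\rm at}$, so both cases are handled at once via the spectral decomposition of $Z_{\rm at}$). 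Where $\chi^2(H_f)$ is not $1$, i.e.\ $H_f>\tfrac34$, the term $\rho_0 H_f>\tfrac34\rho_0$ majorises the remainder $|\delta| + |g^2(\chi^2(H_f)Z_{\rm at}-\epsilon_{\rm at}^{(2)})|$ by the sharp forms of \eqref{eq:secondfeshestm1} and \eqref{eq:secondfeshest2}, and the bound is immediate. On the complementary part $\chi^2(H_f)=1$, and this is the only place the cone $g\in S_{\delta_0}$ is used — and the main obstacle: for complex $g$ the spectrum of $g^2(Z_{\rm at}-\epsilon_{\rm at}^{(2)})$ is a segment through $0$ rotated by $\arg(g^2)$, which after the nonnegative shift by $\rho_0 H_f$ could otherwise approach $\delta$ arbitrarily closely and destroy invertibility. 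Since $g\in S_{\delta_0}$ forces $g^{-2}$ into the sector $\{\,|\arg w|<2\delta_0\,\}$, the operator $g^{-2}\rho_0 H_f + (Z_{\rm at}-\epsilon_{\rm at}^{(2)})$ — a nonnegative operator plus a strictly positive one — stays at spectral distance at least $c_{\delta_0}=\inf_{g\in S_{\delta_0}}|d_{\rm at}+g^{-2}|$ from $g^{-2}\delta$; multiplying back by $|g|^2$ and invoking $|g|^2 c_{\delta_0}\ge\rho_0\rho_1$ together with $|\delta|<\tfrac12\rho_0\rho_1$ gives $\|A\psi\|\ge\tfrac12\rho_0\rho_1\|\psi\|$ on this part. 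Combining the three estimates proves the lower bound, and hence \eqref{eq:Tg(a)invertible}.
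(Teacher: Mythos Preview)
Your argument follows essentially the same route as the paper's own proof: you decompose $\ran\boldsymbol{\chib}^{(1)}_{\rho_1}$ into the two orthogonal pieces $\ran\overline{P}_{\rm at}^{(2)}\otimes 1_{[0,1]}(H_f)$ and $\ran P_{\rm at}^{(2)}\otimes\ran\overline{\chi}(H_f/\rho_1)$, and on each piece you establish the same pointwise lower bound via the joint functional calculus of $H_f$ and $Z_{\rm at}$, splitting into the regions $r\le 3/4$ (where $\chi\equiv 1$ and the cone condition on $g$ enters through the factorization by $g^{-2}$) and $r>3/4$ (where $\rho_0 r$ dominates). Your explicit invocation of normality is a welcome clarification of what the paper uses implicitly when passing from $\|X\psi\|\ge c\|\psi\|$ to $\|X^{-1}\|\le c^{-1}$.

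One point of confusion: your parenthetical ``and, on the part of this complement coming from excited atomic levels, the relevant excitation is $\ge d_{\rm at}$'' conflates $H_{\rm at}$ with $Z_{\rm at}$. After the first Feshbach step everything lives on $\ran P_{\rm at}$, the ground-state eigenspace of $H_{\rm at}$; the operator $Z_{\rm at}$ acts there, and $\overline{P}_{\rm at}^{(2)}$ is the orthogonal complement of the ground state of $Z_{\rm at}$ \emph{within} $\ran P_{\rm at}$---there are no excited $H_{\rm at}$-levels in play, and $d_{\rm at}$ (the gap of $H_{\rm at}$) is not directly the relevant gap here. The quantity that actually controls $\|(Z_{\rm at}-\epsilon_{\rm at}^{(2)}+g^{-2}\rho_0 r)\eta\|$ from below on $\ran\overline{P}_{\rm at}^{(2)}$ is the spectral gap of $Z_{\rm at}$; the appearance of $d_{\rm at}$ in the definition of $c_{\delta_0}$ is only because $d_{\rm at}$ was normalized to $1$, so that $c_{\delta_0}=\inf_{|\arg w|<2\delta_0}|1+w|$ is really a purely geometric constant of the cone. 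Drop the parenthetical and the argument is clean; the rest matches the paper.
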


\begin{proof}
For notational simplicity we shall write
$$
X(\rho_0,g,z) :=   \rho_0^{-1} (\epsilon_{\rm at} - z + \rho_0  H_f  +   \chi^2(H_f)  g^2 Z_{\rm at}  ) .
$$
For normalized $\psi$ in the range of  $Q_1 := \overline{P}_{\rm at}^{(2)} \otimes  1_{[0,1]}(H_f)$
we have
\begin{align}
\| X(\rho_0,g,z)   \psi \| 
&\geq  \inf\limits_{\eta \in \ran \overline{P}_{\rm at}^{(2)} , \,  \| \eta \| = 1}
 \inf\limits_{ 0\leq r \leq 1 }
\| ( - \rho_0^{-1} g^2 \epsilon_{\rm at}^{(2)}   +    r  +   \rho_0^{-1} \chi^2(r) g^2 Z_{\rm at}  ) \eta  \|  \nonumber \\
&\qquad\qquad - |  \rho_0^{-1}  (
\epsilon_{\rm at}  + g^2 \epsilon_{\rm at}^{(2)}  - z ) |  \nonumber \\
& \geq \rho_0^{-1} |g|^2 c_{\delta_0}   - \rho_1/2 \,, \label{secstepfeshbasic}
\end{align}
where in the last inequality  we used on the one hand that for $r \in [0,3/4]$ we have by  \eqref{eq:secondfeshestm0}
\begin{align*}
   \| ( - \rho_0^{-1} g^2 \epsilon_{\rm at}^{(2)}   +    r +   \rho_0^{-1} \chi^2(r) g^2 Z_{\rm at} ) \eta  \|
    &=  \rho_0^{-1} | g |^2  \| (  - \epsilon_{\rm at}^{(2)}   +  g^{-2}   \rho_0  r +   Z_{\rm at}  ) \eta  \|  \\
    &\geq \rho_0^{-1} |g|^2 c_{\delta_0}  ,
\end{align*}
and that on the other hand   we used that for $r \in [3/4,1]$ we have
$$
   \| ( - \rho_0^{-1} g^2 \epsilon_{\rm at}^{(2)}   +     r  +   \rho_0^{-1} \chi^2(r) g^2 Z_{\rm at}  ) \eta  \| \geq
3/4 -  \rho_0^{-1}|g|^2    \| Z_{\rm at} -  \epsilon_{\rm at}^{(2)}  \|  \geq   \rho_0^{-1} |g|^2 c_{\delta_0}
$$
by
\eqref{eq:secondfeshestm1}.
Using  \eqref{eq:secondfeshest2}  in  \eqref{secstepfeshbasic}  it now follows that
$$
\| (X(\rho_0,g,z)  \upharpoonright \ran Q_1  )^{-1}  \| \leq \frac{2}{\rho_1} .
$$
Next we consider for normalized  $\psi$ in the range of  $Q_2 := {P}_{\rm at}^{(2)}  \otimes  \chib(H_f/\rho_1) $
\begin{align*}
 \|  X(\rho_0,g,z)   \psi \| 
& \geq  \inf_{\substack{ \eta \in \ran P_{\rm at}^{(2)}, \| \eta \| = 1 , \\   \, 3 \rho_1 / 4   \leq  r \leq 1}}
  \| (-  \rho_0^{-1} g^2 \epsilon_{\rm at}^{(2)}   +    r +  \chi^2(r)  \rho_0^{-1} g^2 \epsilon_{\rm at}^{(2)}    ) \eta  \|  
  \nonumber \\
 &\qquad\qquad - |  \rho_0^{-1}  (
\epsilon_{\rm at}  + g^2 \epsilon_{\rm at}^{(2)}  - z ) | \\
& \geq  \frac{3}{4} \rho_1 - \frac{1}{2} \rho_1 = \frac{1}{4} \rho_1 ,
\end{align*}
 where
we used that on the one hand side we have  for $r \in  [3 \rho_1 / 4 , 3/4]  $
$$
  \| ( - \rho_0^{-1} g^2 \epsilon_{\rm at}^{(2)}  +   r +  \chi^2(r)  \rho_0^{-1} g^2 \epsilon_{\rm at}^{(2)}    ) \eta  \| = r \geq
 \frac{3}{4} \rho_1 ,
$$
and on the other hand we have  for $r \in  [3  / 4 , 1]  $
$$
  \| ( - \rho_0^{-1} g^2 \epsilon_{\rm at}^{(2)}   +    r +  \chi^2(r)  \rho_0^{-1} g^2 \epsilon_{\rm at}^{(2)}    ) \eta  \|
\geq   \frac{3}{4} - \rho_0^{-1} |g|^2  | \epsilon_{\rm at}^{(2)} |  \geq  \frac{1}{2} \geq \frac{3}{4} \rho_1 ,
$$
by \eqref{eq:secondfeshestm1}.
Thus we can invert the operator $X(\rho_0,g,z) $ on the range of $\boldsymbol{\chib}^{(1)}_{\rho_1}$.
\end{proof}

In order to perform a second Feshbach iteration, we choose the following
 decomposition
\begin{equation}  \label{eq:decompsecond}
{H}_g^{(1,\rho_0)} = T_{g}^{(1,\rho_0)}    +   W_{g}^{(1,\rho_0)}
\end{equation}
where
\begin{align}
 T_{g}^{(1,\rho_0)}   & := H_{0,0}( t_{g}^{(1,\rho_0)}) , \\
 W_{g}^{(1,\rho_0)} & :=  H(w_{g,{\rm int}}^{(1,\rho_0)})    \label{eq:defofwga}
\end{align}
where
\begin{align*}
 t_{g}^{(1,\rho_0)} &  :=    {P}_{\rm at}^{(2)}  w_{g,0,0}^{(1,\rho_0)} {P}_{\rm at}^{(2)} +  \overline{P}_{\rm at}^{(2)}  w_{g,0,0}^{(1,\rho_0)} \overline{P}_{\rm at}^{(2)}   \\
 w_{g,{\rm int}}^{(1,\rho_0)}  &:= ({P}_{\rm at}^{(2)}  w_{g,0,0}^{(1,\rho_0)} \overline{P}_{\rm at}^{(2)} +  \overline{P}_{\rm at}^{(2)}  w_{g,0,0}^{(1,\rho_0)} {P}_{\rm at}^{(2)} ,   w_{g,m+n \geq 1}^{(1,\rho_0)} )
\end{align*}

\begin{remark} {\rm We note that decomposition  \eqref{eq:decompsecond} into
free    part  and interacting part is not unique. The isospectrality property of the  smooth Feshbach merely requires that the free part commutes
 with the smoothed projections. This  issue is pointed out  in Remark 2.4 in \cite{BCFS}.
Alternatively, we  could use the decomposition
according to
\begin{align}
t_{g,{\rm free} }^{(1,\rho_0)}(z)(r) &  :=  \rho_0^{-1} (\epsilon_{\rm at} - z + \rho_0 r +   \chi^2(r)  g^2 Z_{\rm at} ) \label{eq:tfreeapprox} \\ 
w_{g, {\rm rest }}^{(1,\rho_0)} &:=    ( w_{g,0,0}^{(1,\rho_0)}  - t_{g,{\rm free}}^{(1,\rho_0)}      , w_{g,m+n \geq 1}^{(1,\rho_0)} ) . \nonumber
 \end{align}
The  proof given in this paper would carry through also with this decomposition, with only notational modifications.
 }
\end{remark}

Thus we  can now prove the main result of this section.

\begin{theorem}[Abstract  Feshbach Pair Criterion  for 2nd  Iteration] \label{prop:feshsecond}
Assume that  the smallest eigenvalue of  $Z_{\rm at}$ is  simple.   Let  $\rho_1 \in (0,1]$.
Suppose $$t   \in C([0,1];\mathcal{L}({P}_{\rm at}^{(2)} \HH_{\rm at}) \oplus \mathcal{L}(\overline{P}_{\rm at}^{(2)}{P}_{\rm at} \HH_{\rm at}))$$
and   $w \in \WW_\xi^{[d]}$.
Then  the operators
$H_{0,0}(t)$ and  $H(w)$  are  a Feshbach pair for $\boldsymbol{\chi}^{(1)}_{\rho_1}$,
provided
\begin{itemize}
\item[(i)] $ H_{0,0}(t)$   is invertible on  the closure of   $\ran \boldsymbol{\chib}^{(1)}_{\rho_1} $  and    
$$  \| ( H_{0,0}(t) \upharpoonright \ran \boldsymbol{\chib}^{(1)}_{\rho_1} )^{-1}  \| \leq \frac{8}{\rho_1}\,,
$$
\item[(ii)] the following inequality holds
\begin{equation} \label{eq:condonsecfesh}  \| H(w)   \|  <  \frac{\rho_1}{8} .
\end{equation}
\end{itemize}
In this case we have the absolutely convergent expansion
\begin{align} \label{eq:neumannsecond}
 & F_{\boldsymbol{\chi}^{(1)}_{\rho_1}}( H_{0,0}(t)    , H(w)  ) \\
 & \qquad \qquad =  H_{0,0}(t)   +   \sum_{L=1}^\infty (-1)^{L-1} \boldsymbol{\chi}^{(1)}_{\rho_1} H(w)  \left( \frac{ (\boldsymbol{\chib}^{(1)}_{\rho_1})^2 }{
H_{0,0}(t) }    H(w)   \right)^{L-1} \boldsymbol{\chi}^{(1)}_{\rho_1}  . \nonumber
\end{align}
\end{theorem}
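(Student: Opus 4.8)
The plan is to verify the three conditions (i)--(iii) in the definition of a Feshbach pair for the smoothed projection $\boldsymbol{\chi}^{(1)}_{\rho_1}$, with $H = H_{0,0}(t) + H(w)$ and $T = H_{0,0}(t)$, and then derive the Neumann expansion \eqref{eq:neumannsecond}. Condition (i), that $\boldsymbol{\chi}^{(1)}_{\rho_1}$ and $\boldsymbol{\chib}^{(1)}_{\rho_1}$ commute with $T = H_{0,0}(t)$, is immediate from the construction: $\boldsymbol{\chi}^{(1)}_{\rho_1}$ is a function of $H_f$ tensored with the spectral projection $P_{\rm at}^{(2)}$, and $t$ decomposes along $P_{\rm at}^{(2)} \oplus \overline{P}_{\rm at}^{(2)}$ while $H_{0,0}(t) = t(H_f)$ commutes with any function of $H_f$; one checks that the block structure of $t$ is exactly what makes it commute with the $P_{\rm at}^{(2)}$ factor. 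Also $T$ and $H$ have the same domain since $H(w)$ is bounded by Theorem \ref{thm:injective}.

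For condition (ii), I would first note that by hypothesis (i) of the theorem, $T = H_{0,0}(t)$ is bijective with bounded inverse on $\ran \boldsymbol{\chib}^{(1)}_{\rho_1}$ with $\|(T \upharpoonright \ran \boldsymbol{\chib}^{(1)}_{\rho_1})^{-1}\| \le 8/\rho_1$. To invert $H_{\overline{\chi}} := T + \boldsymbol{\chib}^{(1)}_{\rho_1} H(w) \boldsymbol{\chib}^{(1)}_{\rho_1}$ on that range, write $H_{\overline{\chi}} = T(1 + T^{-1} \boldsymbol{\chib}^{(1)}_{\rho_1} H(w) \boldsymbol{\chib}^{(1)}_{\rho_1})$ and apply a Neumann series: since $\|\boldsymbol{\chib}^{(1)}_{\rho_1}\| \le 1$ and, by hypothesis (ii), $\|H(w)\| < \rho_1/8$, the correction term has norm at most $(8/\rho_1)(\rho_1/8) = 1 < 1$... more precisely strictly less than $1$, so the series converges and
$$
\|H_{\overline{\chi}}^{-1} \upharpoonright \ran \boldsymbol{\chib}^{(1)}_{\rho_1}\| \le \frac{8/\rho_1}{1 - \|T^{-1} \boldsymbol{\chib}^{(1)}_{\rho_1} H(w) \boldsymbol{\chib}^{(1)}_{\rho_1}\|} .
$$
Condition (iii), boundedness of $\boldsymbol{\chib}^{(1)}_{\rho_1} H_{\overline{\chi}}^{-1} \boldsymbol{\chib}^{(1)}_{\rho_1} H(w) \boldsymbol{\chi}^{(1)}_{\rho_1}$, then follows at once since every factor is bounded: $H(w)$ and the smoothed projections are bounded, and $H_{\overline{\chi}}^{-1}$ restricted to the relevant range is bounded by the estimate just obtained.

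Finally, for the Neumann expansion \eqref{eq:neumannsecond}: starting from the definition $F_{\boldsymbol{\chi}^{(1)}_{\rho_1}}(H,T) = H_{\chi} - \boldsymbol{\chi}^{(1)}_{\rho_1} H(w) \boldsymbol{\chib}^{(1)}_{\rho_1} H_{\overline{\chi}}^{-1} \boldsymbol{\chib}^{(1)}_{\rho_1} H(w) \boldsymbol{\chi}^{(1)}_{\rho_1}$ with $H_\chi = T_\chi + \boldsymbol{\chi}^{(1)}_{\rho_1} H(w) \boldsymbol{\chi}^{(1)}_{\rho_1}$, I would expand $H_{\overline{\chi}}^{-1} = (T(1 + T^{-1}\boldsymbol{\chib}^{(1)}_{\rho_1}H(w)\boldsymbol{\chib}^{(1)}_{\rho_1}))^{-1} = \sum_{j=0}^\infty (-T^{-1}\boldsymbol{\chib}^{(1)}_{\rho_1}H(w)\boldsymbol{\chib}^{(1)}_{\rho_1})^j T^{-1}$ and reorganize, using that $\boldsymbol{\chib}^{(1)}_{\rho_1}$ commutes with $T$ so that $(\boldsymbol{\chib}^{(1)}_{\rho_1})^2$ and $T^{-1}$ can be freely regrouped as $(\boldsymbol{\chib}^{(1)}_{\rho_1})^2 / H_{0,0}(t)$; this is the same algebraic manipulation used to pass from the resolvent formula to \eqref{eq:firstresolventexp} in Theorem \ref{pro:firstbound}. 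Collecting the $L$-th term (with $L-1$ internal resolvent factors) gives the claimed series, and absolute convergence in operator norm follows from the geometric bound $\|(\boldsymbol{\chib}^{(1)}_{\rho_1})^2 H_{0,0}(t)^{-1} H(w)\| \le (8/\rho_1)(\rho_1/8) < 1$. I expect the main obstacle to be purely bookkeeping: carefully checking the block structure that makes $H_{0,0}(t)$ commute with $\boldsymbol{\chi}^{(1)}_{\rho_1}$ (the only place the precise form of $t$ matters) and tracking the factors of $\boldsymbol{\chib}^{(1)}_{\rho_1}$ versus $(\boldsymbol{\chib}^{(1)}_{\rho_1})^2$ when passing from the Neumann series for $H_{\overline{\chi}}^{-1}$ to the stated form; the analytic estimates themselves are elementary given hypotheses (i) and (ii).
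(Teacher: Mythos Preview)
Your proposal is correct and follows exactly the approach of the paper's proof, which consists of the single observation that $H_{0,0}(t)$ commutes with $\boldsymbol{\chi}^{(1)}_{\rho_1}$ (by the block structure of $t$) and two invocations of Neumann's theorem; you have simply spelled out the details the paper leaves implicit.
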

\begin{proof} First observe that $H_{0,0}(t)$ commutes with $\boldsymbol{\chi}^{(1)}_{\rho_1}$. The  Feshbach pair property follows
from (i) and (ii) and   Neumanns theorem.
The second claim follows again by Neumanns theorem.
\end{proof}

\begin{remark} {\rm
In the proof of the main theorem, in  Section \ref{sec:prooofmain},  we will determine an explicit   relation among  $\rho_0$ and $\rho_1$ and $g$.
Using this relation we will  verify assumption (i) and (ii)  of the above theorem by the help of   Lemmas \ref{norm:w:(0b)} and \ref{T-invert} .  }
\end{remark}

\section{Banach Space Estimate for the second Step}
\label{sec:secondStep}

Using estimates of the previous section we will show in Section~\ref{sec:prooofmain}
that  $(T_{g}^{(1,\rho_0)}(z), W_{g}^{(1,\rho_0)}(z))$
is indeed a Feshbach pair for   $\boldsymbol{\chi}^{(1)}_{\rho_1}$.
For the moment  we will assume that  the Feshbach property is satisfied.
In that case  we can define
\begin{equation*}
H_g^{(2,\rho_1)}(z) :=
	S_{\rho_1} ( F_{\boldsymbol{\chi}^{(1)}_{\rho_1}}
		( T_{g}^{(1,\rho_0)}(z)  ,  W_{g}^{(1,\rho_0)}(z)))
\end{equation*}
provided the right sides exist.
Similar to Section \ref{sec:firstStep} we now
want to show that there exists a sequence of integrals kernels 
$w_g^{(2,{\rho_{1}})}(z)$  such that
$$
      H( w^{(2,{\rho_{1}})}_g(z))
		=  {H}_g^{(2,{\rho_{1}})}(z)\restriction{\ran \boldsymbol{\chi}^{(1)}_{\rho_1}} .
$$
This will follow as a conclusion of the following theorem.

For notational compactness  we introduce in this section the following constant  $$C_\chi :=  20 \sqrt{2} .$$

\begin{theorem}[Abstract Banach Space Estimate for 2nd Feshbach Operator]\label{thm:ballprofforsecond}  Let  $0 < \xi \leq 1/4$ and
  assume that  the smallest eigenvalue of  $Z_{\rm at}$ is  simple.   Let  $\rho_1 \in (0,1]$.
Suppose $$t   \in C^1([0,1];\mathcal{L}({P}_{\rm at}^{(2)}   \HH_{\rm at}) \oplus \mathcal{L}(\overline{P}_{\rm at}^{(2)}{P}_{\rm at} \HH_{\rm at}))$$
and   $w \in \WW_\xi^{[d]}$.
\begin{itemize}
\item[(i)] $ H_{0,0}(t)$   is invertible on  the closure of   $\ran \boldsymbol{\chib}^{(1)}_{\rho_1} $  and  \\  
$  \| ( H_{0,0}(t) \upharpoonright \ran \boldsymbol{\chib}^{(1)}_{\rho_1} )^{-1}  \| \leq \frac{8}{\rho_1}$,
\item[(ii)]  $\| H(w)   \|  <  \frac{\rho_1}{8}    $.
\end{itemize} 
 Then $H(t)$ and $H(w)$ are a Feshbach pair for $ \boldsymbol{\chib}^{(1)}_{\rho_1} $.
Moreover, suppose
 $$
 \gamma < \frac{ \rho_1 }{8 C_\chi }
 $$
 and
\begin{align*}
& \| w \|_{\mu,\xi}^\#  \leq \gamma  , \\
&  \|t' \|_\infty \leq \tau_0  ,  \\
& \|{P}_{\rm at}^{(2)} t' {P}_{\rm at}^{(2)}  -1  \| \leq \tau_1 .
\end{align*}
Then
$$
 S_{\rho_1} ( F_{\boldsymbol{\chi}^{(1)}_{\rho_1}}  ( H(t), H(w) ))
 -   \rho_1^{-1} {P}_{\rm at}^{(2)} t(0)  {P}_{\rm at}^{(2)}   \in  \mathcal{B}^{[1]}(\alpha_1, \beta_1, \gamma_1 ) , $$
where
\begin{align*}
\alpha_1 & =  12 ( 1 + 2 \|\chi'\|_\infty + 8 \tau_0 )  C_\chi \gamma \rho_1^{-1} , \\ 
 \beta_1 & =    \tau_1  + 12 ( 1 + 2 \|\chi'\|_\infty + 8 \tau_0 ) C_\chi \gamma \rho_1^{-1} ,  \\ 
  \gamma_1 & =  96 (1 + 2 \|\chi'\|_\infty + 8 \tau_0  ) \rho_1^\mu C_\chi \gamma  . 
\end{align*}
\end{theorem}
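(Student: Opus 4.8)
The plan is to argue in two parts, exactly parallel to the structure of Theorem~\ref{thm:inimain1}. First, establish the Feshbach pair property; then, produce the integral-kernel representation of the second Feshbach operator and estimate the three pieces $\alpha_1,\beta_1,\gamma_1$ of the polydisc. For the first part I would simply invoke Theorem~\ref{prop:feshsecond}: hypotheses (i) and (ii) of the present theorem are literally hypotheses (i) and (ii) there, so $H(t)$ and $H(w)$ are a Feshbach pair for $\boldsymbol{\chi}^{(1)}_{\rho_1}$ and the Neumann expansion~\eqref{eq:neumannsecond} converges absolutely; combining (i), (ii) gives the operator-norm bound $\| (\boldsymbol{\chib}^{(1)}_{\rho_1})^2 H_{0,0}(t)^{-1} H(w)\| \le (8/\rho_1)\|H(w)\| < 1$, which controls every term in the series.

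For the second part, the first task is to bring each summand of~\eqref{eq:neumannsecond}, after applying the dilation $S_{\rho_1}$, into normal order. This is done by the pull-through formula (Lemma~\ref{lem:pullthrough}) and the generalized Wick theorem (Theorem~\ref{thm:wicktheorem}), exactly as in the first Feshbach step and as described in Appendix~A of~\cite{GriHas09}: one obtains contributions $V^{(1,\rho_1)}_{\umm,\upp,\unn,\uqq}$ that are products of the interaction kernels $W^{m_l,n_l}_{p_l,q_l}[w]$ sandwiched between resolvent factors $F_l := (\boldsymbol{\chib}^{(1)}_{\rho_1})^2/H_{0,0}(t)$ and cutoff factors $\chi(\cdot/\rho_1)$ at the ends, with a projection onto $P^{(2)}_{\rm at}\otimes P_\Omega$. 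The key pointwise estimates needed are: (a) bounds on $B_{\rho_1}^{-1/2} W^{m,n}_{p,q}[w] B_{\rho_1}^{-1/2}$ in terms of $\|w_{m+p,n+q}\|$ (these come from Lemma~\ref{lem:operatornormestimates} combined with the standard $H_f$-relative bounds of Lemma~\ref{lem:estonint}), and (b) bounds on $B_{\rho_1}^{1/2} F_l B_{\rho_1}^{1/2}$ and its $r$-derivative. For (b), since $F_l = (\boldsymbol{\chib}^{(1)}_{\rho_1})^2/H_{0,0}(t)$ and $t$ satisfies $\|t'\|_\infty\le\tau_0$ together with the lower bound $\|(H_{0,0}(t)\upharpoonright\ran\boldsymbol{\chib}^{(1)}_{\rho_1})^{-1}\|\le 8/\rho_1$, one gets $\|B_{\rho_1}^{1/2}F_l B_{\rho_1}^{1/2}\|\lesssim 1$ and $\|B_{\rho_1}^{1/2}\partial_r F_l B_{\rho_1}^{1/2}\|\lesssim (1+\|\chi'\|_\infty+\tau_0)$ — this is the source of the factor $(1+2\|\chi'\|_\infty+8\tau_0)$ appearing in $\alpha_1,\beta_1,\gamma_1$. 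The constant $C_\chi=20\sqrt2$ absorbs the numerical factors from the smooth-projection bounds.

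Next, using $\|H(w)\|\le\|w\|^\#_{\mu,\xi}\le\gamma$ and the injectivity/boundedness of $H$ (Theorem~\ref{thm:injective}), together with the $\|\cdot\|_\mu$-bounds on the kernels $W^{m,n}_{p,q}[w]$ inherited from $\|w\|^\#_{\mu,\xi}$, I would run the geometric-series bookkeeping exactly as in the proof of Theorem~\ref{thm:inimain1}: sum over $L$ and over the multi-index tuples with $m_l+p_l+q_l+n_l$ summing appropriately, using that the number of tuples at level $L$ is at most $4^L$ and that each extra internal contraction gains a factor $(8C_\chi\gamma/\rho_1)<1$. The $(0,0)$-component splits off the free part $\rho_1^{-1}P^{(2)}_{\rm at}t(0)P^{(2)}_{\rm at}$: the value at $0$ gives the $\alpha_1$ bound; the derivative minus $1$ gives $\beta_1$, where the lone extra term $\tau_1$ is the contribution of $\|P^{(2)}_{\rm at}t'P^{(2)}_{\rm at}-1\|$ surviving because the leading term of the expansion is $H_{0,0}(t)$ itself; and the part with $m+n\ge1$, after the scaling estimate $\|s_{\rho_1}(w_{m,n})\|_\mu\le\rho_1^{\mu(m+n)}\|w_{m,n}\|_\mu$, is summable with prefactor $\rho_1^\mu$, yielding $\gamma_1$. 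The condition $\gamma<\rho_1/(8C_\chi)$ is precisely what makes the series ratio $8C_\chi\gamma/\rho_1<1$ so all three series converge.

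The main obstacle is the same as in the first step: correctly tracking the powers of $\rho_1$ and of $\xi$ through the normal-ordering and Wick-contraction procedure, so that the scaling gain $\rho_1^{\mu(m+n)}$ is not eaten by negative powers of $\rho_1$ coming from the resolvent factors $B_{\rho_1}$. The structural reason it works here is that $H(w)$ is bounded \emph{uniformly} (not merely $H_f$-bounded) with bound $\gamma$, so the resolvent sandwiching $B_{\rho_1}^{1/2}F_lB_{\rho_1}^{1/2}$ produces only $O(1)$ factors rather than negative powers of $\rho_1$ — the $\rho_1$-counting is therefore cleaner than in the first step, which is why the result is stated abstractly in terms of $\gamma,\tau_0,\tau_1$ alone.
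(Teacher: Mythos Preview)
Your high-level structure is right: Feshbach pair via Theorem~\ref{prop:feshsecond}, then Wick ordering (Proposition~\ref{prop:secondkernels}), then kernel estimates summed into a geometric series in $C_\chi\gamma/\rho_1$. However, two concrete ingredients of your sketch would not go through as written and differ from what the paper actually does.

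First, the $B_{\rho_1}$ sandwiching you import from the first step is neither used nor justifiable here. In the abstract setting you only know $\|(H_{0,0}(t)\!\upharpoonright\!\ran\boldsymbol{\chib}^{(1)}_{\rho_1})^{-1}\|\le 8/\rho_1$; there is no growth of $t(r)$ in $r$ to trade against $B_{\rho_1}^{1/2}$, so the claim $\|B_{\rho_1}^{1/2}F_lB_{\rho_1}^{1/2}\|\lesssim 1$ is unsupported. The paper (Lemma~\ref{lem:approx:V}) instead estimates directly: $\|F_l\|\le \hat C_\chi/\rho_1$ and $\|\partial_r F_l\|\le \hat C_\chi(\|\chi'\|_\infty+4\|t'\|_\infty)/\rho_1$, yielding a genuine factor $\rho_1^{-(L-1)}$ from the $L-1$ inner resolvents. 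This is compensated not by sandwiching but by the uniform boundedness $\|W^{m,n}_{p,q}[w](r,K)\|\le \|w_{m+p,n+q}\|_\mu/\sqrt{p^pq^q}$ from Lemma~\ref{lem:operatornormestimates}, which contributes $L$ factors that together sum to $(\text{const}\cdot\gamma)^L$.

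Second, your combinatorial bookkeeping is off: the bound ``at most $4^L$ tuples'' is specific to the first step, where $m_l+p_l+n_l+q_l=1$. Here the $p_l,q_l$ range over all of $\N_0$, and the binomial weights $\binom{m_l+p_l}{p_l}\binom{n_l+q_l}{q_l}$ appear. The paper absorbs these via $\binom{m+p}{p}\le 2^{m+p}$, regroups the sum over $(m,n,p,q)$ into a sum over $(m+p,n+q)$ with inner sums $\sum_p(2\xi/\sqrt p)^p\le 2$ (using $\xi\le 1/4$), and only then recovers the factor $(\|w\|_{\mu,\xi}^\#)^L\le\gamma^L$. This is where the $\xi\le 1/4$ hypothesis and the constant $4$ in the final ratio $4C_\chi\gamma/\rho_1$ enter; your argument as written would not produce a convergent sum.
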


In order to prove Theorem \ref{thm:ballprofforsecond} we first   use the Neumann expansion given in \eqref{eq:neumannsecond}  and we
normal order the resulting  expression using the generalized Wick Theorem, stated in the appendix.
The result is given in Proposition  \ref{prop:secondkernels} below, and can be obtained as in  \cite[Theorem 3.7]{BCFS}.
To state it  we recall the following notation
\begin{align*}
	\underline{m} &:= (m_1,...,m_L) \in \N_0^{L }, \\
	|\underline{m}|  &:= m_1 + \cdots + m_L   , \\
	\underline{0} &:= (0,...,0) \in \N_0^{L } ,
\end{align*}
for $L \in \N$.
We define
\begin{align} \label{eq:defofV}
&V_{\umm,\upp,\unn,\uqq}^{(1,\rho_1)}[t,w](r, K^{(|\umm|,|\unn|)}) \nonumber \\
&\qquad  := (P_{\rm at}^{(2)}  \otimes P_\Omega  )  F_0^{(1,\rho_1)}[t](H_f + \rho_1( r +\tilde{r}_0) )
\\
&\qquad\qquad\quad \prod_{l=1}^L \left\{
{W}_{p_l,q_l}^{m_l,n_l}[w](\rho_1 ( r + \tilde{r}_l), \rho_1 K^{(m_l,n_l)})
 F_l^{(1,\rho_1)}[t](H_f + \rho_1 ( r + \tilde{r}_l) ) \right\}  \nonumber\\
 &\qquad\qquad\qquad\qquad( P_{\rm at}^{(2)} \otimes P_\Omega  )  \,. \nonumber
\end{align}
For $w \in \WW_{m+p,n+q}^{[d]}$  we defined
\begin{align} \label{eq:defofWW2}
 &{W}_{p,q}^{m,n}[w](r,K^{(m,n)}) \nonumber\\
  &\qquad := 1_{[0,1]}(H_f)  \int_{{{B}_1}^{p+q}} \frac{d X^{(p,q)}}{|X^{(p,q)}|^{1/2}}  \,a^*(x^{(p)})  \\ 
  &\qquad \qquad\qquad
w_{m+p,n+q}[H_f + r, k^{(m)}, x^{(p)} , \tilde{k}^{(n)} , \tilde{x}^{(q)} ] \,a(\tilde{x}^{(q)})\, 1_{[0,1]}(H_f)\,,  \nonumber
\end{align}
where we use the natural  convention that there is no integration if  $p=q=0$  and that the argument $K^{(m,n)}$ is
dropped if $m=n=0$. Moreover we defined
 for $l=0,L$ the expressions  $F_l^{(1,\rho_1)}[t](r ) := \chi(r/\rho_1)$ and for $l=1,...,L-1$ we have set
\begin{equation*}
F_l^{(1,\rho_1)}[t](r) := F^{(1,\rho_1)}[t](r) =  \frac{ \left(\boldsymbol{\chib}_{\rho_1}^{(1)}(r)\right)^2 }{t(r)} ,
\end{equation*}
and  $\tilde{r}_l$ is defined as in \eqref{eq:rltildedef}.
We note that we used similar notation as in the previous section.

\begin{proposition} \label{prop:secondkernels}
Define
\begin{align}\label{w:1:g:00}
\hat{w}^{(2,{\rho_{1}})}_{ 0,0}(r) := \rho_1^{-1}
 \bigg(      t(\rho_1 r) + \sum_{L=1}^\infty (-1)^{L-1}
\sum_{\substack{ {{\upp,\uqq}} \in \N_0^{L} } }  
V_{{\underline{0},\upp,\underline{0},\uqq}}^{(1,\rho_1)}[t,w](r)\bigg) ,
\end{align}
and for    $M + N \geq 1$ define
\begin{align} \label{eq:defofwmnschlange(1)}
&\hat{w}^{(2,{\rho_{1}})}_{M,N}(r , K^{(M,N)}) \nonumber \\
&\quad := \sum_{L=1}^\infty (-1)^{L+1} \rho_1^{M+N - 1 }\!\!\!\!\!\!
\sum_{\substack{ {\umm,\upp,\unn,\uqq} \in \N_0^{L} \\ |\underline{m}|=M, |\underline{n}| = N  } }
\prod_{l=1}^L \left\{\binom{ m_l + p_l}{ p_l} \binom{ n_l + q_l}{ q_l } \right\}  V_{{\umm,\upp,\unn,\uqq}}^{(1,\rho_1)}[t ,w](r,K^{(M,N)}).
\end{align}
Assume that the right hand side converges with respect to the norm $\| \cdot \|_{\mu,\xi}^\#$.
Let  $ w^{(2,{\rho_{1}})}$ be the symmetrization w.r.t. $k^{(M)}$ and $\tilde{k}^{(N)}$ of $ \hat{w}^{(2,{\rho_{1}})}$.
 Then
$$
	S_{\rho_1} ( F_{\boldsymbol{\chi}^{(1)}_{\rho_1}} (H(t), H(w)))
		= H( w^{(2,{\rho_{1}})}) .
$$
\end{proposition}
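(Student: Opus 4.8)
The plan is to repeat, \emph{mutatis mutandis}, the derivation of Proposition~\ref{intkernelrep1st}; the only genuinely new feature is that the integral kernels are now $\mathcal{L}(\C^{d})$--valued rather than scalar, so the normal ordering below is the evident matrix-valued version of the computation in \cite[Theorem~3.7]{BCFS} (see also Appendix~A of \cite{GriHas09}). Assuming, as we may, that $(H(t),H(w))$ is a Feshbach pair for $\boldsymbol{\chi}^{(1)}_{\rho_1}$ --- this holds, e.g., under hypotheses (i) and (ii) of Theorem~\ref{prop:feshsecond}, the regime in which the proposition is applied --- the Neumann series \eqref{eq:neumannsecond} converges absolutely in operator norm and equals $F_{\boldsymbol{\chi}^{(1)}_{\rho_1}}(H(t),H(w))$.

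First I would substitute the decomposition $H(w)=\sum_{m,n}H_{m,n}(w_{m,n})$ of \eqref{eq:opHw} into \eqref{eq:neumannsecond}. For a fixed word length $L$ this yields a sum over tuples $(\umm,\unn)\in\N_0^{L}\times\N_0^{L}$ of products in which the vertices $H_{m_l,n_l}(w_{m_l,n_l})$ alternate with functions of $H_f$: the two outer factors $\boldsymbol{\chi}^{(1)}_{\rho_1}$ split into the atomic projection $P_{\rm at}^{(2)}$ and the $H_f$--cutoffs $F_0=F_L=\chi(\cdot/\rho_1)$, while the $L-1$ inner factors are the resolvents $(\boldsymbol{\chib}^{(1)}_{\rho_1})^{2}/H_{0,0}(t)=F_l^{(1,\rho_1)}[t](H_f)$. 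Next I would push the annihilation operators of each vertex to the right past the intervening functions of $H_f$ by the pull-through formula (Lemma~\ref{lem:pullthrough}), which is exactly what produces the shifted arguments $\tilde r_l$ of \eqref{eq:rltildedef}, and then apply the generalized Wick theorem (Theorem~\ref{thm:wicktheorem}) to bring the whole product to normal order. Reading off its kernel --- stripping the external legs and taking the Fock-space vacuum --- leaves the result sandwiched between $P_{\rm at}^{(2)}\otimes P_\Omega$ on each side, equal to the kernel $V_{\umm,\upp,\unn,\uqq}^{(1,\rho_1)}[t,w]$ of \eqref{eq:defofV}, where $p_l$ and $q_l$ record how many of the $m_l+p_l$ creation and $n_l+q_l$ annihilation operators of the $l$-th vertex get contracted with neighbouring vertices; the contracted internal momenta are integrated out, turning the $l$-th vertex into $W_{p_l,q_l}^{m_l,n_l}[w]$ as in \eqref{eq:defofWW2} (now with integration range $B_1$ and the $1_{[0,1]}(H_f)$ cutoffs, since everything lives in $\HH_{\rm red}$). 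The number of ways of choosing which operators at the $l$-th vertex remain external is $\binom{m_l+p_l}{p_l}\binom{n_l+q_l}{q_l}$, which is precisely the weight in \eqref{eq:defofwmnschlange(1)}; it is absent from \eqref{eq:defofwmnschlange(0)} only because there each vertex carries a single field operator, so all those binomials equal $1$. The residual $m=n=0$ terms, together with the leading term $H_{0,0}(t)$ of \eqref{eq:neumannsecond}, assemble into \eqref{w:1:g:00}. Finally I would apply $S_{\rho_1}$: by the identity $S_{\rho_1}(H(v))=H(s_{\rho_1}(v))$ (cf.\ \eqref{eq:scaling}) this inserts the factor $\rho_1^{M+N-1}$ and substitutes $\rho_1 r,\rho_1 K^{(M,N)}$ for $r,K^{(M,N)}$, so the kernels just obtained become $\hat{w}^{(2,\rho_1)}_{M,N}$ and $\hat{w}^{(2,\rho_1)}_{0,0}$.

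To conclude, truncate the $L$--sum at $L_0$. The identities above show that the $L_0$--th partial sum of $S_{\rho_1}(F_{\boldsymbol{\chi}^{(1)}_{\rho_1}}(H(t),H(w)))$ equals $H(\hat{w}^{(2,\rho_1,L_0)})$, where $\hat{w}^{(2,\rho_1,L_0)}$ denotes \eqref{w:1:g:00}--\eqref{eq:defofwmnschlange(1)} with the $L$--sum cut at $L_0$. The left-hand side converges in operator norm by the absolute convergence of \eqref{eq:neumannsecond}, while $\hat{w}^{(2,\rho_1,L_0)}\to\hat{w}^{(2,\rho_1)}$ with respect to $\|\cdot\|_{\mu,\xi}^\#$ by hypothesis; since $H:\WW_\xi^{[d]}\to\mathcal{B}(\HH_{\rm red})$ is bounded and injective (Theorem~\ref{thm:injective}), I may pass to the limit inside $H$ and, using that $H$ is unaffected by symmetrization of the kernels, conclude $S_{\rho_1}(F_{\boldsymbol{\chi}^{(1)}_{\rho_1}}(H(t),H(w)))=H(w^{(2,\rho_1)})$.

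I expect the normal-ordering step to be the only one requiring real care --- extracting the combinatorial weights, pinning down the integration domains in \eqref{eq:defofWW2}, and keeping track of the shifted arguments $\tilde r_l$. This is, however, exactly the content of \cite[Theorem~3.7]{BCFS} and \cite[Appendix~A]{GriHas09}: the algebra there is scalar, but nothing changes in the matrix-valued case provided the noncommuting matrix factors are kept in the order in which the corresponding vertices occur in \eqref{eq:defofV}.
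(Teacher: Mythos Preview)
Your proposal is correct and follows essentially the same approach as the paper, which simply states that the proof is analogous to that of Proposition~\ref{intkernelrep1st}. Your write-up in fact spells out in more detail the same truncation-at-$L_0$ argument used there: identify the $L_0$-th partial sum of the Neumann series (after Wick ordering and rescaling) with $H(\hat{w}^{(2,\rho_1,L_0)})$, then pass to the limit using the boundedness of $H$ from Theorem~\ref{thm:injective} and the assumed $\|\cdot\|_{\mu,\xi}^\#$-convergence of the kernels.
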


The proof of this proposition is analogous to the proof of Proposition \ref{intkernelrep1st}.
To prove  Theorem \ref{thm:ballprofforsecond}, we shall need the  estimate given in the following lemma.

\begin{lemma} \label{lem:approx:V} Suppose the assumptions of Theorem \ref{thm:ballprofforsecond} hold.
For fixed $L \in \N$ and $\umm,\upp,\unn,\uqq \in \N_0^{L}$ and $w \in \WW_\xi^{[d]}$ 
we have
\begin{align} \label{eq:basicestseond}
\rho_1^{(|\umm|+|\unn|) - 1 } \| V_{{\umm,\upp,\unn,\uqq}}^{(1,\rho_1)}[t, w]  \|_\mu^\#   
& \leq (L+2) 2^{L/2} \hat{C}_\chi^{L-1}(1 + \| \chi'\|_\infty + 4 \|t' \|_\infty)  \\
& \qquad\qquad \times \rho_1^{(1 + \mu)(|\umm|+|\unn|)-L} \prod_{l=1}^L
		\!\frac{\|w_{m_l+p_l,n_l+q_l}\|_\mu^\# }{\sqrt{ p_l^{p_l}q_l^{q_l}}}.\nonumber
\end{align}
with  $\hat{C}_{\chi} = 20$ and the convention that $p^p := 1$  for $p = 0$.
\end{lemma}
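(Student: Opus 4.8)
The plan is to prove the bound \eqref{eq:basicestseond} by the same strategy employed in the proof of Proposition \ref{initial:thmE22}, namely by inserting factors $B_{\rho_1}^{\pm 1/2}$ with $B_{\rho_1} = H_f + \rho_1$ between successive operators in the product defining $V_{\umm,\upp,\unn,\uqq}^{(1,\rho_1)}[t,w]$, using the elementary bound \eqref{eq:babyestimate1} for the vacuum expectation value, and estimating the resulting three types of factors separately: the interaction pieces ${W}_{p_l,q_l}^{m_l,n_l}[w]$, the resolvent pieces $F_l^{(1,\rho_1)}[t]$ for $l=1,\dots,L-1$, and the two cutoff pieces $F_0^{(1,\rho_1)}[t] = F_L^{(1,\rho_1)}[t] = \chi(\cdot/\rho_1)$. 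First I would record the analogues of Lemma \ref{ini:lemelemestimates}. For the interaction pieces, an estimate of the form
$$
\| B_{\rho_1}^{-1/2} {W}_{p,q}^{m,n}[w](r,K^{(m,n)}) B_{\rho_1}^{-1/2} \| \leq \frac{\|w_{m+p,n+q}(\,\cdot\,,K^{(m,n)},\cdot\,)\|_\mu^{\#}}{\sqrt{p^p q^q}} \, \rho_1^{\frac12(p+q)-1}
$$
(together with its one-sided variants where $1_{[0,1]}(H_f)$ replaces one of the $B_{\rho_1}^{-1/2}$) should follow from Lemma \ref{lem:operatornormestimates}, the pull-through formula, and the support condition (ii) in the definition of $\WW_{m,n}^{[d]}$; this is exactly the step that is carried out in \cite[Lemma 3.10]{BCFS} and is referenced in the remark after Proposition \ref{initial:thmE22}. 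For the resolvent pieces I would show $\| B_{\rho_1}^{1/2} F^{(1,\rho_1)}[t](H_f+\rho_1 s) B_{\rho_1}^{1/2}\| \leq \hat{C}_\chi$ and a corresponding bound of order $\hat{C}_\chi(1+\|\chi'\|_\infty + 4\|t'\|_\infty)$ for $\partial_r$ of it; this uses hypothesis (i) of Theorem \ref{thm:ballprofforsecond}, i.e. $\|(H_{0,0}(t)\restriction \ran \boldsymbol{\chib}^{(1)}_{\rho_1})^{-1}\| \leq 8/\rho_1$, together with $\|B_{\rho_1}^{1/2}(H_f+\rho_1(1+s))^{-1/2}\| \leq 1$ as in Lemma \ref{ini:lemelemestimates}(b). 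The differentiated estimate picks up the $\|t'\|_\infty$ term precisely because $\partial_r F^{(1,\rho_1)}[t]$ contains a term $({\boldsymbol{\chib}_{\rho_1}^{(1)}})^2 t'/t^2$, by analogy with \eqref{partialF}.

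Next I would assemble these pieces. Writing $V = (P_{\rm at}^{(2)}\otimes P_\Omega) F_0 \big(\prod_{l=1}^L W_l F_l\big)(P_{\rm at}^{(2)}\otimes P_\Omega)$ and inserting resolutions $B_{\rho_1}^{1/2}B_{\rho_1}^{-1/2}$ appropriately — absorbing the outermost half-powers into the cutoffs $\chi$ via the one-sided variant of the $W$-estimates, exactly as in the display leading to \eqref{banachfirstbasicder1} — I obtain a pointwise bound on $\|V_{\umm,\upp,\unn,\uqq}^{(1,\rho_1)}[t,w](r,K^{(|\umm|,|\unn|)})\|$ of the form: a constant $(1+\|\chi'\|_\infty+4\|t'\|_\infty)$-type factor (appearing once because only one $F_l$ gets differentiated when we bound $\partial_r V$, and the factor $L$ in $(L+2)$ counts which one), times $\hat C_\chi^{L-1}$, times $2^{L/2}$ coming from the binomial coefficients that are present in \eqref{eq:defofwmnschlange(1)} and bounded via $\binom{m_l+p_l}{p_l}\binom{n_l+q_l}{q_l}\leq 2^{m_l+p_l+n_l+q_l}$ — actually I expect the cleaner route is to keep the binomials inside $V$ as in \eqref{eq:defofV}'s context and note $\prod 2^{(m_l+p_l+n_l+q_l)/2}$; I will need to check that the $2^{L/2}$ in the statement is obtained with the normalization conventions actually used — times $\rho_1^{-L}\rho_1^{\frac12(|\upp|+|\uqq|)}$ from the $W$-factors, times $\prod_l \|w_{m_l+p_l,n_l+q_l}(\cdot,K^{(m_l,n_l)},\cdot)\|_\mu^{\#}/\sqrt{p_l^{p_l}q_l^{q_l}}$. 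The derivative bound $\|\partial_r V\|$ is handled by Leibniz as in \eqref{banachfirstbasicder2}, yielding the same bound with an extra factor $(L+1)$ and the $(1+\|\chi'\|_\infty+4\|t'\|_\infty)$.

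Then I would integrate in $K^{(|\umm|,|\unn|)}$ against $|K|^{-3-2\mu}\,dK$ to pass from the pointwise bound to the $\|\cdot\|_\mu$-norm. By Minkowski's integral inequality (or simply Fubini and the definition of $\|w_{m,n}\|_\mu$), the integral of the product over $B_1^{|\umm|+|\unn|}$ factorizes against the dummy variables that are "external" for each $w_{m_l+p_l,n_l+q_l}$, and one uses the scaling substitution $k\mapsto \rho_1 k$: the arguments of $w$ inside $V$ are $\rho_1 K^{(m_l,n_l)}$ and $\rho_1(r+\tilde r_l)$, so each such substitution produces a Jacobian and a factor $\rho_1^{\mu(m_l+n_l)}$ exactly as in the inequality $\|s_{\rho}(w_{m,n})\|_\mu \leq \rho^{\mu(m+n)}\|w_{m,n}\|_\mu$ recorded after the scaling lemma in Section \ref{sec:Banachspace}. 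Collecting the powers of $\rho_1$: $\rho_1^{-L}$ from the $W$-factors, $\rho_1^{\frac12(|\upp|+|\uqq|)}$ likewise, $\rho_1^{\mu(|\umm|+|\unn|)}$ from scaling of the kernels, and the overall prefactor $\rho_1^{(|\umm|+|\unn|)-1}$ on the left of \eqref{eq:basicestseond}, together with $|\umm|+|\unn|+|\upp|+|\uqq|=L$, combine to give precisely $\rho_1^{(1+\mu)(|\umm|+|\unn|)-L}$ after moving the left prefactor to the right — I would verify this arithmetic carefully since it is the place where an off-by-one in the exponent is easiest to make. Adding the bounds for $\|V\|_\mu$ and $\|\partial_r V\|_\mu$ yields $\|V\|_\mu^{\#}$ and hence \eqref{eq:basicestseond}.

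The main obstacle I anticipate is bookkeeping rather than conceptual: getting the combinatorial factor $2^{L/2}$ and the single factor $(1+\|\chi'\|_\infty+4\|t'\|_\infty)$ (as opposed to an $L$-th power of it) to come out with the precise constants claimed, and tracking the powers of $\rho_1$ so that the final exponent is exactly $(1+\mu)(|\umm|+|\unn|)-L$. The structural point that makes the single factor of $(1+\|\chi'\|_\infty+4\|t'\|_\infty)$ possible is that in estimating $\partial_r V$ only one of the $L+1$ functions $F_0,\dots,F_L$ is differentiated in each term of the Leibniz expansion, and there are at most $L+1$ such terms, which is why the prefactor is $(L+2)$ rather than something exponential in $L$; this is the same mechanism already used for \eqref{banachfirstbasicder2}. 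Because all the individual ingredients are either direct analogues of Lemmas \ref{lem:estonint}, \ref{lem:firstfreeresolvbound}, \ref{ini:lemelemestimates} and Proposition \ref{initial:thmE22}, or are the matrix-valued versions of \cite[Theorem 3.7 and Lemma 3.10]{BCFS}, the proof should be a careful adaptation rather than requiring a new idea.
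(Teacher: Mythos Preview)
Your approach tries to replay the $B_{\rho_1}^{\pm 1/2}$--sandwiching of Proposition~\ref{initial:thmE22}, but in the second step this machinery is both unnecessary and, as written, leads to two genuine gaps.

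First, the identity $|\umm|+|\unn|+|\upp|+|\uqq|=L$ that you invoke to reconcile the $\rho_1$--exponent is \emph{false} here. In the first step it held because of the constraint $m_l+p_l+n_l+q_l=1$ coming from the linear form of the interaction $gW$; in the second step the sum in \eqref{eq:defofwmnschlange(1)} carries no such constraint (cf.\ the remark right after the statement of Lemma~\ref{lem:approx:V}). So your power--counting does not close, and indeed with your claimed contributions $\rho_1^{-L}\rho_1^{(|\upp|+|\uqq|)/2}\rho_1^{\mu(|\umm|+|\unn|)}$ one does not obtain $\rho_1^{(1+\mu)(|\umm|+|\unn|)-L}$ even if the identity were true.

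Second, the bound $\|B_{\rho_1}^{1/2} F^{(1,\rho_1)}[t](H_f+\rho_1 s) B_{\rho_1}^{1/2}\|\leq \hat{C}_\chi$ does not follow from hypothesis~(i). In the first step the analogue worked because $H_0-z$ has the explicit form $H_{\rm at}+H_f-z$ and hence grows like $H_f$; here $t$ is abstract and you are only told $\|t^{-1}\|\leq 8/\rho_1$ on $\ran\boldsymbol{\chib}^{(1)}_{\rho_1}$, which gives $\|B_{\rho_1}^{1/2} F B_{\rho_1}^{1/2}\|=O(\rho_1^{-1})$, not $O(1)$.

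The paper's route avoids both problems by dispensing with the sandwiching altogether: since the operators ${W}_{p,q}^{m,n}[w]$ are already bounded (Lemma~\ref{lem:operatornormestimates}), one simply uses $\|F_l^{(1,\rho_1)}[t]\|\leq \hat{C}_\chi/\rho_1$ and $\|W_{p_l,q_l}^{m_l,n_l}[w](\cdot,\rho_1 K)\|$ directly, obtaining $\hat{C}_\chi^{L-1}\rho_1^{-L+1}$ for $\|V\|_\infty$; the factor $\rho_1^{\mu(|\umm|+|\unn|)}$ then enters via the substitution \eqref{eq:iteratedopineq}, and the arithmetic $(|\umm|+|\unn|-1)+(-L+1)+\mu(|\umm|+|\unn|)=(1+\mu)(|\umm|+|\unn|)-L$ needs no relation between $|\upp|+|\uqq|$ and $L$. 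Finally, the $2^{L/2}$ does \emph{not} come from the binomial coefficients (those sit outside $V$ in \eqref{eq:defofwmnschlange(1)} and are absorbed later in the proof of Theorem~\ref{thm:ballprofforsecond}); it arises from the elementary inequality $(a+b)^2\leq 2(a^2+b^2)$ applied factor by factor when bounding $\|W[w]\|+\rho_1\|W[\partial_r w]\|$ inside the $\|\partial_r V\|_\mu$ estimate.
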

\begin{remark}{\rm
We note that in contrast to \cite{BCFS} we do not have in \eqref{eq:defofwmnschlange(1)}
and \eqref{w:1:g:00} the conditions  $m_l+p_l+q_l+n_l \geq 1$ and $p_l+q_l \geq 1$, respectively.
The proof of Lemma \ref{lem:approx:V} is still similar to the proof of Lemma 3.10 in
\cite{BCFS} 
however we have to take into account more terms. These terms are hidden in our notation,
see Chapter \ref{sec:Banachspace}.} 
\end{remark}
\begin{proof}[Proof of Lemma \ref{lem:approx:V}]
We start by estimating the resolvents. Let $0\leq u + \rho_1 r \leq 1$ for $u, r \geq 0$.
Then for $l=0$ and $l=L$ we have
$$
|F_l^{(1,\rho_1)}[t](u+\rho_1 r)| \leq 1,
\qquad 
|\partial_r F_l^{(1,\rho_1)}[t](u+\rho_1 r)|
	\leq \|\chi'\|_\infty
$$
and for $l=1,...,L-1$,
\begin{align*}
\|F_l^{(1,\rho_1)}[t](u + \rho_1 r)\|  \leq \left\|\frac{ \left(\boldsymbol{\chib}_{\rho_1}^{(1)}(u + \rho_1 r)\right)^2 }{t(u + \rho_1 r)}\right\|
 \leq \frac{8}{\rho_1}
	 \leq \frac{\widehat{C}_\chi}{\rho_1}  ,
\end{align*}
\begin{align*}
 & \|\partial_r F_l^{(1,\rho_1)}[t](u + \rho_1 r)\|   \\
& \qquad  \leq
	 \left\|\frac{ 2 \boldsymbol{\chib}_{\rho_1}^{(1)}(u + \rho_1 r) \partial_r \boldsymbol{\chib}_{\rho_1}^{(1)}(u + \rho_1 r)}{t(u + \rho_1 r)}\right\|
 + \left\|\frac{ \left(\boldsymbol{\chib}_{\rho_1}^{(1)}(u + \rho_1 r)\right)^2 \rho_1 t'(u + \rho_1 r)}{(t(u + \rho_1 r))^2}\right\|
	\\
	& \qquad \leq  \frac{16}{\rho_1} \|\chi'\|_\infty + \frac{64}{\rho_1} \|t'\|_\infty  \leq \frac{\widehat{C}_\chi}{\rho_1}( \|\chi'\|_\infty + 4 \|t'\|_\infty  ) ,
\end{align*}
where we used a similar equation as \eqref{partialF}. \\
Now we estimate $\| V_{{\umm,\upp,\unn,\uqq}}^{(1,\rho_1)}[t, w] \|$ and
$\| \partial_r V_{{\umm,\upp,\unn,\uqq}}^{(1,\rho_1)}[t, w] \|$
using a variant of \eqref{eq:babyestimate1}.
\begin{align}
&	\|V_{{\umm,\upp,\unn,\uqq}}^{(1,\rho_1)}[t, w] (r, K^{(|\umm|,|\unn|)}) \|\nonumber \\
	&\qquad\leq \prod_{l = 0}^L \|F_l^{(1,\rho_1)}[t](H_f + \rho_1 ( r + \tilde{r}_l) )\|
		\prod_{l=1}^L \|{W}_{p_l,q_l}^{m_l,n_l}[w](\rho_1 ( r + \tilde{r}_l), \rho_1 K^{(m_l,n_l)})\|	\nonumber \\
 &\qquad\leq \widehat{C}_\chi^{L-1} \rho_1^{-L+1} \prod_{l=1}^L \left\|{W}_{p_l,q_l}^{m_l,n_l}[w](\rho_1 ( r + \tilde{r}_l), \rho_1 K^{(m_l,n_l)})\right\|. \label{eq:estonVinf}
	\end{align}
Similarly we get with Leibniz' rule
\begin{align}
	\|&\partial_r  V_{{\umm,\upp,\unn,\uqq}}^{(1,\rho_1)}[t, w]  (r, K^{(|\umm|,|\unn|)}) \|  \nonumber \\
	&\leq \left\{\sum_{j=0}^L \|\partial_r F_j^{(1,\rho_1)}[t](H_f + \rho_1 ( r + \tilde{r}_l) )\|
		\prod_{\substack{l = 0 \\ l \neq j}}^L \|F_l^{(1,\rho_1)}[t](H_f + \rho_1 ( r + \tilde{r}_l) )\| \right\}  \nonumber \\
			& \qquad \qquad \times \prod_{l=1}^L \|{W}_{p_l,q_l}^{m_l,n_l}[w](\rho_1 ( r + \tilde{r}_l), \rho_1 K^{(m_l,n_l)})\|
	 \nonumber	\\ &\qquad + \prod_{l = 0}^L \|F_l^{(1,\rho_1)}[t](H_f + \rho_1 ( r + \tilde{r}_l) )\| \nonumber \\
		& \qquad\qquad \qquad\quad \times \left\{\sum_{j=1}^L \|\partial_r {W}_{p_l,q_l}^{m_l,n_l}[w](\rho_1 ( r + \tilde{r}_l), \rho_1 K^{(m_l,n_l)})\| \right.  \nonumber \\
			 & \quad\qquad\qquad\qquad \qquad \qquad \qquad \left. \times \prod_{\substack{l = 1 \\ l \neq j}}^L \|{W}_{p_l,q_l}^{m_l,n_l}[w](\rho_1 ( r + \tilde{r}_l), \rho_1 K^{(m_l,n_l)})\|\right\}	\nonumber    \\
&\leq (L+1) \widehat{C}_\chi^{L-1} ( \| \chi' \|_\infty + 4 \| t' \|_\infty)  \rho_1^{-L+1} \nonumber \\
&\qquad  \prod_{l=1}^L \|{W}_{p_l,q_l}^{m_l,n_l}[w](\rho_1 ( r + \tilde{r}_l), \rho_1 K^{(m_l,n_l)})\|  \nonumber \\
		&\qquad\quad + \widehat{C}_\chi^{L-1} \rho_1^{-L+1} \left\{\rho_1 \sum_{j=1}^L {\|{W}_{p_l,q_l}^{m_l,n_l}[\partial_r w](\rho_1 ( r + \tilde{r}_l), \rho_1 K^{(m_l,n_l)})\|} \right. \nonumber   \\
		 & \quad\qquad \qquad\qquad \qquad \qquad \qquad \left. \times \prod_{\substack{l = 1 \\ l \neq j}}^L \|{W}_{p_l,q_l}^{m_l,n_l}[w](\rho_1 ( r + \tilde{r}_l), \rho_1 K^{(m_l,n_l)})\|\right\}  \nonumber \\
&   \leq (L+1) \widehat{C}_\chi^{L-1} ( 1 + \| \chi' \|_\infty + 4 \| t' \|_\infty)  \rho_1^{-L+1}  \nonumber \\
 & \quad \times \prod_{l=1}^L \left\{ \|{W}_{p_l,q_l}^{m_l,n_l}[w](\rho_1 ( r + \tilde{r}_l), \rho_1 K^{(m_l,n_l)})\| + \rho_1  {\|{W}_{p_l,q_l}^{m_l,n_l}[\partial_r w](\rho_1 ( r + \tilde{r}_l), \rho_1 K^{(m_l,n_l)})\|} \right\} \! .
\label{eq:estonVinfder}
	\end{align} 
To estimate the $\|\cdot \|_\mu$ norm we shall use the following estimate
\begin{align}
& \int_{{B}_1^{m_l+n_l}} \sup_{r \in [0,1]}
		\left\|{W}_{p_l,q_l}^{m_l,n_l}[w]( r, \rho_1 K^{(m_l,n_l)})\right\|^2
		\frac{d {K}^{(m_l,n_l)}}{|{K}^{(m_l,n_l)}|^{3 + 2 \mu}}   \nonumber   \\
	& \quad  =  \rho_1^{2 \mu(m_l +n_l)} \int_{{B}_{\rho_1}^{m_l+n_l}} \sup_{r \in [0,1]}
		\left\|{W}_{p_l,q_l}^{m_l,n_l}[w]( r,  K^{(m_l,n_l)})\right\|^2
		\frac{d {K}^{(m_l,n_l)}}{|{K}^{(m_l,n_l)}|^{3 + 2 \mu}}  \nonumber  \\
& \quad
	 \leq     \rho_1^{2 \mu(m_l +n_l)}  \frac{1}{p_l^{p_l}q_l^{q_l}} \int_{{B}_{\rho_1}^{m_l+n_l}}
		\|w_{m_l+p_l,n_l+q_l}(\cdot  , \, \cdot  \, , K^{(m_l,n_l)})\|_\mu^2
			\frac{d {K}^{(m_l,n_l)}}{|{K}^{(m_l,n_l)}|^{3 + 2 \mu}}  \nonumber \\
&  \quad \leq  \rho_1^{2 \mu(m_l+n_l)} \frac{1}{p_l^{p_l}q_l^{q_l}}  \|w_{m_l+p_l,n_l+q_l} \|_\mu^2  , \label{eq:iteratedopineq}
\end{align}
where the first equality follows by the substitution formula for integrals, the second line follows from the estimate in Lemma  \ref{lem:operatornormestimates},
and the last line follows from Fubini's theorem.
Using \eqref{eq:estonVinf} together with  \eqref{eq:iteratedopineq} we find
\begin{align}
 \|V_{{\umm,\upp,\unn,\uqq}}^{(1,\rho_1)}[t, w]\|_\mu 
	&= \left( \int_{{B}_1^{|\umm|+|\unn|}}
  \| V_{{\umm,\upp,\unn,\uqq}}^{(1,\rho_1)}[t, w]  (K^{(|\umm|,|\unn|)}) \|_{\infty}^2
  \frac{d {K}^{(|\umm|,|\unn|)}}{|{K}^{(|\umm|,|\unn|)}|^{3 + 2 \mu}}  \right)^{1/2} \nonumber  \\
  & \leq  \widehat{C}_\chi^{L-1} \rho_1^{-L+1} 
  \nonumber \\ 
		& \qquad \times \prod_{l=1}^L
		\left\{\int_{{B}_1^{m_l+n_l}} \sup_{r \in [0,1]}
		\left\|{W}_{p_l,q_l}^{m_l,n_l}[w]( r, \rho_1 K^{(m_l,n_l)})\right\|^2
		\frac{d {K}^{(m_l,n_l)}}{|{K}^{(m_l,n_l)}|^{3 + 2 \mu}} \right\}^{1/2}
\nonumber	\\ & \leq \widehat{C}_\chi^{L-1} \rho_1^{-L+1} \rho_1^{\mu(|\umm|+|\unn|)}
	\prod_{l=1}^L \left\{\frac{1}{\sqrt{p_l^{p_l}q_l^{q_l}}}  \|w_{m_l+p_l,n_l+q_l} \|_\mu \right\} \label{eq:estonVoneinf}  .
\end{align}
Similarly using \eqref{eq:estonVinfder} together with  \eqref{eq:iteratedopineq} we find
\begin{align}
 \|&\partial_r V_{{\umm,\upp,\unn,\uqq}}^{(1,\rho_1)}[t, w]  (K^{(|\umm|,|\unn|)}) \|_{\mu} \nonumber \\
	&= \left( \int_{{B}_1^{|\umm|+|\unn|}}
  \|  \partial_r V_{{\umm,\upp,\unn,\uqq}}^{(1,\rho_1)}[t, w]  (K^{(|\umm|,|\unn|)}) \|_{\infty}^2
  \frac{d {K}^{(|\umm|,|\unn|)}}{|{K}^{(|\umm|,|\unn|)}|^{3 + 2 \mu}}  \right)^{1/2} \nonumber  \\
  & \leq (L+1) \widehat{C}_\chi^{L-1} ( 1 + \| \chi' \|_\infty + 4 \| t' \|_\infty)  \rho_1^{-L+1} \nonumber \\
	&\quad \times \prod_{l=1}^L 
		\bigg\{\int\displaylimits_{{B}_1^{m_l+n_l}} \sup_{r \in [0,1]}  \left\{ \|{W}_{p_l,q_l}^{m_l,n_l}[w]( r , \rho_1 K^{(m_l,n_l)})\| 	\right.	\nonumber \\
	&	\qquad\qquad\qquad\qquad\quad\;
	\left. + \rho_1  \,{\|{W}_{p_l,q_l}^{m_l,n_l}[\partial_r w](r , \rho_1 K^{(m_l,n_l)})\|} \right\}^2 \frac{d {K}^{(m_l,n_l)}}{|{K}^{(m_l,n_l)}|^{3 + 2 \mu}} \bigg\}^{1/2}
	\nonumber \\
  & \leq (L+1) \widehat{C}_\chi^{L-1} ( 1 + \| \chi' \|_\infty + 4 \| t' \|_\infty)  \rho_1^{-L+1} \nonumber \\
 & \quad  \times \prod_{l=1}^L
		\bigg\{2  \int\displaylimits_{{B}_1^{m_l+n_l}}  \bigg( \sup_{r \in [0,1]}   \|{W}_{p_l,q_l}^{m_l,n_l}[w]( r , \rho_1 K^{(m_l,n_l)})\|^2  
\nonumber \\
&	\qquad\qquad\qquad\;
	+ \rho_1    \sup_{r \in [0,1]} {\|{W}_{p_l,q_l}^{m_l,n_l}
	[\partial_r w](r , \rho_1 K^{(m_l,n_l)})\|}^2 \bigg)
	\frac{d {K}^{(m_l,n_l)}}{|{K}^{(m_l,n_l)}|^{3 + 2 \mu}}  \bigg\}^{1/2}
\nonumber \\
& \leq ( L+1) \widehat{C}_\chi^{L-1}
	( 1 + \| \chi' \|_\infty + 4 \| t' \|_\infty)\rho_1^{-L+1}
	\rho_1^{\mu(|\umm|+|\unn|)} 
	\prod_{l=1}^L \left\{\frac{\sqrt{2}}{\sqrt{p_l^{p_l}q_l^{q_l}}}
	\|w_{m_l+p_l,n_l+q_l} \|_\mu^\# \right\}  .\label{eq:estonVoneinf2}
\end{align}
Adding Inequalities \eqref{eq:estonVoneinf} and \eqref{eq:estonVoneinf2}
 establishes the  estimate  in \eqref{eq:basicestseond}. 
\end{proof}

The proof of  Theorem  \ref{thm:ballprofforsecond} is  similar to the proof of Theorem 3.8. given in \cite{BCFS}.

\vspace{0.2cm}
\enlargethispage{0.5cm}
\noindent
{\it Proof of Theorem  \ref{thm:ballprofforsecond}.}
The operators $H(t)$ and $H(w)$ are a Feshbach pair for $ \boldsymbol{\chib}^{(1)}_{\rho_1} $
by Theorem  \ref{prop:feshsecond}.
First  we use  Lemma \ref{lem:approx:V} to  estimate \eqref{eq:defofwmnschlange(1)}. 
We shall write  $C_t :=  1 + 2 \|\chi'\|_\infty + 8 \tau_0.$
Since  $C_\chi :=  \sqrt{2} \hat{C}_\chi$  and $\hat{C}_\chi \geq 1$, we have  $ 2^{L/2} \hat{C}_\chi^{L-1} \leq C_\chi^L$. Moreover, we   use that $\binom{m+p}{ p} \leq 2^{m+p}$.
With this we find  for $M+ N \geq 1$
\begin{align}
 &\|w^{(2,{\rho_{1}})}_{M,N}\|_\mu^\# \nonumber \\
& \leq   \sum_{L=1}^\infty
\sum_{\substack{ {\umm,\upp,\unn,\uqq} \in \N_0^{L} \\   |\umm|=M, \, |\unn|=N  } }
\prod_{l=1}^L \left\{\binom{ m_l + p_l}{ p_l} \binom{ n_l + q_l}{ q_l } \right\} 
(L+2) C_\chi^{L} C_t \rho_1^{(1+\mu)(M+N)-L}
\frac{ \| w_{m_l+p_l,n_l+q_l} \|_\mu^\#  }{ \sqrt{ p_l^{p_l} q_l^{q_l}}} \nonumber \\
& \leq   \sum_{L=1}^\infty (L+2)   \left(\frac{C_\chi}{\rho_1}\right)^L C_t
\left(2\rho_1^{(1+\mu)}\right)^{M+N} 
\sum_{\substack{ {\umm,\upp,\unn,\uqq} \in \N_0^{L}\\   |\umm|=M, \, |\unn|=N  } }
\prod_{l=1}^L \left\{\left(\frac{2}{\sqrt{p_l}}\right)^{p_l}
\left(\frac{2}{\sqrt{q_l}}\right)^{q_l} \| w_{m_l+p_l,n_l+q_l} \|_\mu^\# \right\}.
\end{align} 
Inserting this in $\|\cdot\|_{\mu,\xi}^\#$ we obtain the following bound 
\begin{align} \label{3.100}
 \|&(w^{(2,{\rho_{1}})}_{M,N})_{M+ N  \geq 1}\|_{\mu,\xi}^\# \nonumber\\
& \leq  C_t  \sum_{M+N \geq 1}{\xi^{-(M+N)}
\|w^{(2,{\rho_{1}})}_{M,N}\|_\mu^\#} \nonumber  \\
& \leq  2 C_t  \rho_{1}^{(1+\mu)}\sum_{L = 1}^\infty(L+2)
\left(\frac{C_\chi}{\rho_1}\right)^L \sum_{M+ N \geq 1}
 \nonumber  \\
 & \qquad \quad  
	\sum_{\substack{ {\umm,\upp,\unn,\uqq} \in \N_0^{L} \\  |\umm|=M, \, |\unn|=N } }
	\prod_{l=1}^L \left\{\left(\frac{2 \xi }{\sqrt{p_l}}\right)^{p_l}
	\left(\frac{2 \xi}{\sqrt{q_l}}\right)^{q_l} \xi^{-(m_l+p_l+n_l+q_l)} \| w_{m_l+p_l,n_l+q_l} \|_\mu^\# \right\}
	\nonumber \\
& \leq  2 C_t  \rho_1^{(1+\mu)}\sum_{L = 1}^\infty(L+2)
	\left(\frac{C_\chi}{\rho_1}\right)^L \nonumber   \\
	& \qquad \quad \left\{\sum_{m,n,p,q \in \N_0 }\left(\frac{2 \xi }{\sqrt{p}}\right)^{p}
	\left(\frac{2 \xi}{\sqrt{q}}\right)^{q} \xi^{-(m+p+n+q)} \| w_{m+p,n+q} \|_\mu^\# \right\}^L \nonumber   \\
& \leq  2 C_t  \rho_1^{(1+\mu)}\sum_{L = 1}^\infty(L+2)
	\left(\frac{C_\chi}{\rho_1}\right)^L \nonumber   \\
	& \qquad \quad \left\{\sum_{m,n \in \N_0}
	\left(\sum_{p=0}^m\left(\frac{2\xi}{\sqrt{p}}\right)^p\right)
	\left(\sum_{q=0}^m\left(\frac{2\xi}{\sqrt{q}}\right)^q\right)
	\xi^{-(m+n)}\|w_{m,n}\|_\mu^\# \right\}^L \nonumber   \\
& \leq 2 C_t  \rho_1^{(1+\mu)}\sum_{L = 1}^\infty(L+2)
	\left(\frac{C_\chi}{\rho_1}\right)^L 4^{L} \left( \|w \|_{\mu,\xi}^\#\right)^L 	, 
 \end{align}
 where  in the second last inequality we used a substitution of summation variables and
 in the last inequality we used that $\sum_{p=0}^\infty\left(\frac{2\xi}{\sqrt{p}}\right)^p \leq \sum_{p=0}^\infty\left(2\xi\right)^p = \frac{1}{1-2\xi}  \leq 2$ since $0 < \xi \leq 1/4$.
By  the assumptions of the theorem we have
$$ \|w \|_{\mu,\xi}^\# \leq  \gamma .  $$
Inserting this into \eqref{3.100} we find
\begin{align} \label{3.100v2}
 \|(w^{(2,{\rho_{1}})}_{M,N})_{M+N \geq 1}\|_{\mu,\xi}^\#
   &\leq 2  C_t  \rho_1^{(1+\mu)}\sum_{L = 1}^\infty(L+2)
   \left(\frac{4 C_\chi\gamma}{\rho_1} \right)^L \nonumber  \\
   &\leq 24  C_t  \rho_1^{(1+\mu)} \frac{C_\chi \gamma}{\rho_1}
   	\left(1 - \frac{4 C_\chi \gamma}{\rho_1}\right)^{-2} ,
\end{align}
where we used that by assumption
$$
	0 \leq \frac{4 C_\chi \gamma}{\rho_1} < 1,
$$
and moreover  we used that
$\sum_{L=1}^\infty (L+2) a^L = \sum_{L=3}^\infty  L  a^{L-2} \!
	= a^{-1} \frac{d}{da} \sum_{L=3}^\infty   a^{L} = a^{-1} \frac{d}{da} \frac{a^3}{1-a} \leq \frac{3a}{(1-a)^2}$
for $a \in (0,1)$. \\
It remains to  estimate  \eqref{w:1:g:00}. To this end we
recall that for $\umm = \unn = \underline{0}$ we see from Lemma \ref{lem:approx:V} that
\begin{equation}
\rho_1^{- 1} \| V_{{\underline{0},\upp,\underline{0},\uqq}}^{(1,\rho_1)}[t, w]  \|_\mu^\# \leq (L+2) C_\chi^{L} C_t \rho_1^{-L} \prod_{l=1}^L
\frac{ \| w_{p_l,q_l} \|_\mu^\# }{ \sqrt{ p_l^{p_l} q_l^{q_l}}}.
\end{equation}
Using this we find for the derivative
\begin{align}
	\sup_{r\in[0,1]} &| \partial_r \hat{w}^{(2,{\rho_{1}})}_{0,0}(r) - 1 | \nonumber \\
	 &= \sup_{r\in[0,1]} \bigg| \rho_1^{-1} \partial_r t(\rho_1 r) + \rho_1^{-1}
	 \Big(\sum_{L=1}^\infty (-1)^{L-1} \sum_{\substack{ \upp,\uqq \in \N_0^{L}} }
	 \partial_r V_{{\underline{0},\upp,\underline{0},\uqq}}^{(1,\rho_1)}[t, w](r) \Big) - 1 \bigg| \nonumber \\
	 &\leq   \|t' - 1\|_\infty
		+  \rho_1^{-1} \sum_{L=1}^\infty \sum_{\substack{ \upp,\uqq \in \N_0^{L}} } \| V_{{\underline{0},\upp,\underline{0},\uqq}}^{(1,\rho_1)}[t, w]  \|_\mu^\#  \nonumber \\
	 &\leq   \tau_1
		+\sum_{L=1}^\infty (L+2) C_\chi^{L} C_t \rho_1^{-L} \sum_{\substack{ \upp,\uqq \in \N_0^{L}} } \prod_{l=1}^L \frac{ \| w_{p_l,q_l} \|_\mu^\# }{ \sqrt{ p_l^{p_l} q_l^{q_l}}} \nonumber \\
	 &\leq   \tau_1
		+ \sum_{L=1}^\infty (L+2) C_\chi^{L}C_t  \rho_1^{-L}
			\left( \sum_{p,q\in \N_0} \frac{ \| w_{p,q} \|_\mu^\# }{\sqrt{ p^{p} q^{q}}}\right)^L \nonumber \\
	 &\leq   \tau_1
		+ \sum_{L=1}^\infty  (L+2) C_\chi^{L} C_t  \rho_1^{-L}
			\big[   \| (w_{m,n})_{m+n \geq 0} \|_{\mu,\xi}^\#\big]^L \nonumber \\
	 &\leq   \tau_1
		+  \sum_{L=1}^\infty (L+2) C_t  \left( \frac{ C_\chi  \gamma}{\rho_1}     \right)^L \nonumber  \\
 &\leq \rho_1^{-1} \tau_1   +  3 C_t \frac{ C_\chi  \gamma}{\rho_1}  (1 - \frac{ C_\chi  \gamma}{\rho_1})^{-2} ,    \label{3.100v3}
\end{align}
where we used that by assumption
\begin{equation} \label{eq:finalestbound}
0 \leq \frac{ C_\chi  \gamma}{\rho_1} <  1 .
\end{equation}
Analogously we estimate
\begin{align}
|\hat{w}^{(2,{\rho_{1}})}_{0,0}(0) -  \rho_1^{-1} t(0)  |
	 &\leq 	| \rho_1^{-1}   \sum_{L=1}^\infty (-1)^{L-1}
		\sum_{\substack{ \upp,\uqq \in \N_0^{L}} }
			V_{{\underline{0},\upp,\underline{0},\uqq}}^{(1,\rho_1)}[t, w](0) | \nonumber  \\
	&\leq \rho_1^{-1} \sum_{L=1}^\infty \sum_{\substack{ \upp,\uqq \in \N_0^{L}} }
	         \| V_{{\underline{0},\upp,\underline{0},\uqq}}^{(1,\rho_1)}[t, w] \|_\infty  \nonumber  \\
&\leq 3 C_t  \frac{ C_\chi  \gamma}{\rho_1}  (1 - \frac{ C_\chi  \gamma}{\rho_1})^{-2} , \label{3.100v4}
\end{align}
provided  \eqref{eq:finalestbound} holds. The claim now follows from Equations \eqref{3.100v2}, \eqref{3.100v3}  and \eqref{3.100v4}.
\qed

\section{Proof of the Main Theorem}
 \label{sec:prooofmain}

Let $B_R$ denote the closed ball with radius $R$.

\begin{theorem}[Analyticity Theorem of Griesemer-Hasler \cite{GriHas09}] \label{thm:griehas} Let  $\mu > 0$, and  let   $\rho_{\rm GH} \in (0,1)$ be sufficiently small.
Then for  $\xi \in (0,1)$ sufficiently small    there exist positive constants $\alpha_0$, $\beta_0$
 and $\gamma_0$ such that the following holds. \\
Let     $\VV $ be an open subset of  $\C$ and let $E_{\rm at} : \VV \to \C$ be an analytic function.
 Let $\UU$ be an open subset of $\C \times \C$  such that for all $s \in \VV$ we have
\begin{equation}  \label{eq:condonuu}
\{ s \} \times B_{\rho_{\rm GH}}(E_{\rm at}(s)) \subset \UU    .
\end{equation}
Suppose  $H(\cdot,\cdot)$ is an $\mathcal{L}(\FF)$ valued analytic function on  $\UU$,
such that  for all $(s,\zeta) \in \UU$
 $$
 H(s,\zeta) - (E_{\rm at}(s) - \zeta ) \in \mathcal{B}^{[1]}(\alpha_0,\beta_0,\gamma_0) .
 $$
 Then there
 exist  analytic functions $\zeta_\infty : \VV \to  B_{\rho_{\rm GH}}(s)$
 and   $\psi_\infty : \VV \to \HH_{\rm red}$, nowhere vanishing, such that    for all $s \in \VV$
 $$
 H(s,\zeta_\infty(s))\psi_\infty(s) = 0 .
 $$
 If furthermore $H(s,\zeta)^* = H(\overline{s},\overline{\zeta})$ on $\mathcal{U}$,  then for real $s \in \VV$ the operator  $H(s,\lambda)$ has a bounded inverse  for all  $\lambda  \in  (E_{\rm at}(s)-\rho_{GH} , \zeta_\infty(s))$.
  \end{theorem}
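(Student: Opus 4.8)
The plan is to reprove this by running the operator–theoretic renormalization group of \cite{BCFS,BacFroSig98-1} in its analytic formulation, so I will only sketch the skeleton. Introduce a renormalization map $\mathcal{R}_\rho$ acting on operators of the form $H(w)$ on $\HH_{\rm red}$ with free part $w_{0,0}$: writing $\chi_\rho := \chi(H_f/\rho)$, $\chib_\rho := \chib(H_f/\rho)$ for the field–energy cutoffs, set
\[
\mathcal{R}_\rho(H(w)) := S_\rho\big( F_{\chi_\rho}(H_{0,0}(w_{0,0}),\, H(w) - H_{0,0}(w_{0,0})) \big).
\]
Two ingredients are needed. \emph{(a) A Feshbach pair criterion:} if $w - w_{0,0}$ is small in $\|\cdot\|_{\mu,\xi}^\#$ and $w_{0,0}$ is close to $r\mapsto(\text{const})+r$, then $H_{0,0}(w_{0,0})$ is boundedly invertible on $\ran \chib_\rho$ with inverse of size $O(\rho^{-1})$ and $\|H(w-w_{0,0})\| < \rho/8$, so a hypothesis of the type of Theorem~\ref{prop:feshsecond} holds. \emph{(b) A Banach–space contraction estimate}, as in Theorem~\ref{thm:ballprofforsecond} and \cite[Thm.~3.8]{BCFS}: if $H(w)-(E-\zeta)\in\mathcal{B}^{[1]}(\alpha,\beta,\gamma)$ then for suitable affine $E',\zeta'$ one has $\mathcal{R}_\rho(H(w))-(E'-\zeta')\in\mathcal{B}^{[1]}(\alpha',\beta',\gamma')$ with $\gamma'\le C\rho^\mu\gamma$, $\beta'\le\beta+C\rho^\mu\gamma$, and $\alpha'=\rho^{-1}\alpha+O(\gamma)$. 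The decisive gain is the factor $\rho^\mu$ from the kernel rescaling $s_\rho(w_{m,n})$ for $m+n\ge1$, which makes $\gamma$ strictly contract and $\beta$ converge. One then fixes $\rho=\rho_{\rm GH}$, $\xi$, and $\alpha_0,\beta_0,\gamma_0$ so small that $\mathcal{B}^{[1]}(\alpha_0,\beta_0,\gamma_0)$ is carried (after the energy shift) into a smaller polydisc of the same shape, uniformly.

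Next one controls the expanding energy direction by a codimension–one selection argument. On $\VV$, keep $\zeta$ as a free parameter in $B_{\rho_{\rm GH}}(E_{\rm at}(s))$ and iterate $H^{(0)}(s,\zeta):=H(s,\zeta)$, $H^{(n)}(s,\zeta):=\mathcal{R}_{\rho_{\rm GH}}(H^{(n-1)}(s,\zeta))$, tracking the running constant term $\delta^{(n)}(s,\zeta):=E^{(n)}(s)-\zeta^{(n)}$. Since the step $n\to n+1$ acts on $\delta$ by an affine map with linear part $\rho_{\rm GH}^{-1}$ plus an $O(\gamma_n)=O(\rho_{\rm GH}^{\,n\mu})$ correction, the set of $\zeta$ for which the iteration stays in the polydisc for all $n$ is a nested intersection of discs of radius $O(\rho_{\rm GH}^{\,n})$, hence a single point $\zeta_\infty(s)$. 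Applying the implicit function theorem at each finite stage (the linear part is invertible) produces analytic approximants $\zeta^{(n)}(s)$, and $\zeta^{(n)}\to\zeta_\infty$ uniformly on compacta, so $\zeta_\infty$ is analytic by the Weierstrass theorem; it maps into $B_{\rho_{\rm GH}}(E_{\rm at}(s))$ by construction.

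With $\zeta=\zeta_\infty(s)$ the operators $H^{(n)}(s):=H^{(n)}(s,\zeta_\infty(s))$ lie in $\mathcal{B}^{[1]}(\alpha_n,\beta_n,\gamma_n)$ with $\alpha_n,\gamma_n\to0$ and $\beta_n\to\beta_\infty(s)$, so $H^{(n)}(s)\to H^{(\infty)}(s)$ in operator norm, where $H^{(\infty)}(s)=E'_\infty(s)H_f$ on $\HH_{\rm red}$ with $E'_\infty(s)$ close to $1$; this operator has $0$ as an isolated simple eigenvalue with eigenvector $\Omega$. Unwinding the isospectrality theorem of \cite{GriHas09} (the maps $\chi_\rho:\ker H\to\ker F_{\chi_\rho}$ and $Q_{\chi_\rho}:\ker F_{\chi_\rho}\to\ker H$, together with the intertwining of $S_\rho$ by $\Gamma_\rho$) backwards through all stages gives $\ker H(s,\zeta_\infty(s))\neq\{0\}$, and a nonzero element $\psi_\infty(s)$ as the norm–convergent product $\psi_\infty(s)=\lim_{n}\Gamma_{\rho_{\rm GH}}^{*}Q_{\chi^{(0)}}(s)\cdots\Gamma_{\rho_{\rm GH}}^{*}Q_{\chi^{(n-1)}}(s)\,\Omega$; each $Q_{\chi^{(j)}}(s)$ is injective on the relevant kernel and depends analytically on $s$, so $\psi_\infty$ is analytic and nowhere vanishing, and $H(s,\zeta_\infty(s))\psi_\infty(s)=0$.

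For the self–adjointness addendum, if $H(s,\zeta)^*=H(\overline s,\overline\zeta)$ then for real $s$ all data are real, $\zeta_\infty(s)$ is real by uniqueness, and the free part of each $H^{(n)}(s,\lambda)$ is self–adjoint. For $\lambda\in(E_{\rm at}(s)-\rho_{\rm GH},\zeta_\infty(s))$, run the flow at spectral parameter $\lambda$: now $\delta^{(n)}(s,\lambda)$ starts strictly positive and is amplified by $\rho_{\rm GH}^{-1}$ at each step, so after finitely many steps the renormalized operator equals $\beta_n H_f+(\text{const}>0)$ up to a small perturbation in $\|\cdot\|_{\mu,\xi}^\#$, hence is boundedly invertible on $\HH_{\rm red}$; the Feshbach isospectrality propagates invertibility back, giving $0\notin\sigma(H(s,\lambda))$. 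I expect the main obstacle to be the codimension–one energy selection: simultaneously showing that the nested admissible energy sets shrink to exactly one point despite the $\rho_{\rm GH}^{-1}$–expansion, extracting an analytic $\zeta_\infty(s)$ that survives the limit, and arranging the normalizations so that the limiting operator genuinely has $\Omega$ in its kernel rather than merely a small eigenvalue. Everything else is bookkeeping with the norms $\|\cdot\|_{\mu,\xi}^\#$ already set up in Sections~\ref{sec:Banachspace}--\ref{sec:secondStep}.
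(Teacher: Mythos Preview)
Your proposal is correct and follows essentially the same route as the paper, which does not give an independent proof but simply refers to \cite[pp.~610--611]{GriHas09}: the iterated smooth Feshbach map on $\mathcal{B}^{[1]}(\alpha,\beta,\gamma)$, the $\rho^\mu$-contraction of $\gamma$, the codimension-one selection of $\zeta_\infty(s)$ via nested shrinking disks, and the reconstruction of $\psi_\infty(s)$ through the $Q$-operators are precisely the steps of the Griesemer--Hasler argument that the paper invokes. Your identification of the codimension-one energy selection as the delicate point is accurate and matches where the work lies in \cite{GriHas09}.
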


The proof of this theorem follows directly  from the proof of Theorem 1 in \cite[pp.\ 610-611]{GriHas09}.
The only difference being that the theorem given above does not
involve the initial Feshbach step and has therefore a shorter proof.
In the application of Theorem  \ref{thm:griehas}, which we have in mind,
the  parameter $s$ in the theorem will be played by the coupling constant $g$.
Furthermore the energy cutoff $\rho_{\rm GH}$ will be an order one quantity, i.e., independent of the coupling constant.
Now we are ready to prove the main result of this paper.

\vspace{0.2cm}

\noindent
{\it Proof of Theorem \ref{thm:main}.} Fix $\mu > 0$.  Let   $\rho_{\rm GH} \in (0,1/2]$ and $\xi\in (0,1/4]$
by chosen sufficiently small such that the
 assertion of Theorem   \ref{thm:griehas} holds.
 The idea will be to choose energy cutoffs, $\rho_0$ and $\rho_1$,  at the first and second Feshbach step
 so that for $g$ in a sectorial region of an annulus, we can apply Theorem \ref{thm:griehas}.
 As we let   the outer and inner radius of the annulus tend to zero we will obtain the desired result.
 $$
 $$
Choose $\epsilon \in (0,\mu)$ and  $\alpha \in(0,\min(\mu-\epsilon,1))$. For  $\rho_0 > 0$ we define
\begin{equation} \label{eq:proofofmain0}
\rho_1 := \rho_0^{1 + 2\epsilon + \alpha}  .
\end{equation}
We shall assume that
\begin{equation} \label{eq:proofofmain2}
\rho_0 \in (0,1/4)   .
\end{equation}
We consider the following sectorial region  of an annulus, determined by the conditions  $g \in S_{\delta_0}$ and
\begin{equation} \label{eq:proofofmain1}
 c_{\delta_0}^{-1/2}  \rho_0^{ 1 + \epsilon +   \frac{\alpha}{2}  } <   |g| <  \min( \rho_0^{ 1 + \epsilon } , (8 \|Z_{\rm at}\| + 4 c_{\delta_0})^{-1} \rho_0^{1/2} ,(  8 \xi^{-1} C_F  \| G \|_\mu    )^{-1} \rho_0^{1/2})   .
\end{equation}
In view of  the upper bound in   \eqref{eq:proofofmain1},
we conclude from    Theorem  \ref{thm:inimain1} (Banach Space Estimate for 1st  Feshbach Operator) that there exists a finite constant $C^{(1)}$ such that for
$\rho_0$ satisfying \eqref{eq:proofofmain2}  and   all $g \in S_{\delta_0}$ with  \eqref{eq:proofofmain1}
and $z \in D_{\rho_0/2}(\epsilon_{\rm at})$ we have
\begin{equation} \label{eq:proofmainbanach1st}
H_g^{(1,{\rho_{0}})}(z) - \rho_0^{-1} (\epsilon_{\rm at}  - z )  \in  \mathcal{B}^{[d_0]}(  C^{(1)} \rho_0^{\frac{1}{2} + \epsilon}   ,  C^{(1)} \rho_0^{\frac{1}{2} + \epsilon}     ,   C^{(1)}  \rho_0^{\mu + 1 + \epsilon}  )  .
\end{equation}

Next we want to use  Theorem \ref{prop:feshsecond} (Feshbach Pair Criterion  for 2nd  Iteration).
Observe that  by  \eqref{eq:proofofmain2} the assumptions on the $\rho$'s and by \eqref{eq:proofofmain2}  the assumptions \eqref{eq:secondfeshestm1}  and \eqref{eq:secondfeshest2}  of  Theorem   \ref{prop:feshsecond}
are satisfied for $g \in S_{\delta_0}$ with \eqref{eq:proofofmain1}. To apply the theorem we consider the decomposition defined in  \eqref{eq:decompsecond}
where
$$
{H}_g^{(1,\rho_0)}(z) = T_{g }^{(1,\rho_0)}(z)     +   W_{g }^{(1,\rho_0)}(z)
$$
By Lemma  \ref{norm:w:(0b)} (Free Approx to  1st  Feshbach) we see that for the difference to  the free approximation \eqref{eq:tfreeapprox} there is  some constant $C$
such that
\begin{align} \label{eq:approxestproofmain}
 & \| t^{(1,{\rho_{0}})}_{g}(z)- t_{g, {\rm free}}^{(1,{\rho_{0}})}(z) \|
	 \leq C \rho_0^{2  +  2\epsilon}  , \\
& \| {P}_{\rm at}^{(2)}  w_{g,0,0}^{(1,\rho_0)} \overline{P}_{\rm at}^{(2)} +  \overline{P}_{\rm at}^{(2)}  w_{g,0,0}^{(1,\rho_0)} {P}_{\rm at}^{(2)}  \|
	 \leq C \rho_0^{2  +  2\epsilon}    ,\label{eq:approxestproofmain2}
\end{align}
where we used the upper bound in \eqref{eq:proofofmain1}.
By \eqref{eq:proofofmain0} we see that the right hand side of  \eqref{eq:approxestproofmain}
is less than $\rho_1/4$ provided $\rho_0$ is sufficiently small. Thus by Neumann's theorem
and Lemma  \ref{T-invert} we see  that $T_{g }^{(1,\rho)}(z)$  is invertible on  $\ran \boldsymbol{\chib}^{(1)}_{\rho_1} $  and    $$  \| ( H_{0,0}(t_g^{(1,\rho_0)}) \upharpoonright \ran \boldsymbol{\chib}^{(1)}_{\rho_1} )^{-1}  \| \leq \frac{8}{\rho_1}. $$
Furthermore, we see from \eqref{eq:approxestproofmain2}  and \eqref{eq:proofmainbanach1st}  that
$$
\| w^{(1,{\rho_{0}})}_{g,{\rm int}}(z)  \|_{\mu,\xi}^\#
		\leq C \rho_0^{2  +  2\epsilon} +   C^{(1)}  \rho_0^{\mu+ 1+ \epsilon} =: \gamma  .
$$
We note that
$$
\frac{\gamma}{\rho_1}= C \rho_0^{1 - \alpha}  +    C^{(1)} \rho_0^{\mu - \epsilon - \alpha }  ,
$$
 tends to zero for $\rho_0 \to 0$.
In view  of   \eqref{eq:injective} we see that    condition  \eqref{eq:condonsecfesh} of Theorem \ref{prop:feshsecond}
is satisfied for $\rho_0$ small.
Thus
 $ T_{g}^{(1,\rho_0)}(z)$ and  $W_{g}^{(1,{\rho_{0}})}(z)$  are  a Feshbach pair for $\boldsymbol{\chi}^{(1)}_{\rho_1}$,
provided $g  \in S_{\delta_0}$,  \eqref{eq:proofofmain1} holds,
and  $z \in   D_{\rho_0 \rho_1/2}(\epsilon_{\rm at} + g^2 \epsilon_{\rm at}^{(2)})$.
It now  follows from Theorem \ref{thm:ballprofforsecond} (Banach Space Estimate for 2nd Feshbach Operator) that for some constant $C^{(2)}$ we have
\begin{align}
 S_{\rho_1} ( F_{\boldsymbol{\chi}^{(1)}_{\rho_1}}( T_{g}^{(1,{\rho_{0}})}(z) ,  W_{g}^{(1,{\rho_{0}})}(z)  )  )&  -  \rho_1^{-1} \rho_0^{-1}  (\epsilon_{\rm at} - z + g^2  \epsilon_{\rm at}^{(2)} )
\label{eq:feshfinal}  \\
& \in
 \mathcal{B}^{[1]}( C^{(2)}  \frac{\gamma}{\rho_1}  ,   C^{(2)}  \rho_0^{\frac{1}{2} + \epsilon}   + C^{(2)} \frac{\gamma}{\rho_1}  ,
  C^{(2)}  \rho_1^\mu \gamma  ) \nonumber
 \end{align}
 for all $z \in   D_{\rho_0 \rho_1/2}(\epsilon_{\rm at} + g^2 \epsilon_{\rm at}^{(2)})$ and  $g  \in S_{\delta_0}$ with \eqref{eq:proofofmain1}.

Now we want to apply  Theorem  \ref{thm:griehas} (Analyticity Theorem of Griesemer-Hasler). To this
end we   express the Hamiltonian in terms of the  variables   $ \zeta  = \rho_0^{-1}\rho_1^{-1} z$ and $s = g$.
We define the function  $E_{\rm at}(s) =  \rho_1^{-1} \rho_0^{-1} ( \epsilon_{\rm at} + s^2 \epsilon_{\rm at}^{(2)})$ on $\C$ and the function
$$
 H(s,\zeta) =  S_{\rho_1} ( F_{\boldsymbol{\chi}^{(1)}_{\rho_1}}( T_{s }^{(1,{\rho_{0}})}(\rho_0\rho_1 \zeta) ,  W_{s}^{(1,{\rho_{0}})}(\rho_0 \rho_1 \zeta)  )  )
$$
for $(s,\zeta) \in \mathcal{U} := \bigcup_{ g \in S_{\delta_0} :  \eqref{eq:proofofmain1}} D_{1/2}( E_{\rm at}(g))$.
Since we  expressed   $H(s,\zeta)$ in terms of  uniformly convergent Neumann series (Theorems \ref{pro:firstbound} and \ref{prop:feshsecond}), one concludes that $H(s,\zeta)$ is
 jointly  analytic on  $\mathcal{U}$ and that the  conjugation property $H(s,\zeta)^* = H(\overline{s},\overline{\zeta})$ holds, since each term in the convergent expansion has that property.
We conclude now  by Theorem  \ref{thm:griehas} (Analyticity Theorem of Griesemer-Hasler) (if  $\rho_{GH}$ is less than $1/2$, then    \eqref{eq:condonuu} holds) that there exist  analytic functions
$$
   g \mapsto \zeta_\infty(g)  , \quad  g \mapsto \psi_\infty(g)
$$
for $g \in S_{\delta_0}$ with  \eqref{eq:proofofmain1}
such that
$$
H(g, \zeta_\infty(g)  ) \psi_\infty(g)     = 0 .
$$ 
and moreover $\zeta_\infty(g) \in D_{1/2}(E_{\rm at}(g))$. Expressed in  terms of the original  variables we obtain
 from the isospectrality of the Feshbach map, that
$$
E_g :=  \rho_0 \rho_1 \zeta_\infty(g)    \quad \text{and }     \quad  \psi_g :=  Q^{(0,{\rho_{0}})}_g(E_g)  Q^{(1,{\rho_{1}})}_g(E_g)  \psi_\infty(g)  ,
$$
where
\begin{align*}
 Q^{(0,{\rho_{0}})}_g(z) & :=   \boldsymbol{\chi}_{\rho_0}^{(0)} -  \boldsymbol{\chib}_{\rho_0}^{(0)} \left(H_0- z +  \boldsymbol{\chib}_{\rho_0}^{(0)} g W  \boldsymbol{\chib}_{\rho_0}^{(0)} \right)^{-1}g W \boldsymbol{\chi}_{\rho_0}^{(0)}  \\
 Q^{(1,{\rho_{1}})}_g(z)  & :=   \boldsymbol{\chi}^{(1)}_{\rho_1} -  \boldsymbol{\chib}^{(1)}_{\rho_1} \left( T_{g}^{(1,{\rho_{0}})}(z) +  \boldsymbol{\chib}^{(1)}_{\rho_1} W_{g}^{(1,{\rho_{0}})}(z)  \boldsymbol{\chib}^{(1)}_{\rho_1} \right)^{-1}  W_g^{(1,\rho_0)} \boldsymbol{\chi}^{(1)}_{\rho_1} ,
\end{align*}
are eigenvalue and eigenvector of $H_g$.
It now follows that $E_g$ and $\psi_g$ are analytic functions of $g \in S_{\delta_0}$ with  \eqref{eq:proofofmain1}
since  $Q^{(0,{\rho_{0}})}_g(z)$ and   $Q^{(1,{\rho_{1}})}_g(z)$ are analytic functions of $g$ and $z$, as they are given by convergent expansions
of jointly analytic functions. Furthermore in terms of the original spectral parameter we have  $E_g \in  D_{\rho_0 \rho_1/2}(\epsilon_{\rm at} + g^2 \epsilon_{\rm at}^{(2)})$,
which implies   \eqref{expansionmain}. The conjugation property of $H(s,\zeta)$, the last statement  in Theorem \ref{thm:griehas}, and the isospectrality property of the  Feshbach map
imply that $E_g$ is the ground state energy of $H_g$.
As we take $\rho_0$ to  zero, the theorem follows.
\qed

\begin{remark}
{\rm We note that the choice for $\rho_0$ and $\rho_1$ in the proof corresponds to $\rho_0$ being larger than $|g|$ and $\rho_1$ being smaller than $|g|$ such that their product is smaller than $|g|^2$.}
\end{remark}

\section*{Acknowledgements}

D.H. wants to thank I. Herbst and M. Westrich for valuable conversations. Furthermore we want to thank G. Br\"aunlich for
helpful discussions.

\appendix

\section{Elementary Estimates }
\label{sec:appA}

\begin{lemma} We have the following estimate
\begin{equation}  \label{eq:projineq1}
\| a(G) 1_{H_f \leq r }  \|  \leq  \left( \int_{|k| \leq r }   \frac{\| G(k) \|^2}{|k|}  dk \right)^{1/2}  r^{1/2}
\end{equation}
\end{lemma}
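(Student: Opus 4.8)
The plan is to bound the operator norm by testing against an arbitrary vector, using the Cauchy--Schwarz inequality in the photon variable $k$ together with the elementary identity $\int \omega(k)\,\|a(k)\psi\|^2\,dk = \langle \psi, H_f\psi\rangle$, which holds for $\psi \in D(H_f^{1/2})$.

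First I would fix $\phi\in\HH$ and set $\psi := 1_{H_f\le r}\phi$, so that $\psi\in D(H_f^{1/2})$ with $\langle\psi,H_f\psi\rangle\le r\|\psi\|^2\le r\|\phi\|^2$. Writing $a(G)=\int G^*(k)a(k)\,dk$ and inserting the factor $|k|^{1/2}|k|^{-1/2}$, the pull-through formula (Lemma \ref{lem:pullthrough}) gives $a(k)1_{H_f\le r}=1_{H_f\le r-|k|}\,a(k)$, which vanishes for $|k|>r$; hence the integrand is supported in $\{|k|\le r\}$. Applying the triangle inequality in $\HH$ followed by the Cauchy--Schwarz inequality in $L^2(\{|k|\le r\},dk)$ yields
\[
\|a(G)\psi\| \le \int_{|k|\le r}\|G(k)\|\,\|a(k)\psi\|\,dk
 \le \Big(\int_{|k|\le r}\frac{\|G(k)\|^2}{|k|}\,dk\Big)^{1/2}\Big(\int_{|k|\le r}|k|\,\|a(k)\psi\|^2\,dk\Big)^{1/2}.
\]

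For the second factor I would use that, by the form definition \eqref{eq:fieldenergy} of $H_f$, one has $\int |k|\,\|a(k)\psi\|^2\,dk = \langle\psi,H_f\psi\rangle\le r\|\phi\|^2$. Combining the two displays gives exactly $\|a(G)1_{H_f\le r}\phi\|\le \big(\int_{|k|\le r}\|G(k)\|^2/|k|\,dk\big)^{1/2}r^{1/2}\|\phi\|$, which is the claim (if the integral on the right is infinite there is nothing to prove). The only mildly delicate point is that these manipulations should first be carried out on the dense domain of finite-particle vectors whose components are smooth and compactly supported away from $k=0$, where all the integrals involved converge absolutely, and then extended by continuity; this is entirely routine and is of the same nature as the estimates recorded in the appendix on field operators, cf.\ \eqref{eq:estoncrea}.
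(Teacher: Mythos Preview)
Your proof is correct and follows essentially the same route as the paper: use the pull-through formula to restrict the $k$-integration to $\{|k|\le r\}$, then split off a factor $|k|^{1/2}$ via Cauchy--Schwarz and bound the resulting $\int |k|\,\|a(k)\psi\|^2\,dk$ by $r\|\phi\|^2$. The only cosmetic difference is that the paper writes the last two steps as $\|a(G\,1_{|k|\le r})H_f^{-1/2}\|\cdot\|H_f^{1/2}1_{H_f\le r}\|$ and invokes the packaged estimate \eqref{eq:estoncrea}, whereas you carry out that Cauchy--Schwarz computation directly.
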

\begin{proof}
This follows from the following estimate
\begin{align*}
\| a(G) 1_{H_f \leq r } \| &=   \| a(G 1_{|k| \leq r } ) 1_{H_f \leq r } \|
=  \| a(G 1_{|k| \leq r } ) H_f^{-1/2} H_f^{1/2}  1_{H_f \leq r } \|  \\
& \leq    \| a(G 1_{|k| \leq r } ) H_f^{-1/2} \| \|  H_f^{1/2}  1_{H_f \leq r } \| \\
& \leq  \left( \int_{|k| \leq r }   \frac{\| G(k) \|^2}{|k|}  dk \right)^{1/2} r^{1/2}.
\tag* \qedhere
\end{align*}
\end{proof}

In \cite{GriHas09} the following Lemma is shown.

\begin{lemma} For  $G \in L^{2}(\R^3 \times \Z_2 ; \mathcal{L}(\HH_{\rm at}))$ we have
\begin{align} \label{eq:estoncrea}
\| a(G)H_f^{-1/2} \| & \leq \| \omega^{-1/2} G \|  , \\      \| a^*(G)(H_f+1)^{-1/2} \| & \leq \|  (\omega^{-1} +1)^{1/2} G \|   \nonumber                             .
\end{align}
\end{lemma}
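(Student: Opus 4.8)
The plan is to establish both bounds by elementary Cauchy--Schwarz estimates on the finite-particle subspace and then extend by density. Throughout I use that $\omega(k)=|k|$, that $H_f=\int\omega(k)a^*(k)a(k)\,dk$, and that for each $k$ the operator $G(k)\in\mathcal{L}(\HH_{\rm at})$ acts on the atomic factor and therefore commutes with all $a(k'),a^*(k')$; in particular $\|G^*(k)\|=\|G(k)\|$ (operator norm on $\HH_{\rm at}$).

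For the first estimate I would fix $\psi$ in the finite-particle subspace, which is a core for $H_f^{1/2}$, write $a(G)\psi=\int G^*(k)a(k)\psi\,dk$, and bound pointwise $\|G^*(k)a(k)\psi\|\le\|G(k)\|\,\|a(k)\psi\|$. Then
\[
\|a(G)\psi\|\le\int\frac{\|G(k)\|}{\omega(k)^{1/2}}\,\omega(k)^{1/2}\|a(k)\psi\|\,dk\le\Big(\int\frac{\|G(k)\|^2}{\omega(k)}\,dk\Big)^{1/2}\Big(\int\omega(k)\|a(k)\psi\|^2\,dk\Big)^{1/2},
\]
and the last factor equals $\langle\psi,H_f\psi\rangle^{1/2}=\|H_f^{1/2}\psi\|$ by the definition of $H_f$. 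This gives $\|a(G)H_f^{-1/2}\|\le\|\omega^{-1/2}G\|$ on the core, hence everywhere.

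For the second estimate I would first note that $a^*(G)=a(G)^*$, since $(G^*(k)a(k))^*=a^*(k)G(k)=G(k)a^*(k)$. Computing the commutator on the finite-particle subspace via the canonical commutation relations,
\[
[a(G),a^*(G)]=\int\!\!\int G^*(k)G(k')\,[a(k),a^*(k')]\,dk\,dk'=\int G^*(k)G(k)\,dk=:C\ge0,
\]
a bounded non-negative operator on $\HH_{\rm at}$ with $\langle\varphi,C\varphi\rangle=\int\|G(k)\varphi\|^2\,dk\le\|G\|^2\|\varphi\|^2$, where $\|G\|^2:=\int\|G(k)\|^2\,dk$. Hence for $\psi$ in the core,
\[
\|a^*(G)\psi\|^2=\langle\psi,a(G)a^*(G)\psi\rangle=\|a(G)\psi\|^2+\langle\psi,C\psi\rangle\le\|\omega^{-1/2}G\|^2\langle\psi,H_f\psi\rangle+\|G\|^2\|\psi\|^2.
\]
Finally, using the trivial inequality $p\,a+q\,b\le(p+q)(a+b)$ for $a,b,p,q\ge0$ with $p=\|\omega^{-1/2}G\|^2$, $q=\|G\|^2$, $a=\langle\psi,H_f\psi\rangle$, $b=\|\psi\|^2$, and noting $p+q=\int(\omega(k)^{-1}+1)\|G(k)\|^2\,dk=\|(\omega^{-1}+1)^{1/2}G\|^2$, I obtain $\|a^*(G)\psi\|^2\le\|(\omega^{-1}+1)^{1/2}G\|^2\,\langle\psi,(H_f+1)\psi\rangle$, i.e.\ $\|a^*(G)(H_f+1)^{-1/2}\|\le\|(\omega^{-1}+1)^{1/2}G\|$, and then extend by density.

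The only real care needed — and the closest thing to an obstacle — is the bookkeeping: justifying the identity $\int\omega(k)\|a(k)\psi\|^2\,dk=\langle\psi,H_f\psi\rangle$, the commutator computation, and $a^*(G)=a(G)^*$ on a common core of finitely-many-particle vectors with sufficiently regular momentum-space wave functions, and observing that the operator-valued nature of $G$ enters the Cauchy--Schwarz steps only through the scalar weights $\|G(k)\|$, while the commutator $C$, though now a genuine operator on $\HH_{\rm at}$ rather than a scalar, is still controlled by $\|G\|^2$ as a quadratic form. Standard closure/density arguments then promote the bounds from the core to the stated operator-norm inequalities.
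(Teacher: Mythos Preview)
Your proof is correct and follows essentially the same route as the paper's: Cauchy--Schwarz with the $\omega^{-1/2}\omega^{1/2}$ split for the annihilation bound, and the commutator identity $a(G)a^*(G)=a^*(G)a(G)+\int G^*(k)G(k)\,dk$ combined with the first bound for the creation estimate. If anything you are slightly more careful than the paper, which writes the commutator as the scalar $\int\|G(k)\|^2\,dk$ rather than the operator $C=\int G^*(k)G(k)\,dk$; your observation that $C\le\|G\|^2$ as a quadratic form is the right way to handle the operator-valued case.
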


\begin{proof}
We will use the notation of eq. \eqref{eq:easyNotation}.
We set $N := \int a^*(k)  a(k)  dk $
and let $\psi \in 1_{N \leq n} \HH$ for some $n \in \N$.
To prove the first inequality we estimate
\begin{align*}
\| a(G) \psi \| &\leq
\int \| G(k) a(k) \psi \| dk =  \int \| G(k)|k|^{-1/2} |k|^{1/2}  a(k) \psi \| dk  \\
&\leq   \left( \int  |k| \| a(k) \psi \|^2 dk \right)^{1/2}
\left(\int  |k|^{-1} \| G(k) \|^2 dk \right)^{1/2} \\
&=
\left(\int  |k|^{-1} \| G(k ) \|^2 dk \right)^{1/2}  \| H_f^{1/2} \psi \|.
\end{align*}
To prove the second we use the commutation relations
\begin{align*}
\| a^*(G) \psi \|^2 &= \inn{ a^*(G) \psi ,  a^*(G) \psi } =  \inn{  \psi , a(G) a^*(G) \psi }\\  &=
\inn{  \psi , ( a^*(G) a(G) + \int \| G(k) \|^2 dk  ) \psi } \\
&\leq
\left(  \int  |k|^{-1} \| G(k) \|^2 dk \right)  \| H_f^{1/2} \psi \|^2  +
 \int \| G(k) \|^2 dk   \| \psi \|^2 . \tag* \qedhere
\end{align*}
\end{proof}

 The following lemma states the well
known pull-through formula.
For a proof see for example \cite{BacFroSig98-2,HasHer11-1}.
\begin{lemma} \label{lem:pullthrough}
Let $f : \R_+ \to \C$ be a bounded measurable function. Then for all $K \in \R^3 \times \Z_2$
$$
f(H_f) a^*(K) = a^*(K) f(H_f + \omega(K) ) , \quad a(K) f(H_f) = f(H_f + \omega(K) ) a(K) .
$$
\end{lemma}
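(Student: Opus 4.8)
The plan is to verify the two identities sectorwise on Fock space, where each reduces to a triviality about multiplication operators, and then to transfer the conclusion to the operator-valued-distribution statement of the lemma by smearing the creation and annihilation operators against one-particle wave functions. First I would fix $h\in\hh$ and, recalling $a^*(h)=\int h(k)\,a^*(k)\,dk$ and $a(h)=\int\overline{h(k)}\,a(k)\,dk$, establish the ``smeared'' forms
\begin{equation}\label{eq:smearedpull}
f(H_f)\,a^*(h)=\int a^*(k)\,f\bigl(H_f+\omega(k)\bigr)\,h(k)\,dk,\qquad
a(h)\,f(H_f)=\int \overline{h(k)}\,f\bigl(H_f+\omega(k)\bigr)\,a(k)\,dk,
\end{equation}
which are exactly the ``unsmeared'' identities of the lemma tested against $h$. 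Note that for each fixed $k$ the quantity $\omega(k)=|k|$ is a nonnegative scalar, so $f(H_f+\omega(k))=g_k(H_f)$ with $g_k(t):=f(t+\omega(k))$ bounded and Borel, and all operators appearing in \eqref{eq:smearedpull} are bounded. Both sides of the first identity in \eqref{eq:smearedpull} map the $n$-particle sector $\hh^{\otimes_s n}$ into $\hh^{\otimes_s (n+1)}$ with norm at most $\sqrt{n+1}\,\|f\|_\infty\,\|h\|_{\hh}$, so it is enough to check \eqref{eq:smearedpull} on each sector and then add up.

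On $\hh^{\otimes_s n}$ the operator $H_f$ acts, by \eqref{eq:fieldenergy}, as multiplication by $E_n(k_1,\dots,k_n):=\omega(k_1)+\cdots+\omega(k_n)$; hence $f(H_f)$ acts as multiplication by $f\circ E_n$ (this is the only place the Borel functional calculus enters, and for a multiplication operator it is immediate, so no domain issue and no need to first approximate $f$ by continuous functions). The creation operator adjoins one variable: for $\psi\in\hh^{\otimes_s n}$,
$$
\bigl(a^*(h)\psi\bigr)(k_1,\dots,k_{n+1})=\frac{1}{\sqrt{n+1}}\sum_{j=1}^{n+1}h(k_j)\,\psi(k_1,\dots,\widehat{k_j},\dots,k_{n+1}).
$$
Applying $f(H_f)$ multiplies the $j$-th summand by $f\bigl(E_{n+1}(k_1,\dots,k_{n+1})\bigr)=f\bigl(\omega(k_j)+E_n(k_1,\dots,\widehat{k_j},\dots,k_{n+1})\bigr)$. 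On the other hand, the right side of the first identity in \eqref{eq:smearedpull} applied to $\psi$ gives, at the point $(k_1,\dots,k_{n+1})$, the very same expression $\frac{1}{\sqrt{n+1}}\sum_j h(k_j)\,f\bigl(\omega(k_j)+E_n(\cdots)\bigr)\,\psi(\cdots)$, since $f(H_f+\omega(k))\psi$ is multiplication of $\psi$ by $f\bigl(\omega(k)+E_n\bigr)$ and $a^*(k)$ then adjoins the variable $k$. Thus the two operators agree on every sector, proving the creation identity in \eqref{eq:smearedpull}; the annihilation identity follows either by taking adjoints (replacing $f$ by $\overline f$) or by the analogous sectorwise computation using that $a(k)$ deletes a variable.

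I expect the only real work to be bookkeeping with the distributional character of $a^\#(k)$, not anything conceptual: one must justify that the pointwise identities of the lemma are equivalent to their smeared forms \eqref{eq:smearedpull} (by testing against $h\in\hh$ and using density, together with the closedness of $a^\#(G)$ recorded earlier), and one must justify the Fubini-type interchange of the $k$-integration with the bounded functional calculus of $H_f$ in \eqref{eq:smearedpull}, which is routine once the sectorwise estimate above is in hand. An alternative and perhaps conceptually cleaner route, consistent with the cited references, is to first record the one-parameter-group version $e^{itH_f}a^*(k)e^{-itH_f}=e^{it\omega(k)}a^*(k)$ — which simply says that second quantization sends the free one-particle evolution $e^{itM_\omega}$ to $e^{itH_f}$ — and then recover arbitrary bounded Borel $f$ via Stone's theorem and a density/dominated-convergence argument; this, however, carries exactly the same smearing bookkeeping at the end.
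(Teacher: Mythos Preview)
Your proposal is correct: the sectorwise computation you outline is the standard and complete argument for the pull-through formula, and the bookkeeping you flag (smearing against $h\in\hh$, Fubini) is exactly what is needed to pass between the distributional and operator statements. The paper itself does not give a proof of this lemma at all; it merely cites \cite{BacFroSig98-2,HasHer11-1}, so your write-up is in fact more self-contained than what the paper provides.
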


\section{Field Operators Associated to  Integral Kernels}
\label{sec:appfielddef}

In what follows we shall give a precise meaning to field operators defined by integral kernels. 
Let ${X} := \R^3 \times \Z_2$.
For $\psi \in \FF$ having finitely many particles we have
\beqn \label{eq:defofa}
\left[ a(K_1) \cdots a(K_m) \psi \right]_n(K_{m+1},...,K_{m+n}) = \sqrt{\frac{(m+n)!}{n!}} \psi_{m+n}(K_{1},...,K_{m+n}) ,
\eeqn
for all $K_1,...,K_{m+n} \in X$, and
using Fubini's theorem it is elementary to see that the vector valued map
 $(K_1,...,K_m) \mapsto a(K_1) \cdots a(K_m) \psi$ is
an element of $L^2(X^m; \FF)$.
For measurable functions $w_{m,n}$ on $ {(\R^3 \times \Z_2)}^{n+m}$ with values
in the linear operators of $\HH_{\rm at}$
we define  the sesquilinear form
\begin{equation*} 
\int_{{(\R^3 \times \Z_2)}^{m+n}} \frac{ dK^{(m,n)}}{|K^{(m,n)}|^{1/2}}
 \left\langle a(K^{(m)}) \varphi ,w_{m,n}( K^{(m,n)}) a(\widetilde{K}^{(n)}) \psi \right \rangle ,
\end{equation*}
defined  for all $\varphi$ and $\psi$ in $\HH$, for which the integrand on the right hand side is integrable.
 This yields by the representation
 theorem a densely defined linear operator, which can easily be shown to be closable. We denote by $H^{(0)}(w_{m,n})$ the
 closure of the operator. By adjusting the notation this also provides  a precise definition of the field operators $a(G)$ and $a^*(G)$.
 To define field operators which depend on the free field energy we consider
measurable functions $w_{m,n}$ on $\R_+ \times {(\R^3 \times \Z_2)}^{n+m}$ with values
in the linear operators of $\HH_{\rm at}$.
To such a function we associate the sesquilinear form
\begin{equation*}
q_{w_{m,n}}(\varphi,\psi) \!:=\!\! \int_{{(\R^3 \times \Z_2)}^{m+n}} \!\frac{ dK^{(m,n)}}{|K^{(m,n)}|^{1/2}}\!
 \left\langle \!a(K^{(m)}) \varphi ,w_{m,n}(H_f, K^{(m,n)}) a(\widetilde{K}^{(n)}) \psi \right \rangle \!,
\end{equation*}
defined  for all $\varphi$ and $\psi$ in $\HH$, for which the integrand on the right hand side is integrable.
If the integral kernel decays sufficiently fast as a function of the free field energy,
the sesquilinear form defines a bounded operator. For this  we can use the following lemma and the identification in \eqref{eq:identbsp}.
\begin{lemma} \label{kernelopestimate}  For measurable $w : X^{m+n} \to C_\infty[0,\infty)$, define
\begin{align*}
 \| w_{m,n} \|_\sharp^2 
 := &
\int_{{X}^{m+n}}\!\! \frac{d K^{(m,n)}}{|K^{(m,n)}|^2} \\
&\; \times \sup_{r \geq 0} \left[ \|w_{m,n}(r,K^{(m,n)}) \|^2 \prod_{l=1}^m \left\{ r + \Sigma[K^{(l)}] \right\}
 \prod_{\widetilde{l}=1}^n \left\{ r + \Sigma[\widetilde{K}^{(\widetilde{l})}] \right\} \right] .
\end{align*}
Then for all $\varphi, \psi \in \HH$ with finitely many particles
\begin{equation} \label{eq:defofH}
| q_{w_{m,n}}(\varphi,\psi) | \leq \| w_{m,n} \|_\sharp \| \varphi \| \| \psi \| .
\end{equation}
In particular if $\| w_{m,n} \|_\sharp  < \infty$
 the form $q_{w_{m,n}}$ determines uniquely a bounded
linear operator  ${H}_{m,n}(w_{m,n})$ such that
$$
q_{w_{m,n}}(\varphi,\psi ) = \langle \varphi,{H}_{m,n}(w_{m,n}) \psi \rangle ,
$$
for all $\varphi, \psi$ in $\HH$. Moreover,
$\| {H}_{m,n}(w_{m,n}) \|_{} \leq \| w_{m,n} \|_\sharp$.
\end{lemma}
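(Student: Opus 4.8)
The plan is to reduce, by density, to proving the bilinear bound $|q_{w_{m,n}}(\varphi,\psi)|\le\|w_{m,n}\|_\sharp\,\|\varphi\|\,\|\psi\|$ for vectors $\varphi,\psi\in\HH$ with finitely many particles; the existence of the bounded operator ${H}_{m,n}(w_{m,n})$ and the norm bound $\|{H}_{m,n}(w_{m,n})\|\le\|w_{m,n}\|_\sharp$ then follow by continuous extension and the Riesz representation theorem. The substantive ingredient is a sharp ``pull-through'' (soft-photon) estimate generalizing \eqref{eq:estoncrea}: for every $\varphi$ with finitely many photons,
\begin{equation}\label{eq:plan-Ntau}
\int_{{X}^{m}} |K^{(m)}|\ \Bigl\| \Bigl(\prod_{l=1}^m\bigl(H_f+\Sigma[K^{(l)}]\bigr)\Bigr)^{-1/2}\! a(K^{(m)})\varphi \Bigr\|^2\, dK^{(m)}\ \le\ \|\varphi\|^2 ,
\end{equation}
where $\Sigma[K^{(l)}]=|K_1|+\dots+|K_l|$ and the functions of $H_f$ act on the annihilated vector by functional calculus. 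To prove \eqref{eq:plan-Ntau} I would expand $\varphi$ into its $j$-photon components $\varphi_{m+j}$, use \eqref{eq:defofa} to write $\|a(K^{(m)})\varphi\|^2=\sum_j\tfrac{(m+j)!}{j!}\int_{{X}^{j}}|\varphi_{m+j}(K^{(m)},L^{(j)})|^2\,dL^{(j)}$ (with $H_f$ acting as multiplication by $\Sigma[L^{(j)}]$ on the $j$-photon sector), and relabel all $m+j$ momentum variables. Using that $\varphi_{m+j}$ is symmetric one may symmetrize the coefficient, and the bound reduces to the elementary identity that for positive reals $x_1,\dots,x_M$ and $1\le m\le M$,
\[
\sum_{\substack{(a_1,\dots,a_m)\text{ distinct}\\ \text{in }\{1,\dots,M\}}}\ \prod_{l=1}^m \frac{x_{a_l}}{\ \sum_{i\le l}x_{a_i}+\sum_{i\notin\{a_1,\dots,a_m\}}x_i\ }=1 ,
\]
which I would prove by induction on $m$ (for fixed $a_m$ the $l=m$ factor equals $x_{a_m}/\sum_i x_i$, and the remaining sum is the same identity for the $M-1$ remaining variables). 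Consequently the left-hand side of \eqref{eq:plan-Ntau} collapses to $\sum_{j\ge0}\|\varphi_{m+j}\|^2\le\|\varphi\|^2$.

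Granting \eqref{eq:plan-Ntau}, I would argue as follows. For each fixed $K^{(m,n)}$, use the functional calculus of $H_f$ to split the kernel symmetrically as
$w_{m,n}(H_f,K^{(m,n)})=A(H_f)^{-1/2}\,h(H_f,K^{(m,n)})\,B(H_f)^{-1/2}$,
where $A(r):=\prod_{l=1}^m(r+\Sigma[K^{(l)}])$, $B(r):=\prod_{\widetilde l=1}^n(r+\Sigma[\widetilde K^{(\widetilde l)}])$, and $h(r,K^{(m,n)}):=w_{m,n}(r,K^{(m,n)})\,A(r)^{1/2}B(r)^{1/2}$. Moving the self-adjoint factors $A(H_f)^{-1/2}$ and $B(H_f)^{-1/2}$ onto the left and right entries of the inner product and applying the Cauchy--Schwarz inequality in $\FF$ pointwise in $K^{(m,n)}$ gives
$|\langle a(K^{(m)})\varphi, w_{m,n}(H_f,K^{(m,n)}) a(\widetilde K^{(n)})\psi\rangle|\le \|A^{-1/2}a(K^{(m)})\varphi\|\,\|h(H_f,K^{(m,n)})\|\,\|B^{-1/2}a(\widetilde K^{(n)})\psi\|$, where
$\|h(H_f,K^{(m,n)})\|^2\le\sup_{r\ge0}\bigl[\|w_{m,n}(r,K^{(m,n)})\|^2\prod_l(r+\Sigma[K^{(l)}])\prod_{\widetilde l}(r+\Sigma[\widetilde K^{(\widetilde l)}])\bigr]$. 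Inserting this into the defining integral of $q_{w_{m,n}}$, splitting $|K^{(m,n)}|^{-1/2}=|K^{(m,n)}|^{-1}\cdot|K^{(m,n)}|^{1/2}$, and applying Cauchy--Schwarz in the $K^{(m,n)}$-integral, the factor carrying $|K^{(m,n)}|^{-1}$ together with the supremum above produces exactly $\|w_{m,n}\|_\sharp$, while the remaining factor, carrying $|K^{(m,n)}|=|K^{(m)}||\widetilde K^{(n)}|$, factorizes over the $K^{(m)}$ and $\widetilde K^{(n)}$ variables and is bounded by $\|\varphi\|^2\|\psi\|^2$ by two applications of \eqref{eq:plan-Ntau}. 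This yields \eqref{eq:defofH}, and the remaining assertions of the lemma are immediate.

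The step I expect to be the main obstacle is the sharp estimate \eqref{eq:plan-Ntau}, and in particular the verification that the polynomial weights $\prod_l(r+\Sigma[K^{(l)}])$ and $\prod_{\widetilde l}(r+\Sigma[\widetilde K^{(\widetilde l)}])$ built into $\|\cdot\|_\sharp$ are precisely those for which the constant in \eqref{eq:plan-Ntau} is $1$ rather than merely finite; this requires careful bookkeeping of the combinatorial factors $(m+j)!/j!$ against the symmetrized momentum labels, i.e.\ the combinatorial identity above. A secondary point demanding care is that the functional-calculus splitting of $w_{m,n}(H_f,\cdot)$ must be compatible with commuting the (possibly unbounded) self-adjoint operators $A(H_f)^{-1/2}$ and $B(H_f)^{-1/2}$ past the annihilation operators inside the inner product, which is legitimate here only because $\varphi$ and $\psi$ are taken with finitely many particles.
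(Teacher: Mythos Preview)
Your proposal is correct and follows essentially the same route as the paper: insert the factors $\prod_l(H_f+\Sigma[K^{(l)}])^{\pm 1/2}$ on each side, apply Cauchy--Schwarz, and reduce everything to the pull-through identity~\eqref{eq:plan-Ntau}. The only difference is that the paper states this identity as an equality $=\|P_\Omega^\perp\phi\|^2$ and simply cites \cite{HasHer11-1}, Appendix~A, for its proof, whereas you supply a self-contained combinatorial argument; your inductive proof of the symmetrization identity is correct and yields exactly what is needed.
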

\begin{proof} We set $P[K^{(n)}] := \prod_{l=1}^n ( H_f + \Sigma[K^l])^{1/2}$ and insert 1's to obtain the trivial identity
\begin{align*}
 | q_{w_{m,n}}(\varphi,\psi) | 
 &= \Bigg| \int_{{X}^{m+n}} \frac{d K^{(m,n)}}{|K^{(m,n)}|}
 \Big\langle
 P[K^{(m)}] P[K^{(m)}]^{-1} |K^{(m)}|^{1/2}  \\
 & \qquad \quad \times a(K^{(m)}) \varphi , w_{m,n}(H_f ,K^{(m,n)})
P[\widetilde{K}^{(n)}] P[\widetilde{K}^{(n)}]^{-1} | \widetilde{K}^{(n)}|^{1/2} a(\widetilde{K}^{(n)}) \psi \Big\rangle \Bigg| .
\end{align*}
The lemma now follows using the Cauchy-Schwarz
inequality and the following well known identity for $n \geq 1$ and $\phi \in \FF$,
\begin{equation}
\int_{{X}^n}
d K^{(n)} | K^{(n)} | \left\| \prod_{l=1}^n \left[ H_f + \Sigma[K^{(l)}] \right]^{-1/2} a(K^{(n)}) \phi \right\|^2 = \| P_\Omega^\perp \phi \|^2 \label{eq:trivialA}  ,
\end{equation}
where
$P_\Omega^\perp := | \Omega \rangle \langle \Omega |$.
A proof of \eqref{eq:trivialA}  can for example be found in \cite{HasHer11-1} Appendix A.
The last statement of the lemma follows from the first and the representation theorem.
\end{proof}

\section{ Generalized Wick Theorem}
\label{sec:appB}

For $m,n \in \N_0$ let  $\mathcal{M}_{m,n}$ denote the space of measurable functions on $\R_+ \times (\R^3 \times \Z_2)^{m+n}$ with values
in the linear operators of $\HH_{\rm at}$. Let
$$
\mathcal{M} = \bigoplus_{m+n=1} \mathcal{M}_{m,n}.
$$
Then using the notation introduced in Section \ref{sec:firstStep} we define for
$w \in \mathcal{M}$
\begin{align}\label{eq:WforFirst}
	W[w] := \sum_{m+n =1} {H}_{m,n}^{^{(0)}}(w) .
\end{align} 
Moreover for  $w \in \WW_{\xi}^{[d]}$ we define, similar to \eqref{eq:opHw},
\begin{align}\label{eq:WforSecond}
W[w] := \sum_{m,n \in \N_0} H_{m,n}(w) .
\end{align}
The following Theorem is from \cite{BacFroSig98-2}. It is a generalization of Wick's Theorem.
\begin{theorem} \label{thm:wicktheorem} Let $w \in \mathcal{M}$ or
$w \in \WW_{\xi}^{[d]}$ and let
$F_0,F_1,...,F_L \in \mathcal{M}_{0,0}$ resp. $F_0,F_1,...,F_L \!\in \WW_{0,0}^{[d]}$.
Then as a formal identity
$$
F_0(H_f) W[w] F_1(H_f) W[w] \cdots W[w] F_{L-1}(H_f) W[w] F_L(H_f)
	= H( \widetilde{w}^{({\rm sym})} ) ,
$$
where $\widetilde{w}^{({\rm sym})}$ is the symmetrization w.r.t. $k^{(M)}$ and
$\widetilde{k}^{(N)}$ of
\begin{align}
\widetilde{w}_{M,N}(r;K^{(M,N)})  \nonumber & =
\sum_{\substack{ m_1 + \ldots + m_L = M \\ n_1 + \ldots + n_L=N }}
\sum_{\substack{ p_1, q_1\ldots,p_L,q_L: \\ m_l+p_l+n_l+q_l \geq 0 }}
\prod_{l=1}^L \left\{ \binom{ m_l + p_l}{ p_l} \binom{ n_l + q_l}{ q_l } \right\}
\nonumber \\
& \qquad \times F_0(r + \tilde{r}_0)
\langle \Omega , \prod_{l=1}^{L-1} \left\{
\underline{W}_{p_l,q_l}^{m_l,n_l}[w]( r + r_l ; K_l^{(m_l,n_l)}) F_{l}(H_f + r + \widetilde{r}_{l}) \right\} \nonumber \\
& \qquad \times
\underline{W}_{p_L,q_L}^{m_L,n_L}[w]( r + r_L ; K_L^{(m_L,n_L)}) \Omega \rangle F_L(r + \widetilde{r}_L) , \label{eq:complicated}
\end{align}
with
\begin{align}
& K^{(M,N)} := (K_1^{(m_1,n_1)}, \ldots , K_L^{(m_L,n_L)}) , \quad K_l^{(m_l,n_l)} := (k_l^{(m_l)},\widetilde{k}_l^{(n_l)}) , \label{eq:KMNdef}
\\
& r_l := \Sigma[\widetilde{k}_1^{(n_1)}] + \cdots + \Sigma[\widetilde{k}_{l-1}^{(n_{l-1})}] + \Sigma[{k}_{l+1}^{(m_{l+1})}] + \cdots + \Sigma[{k}_L^{(m_L)}] , \label{eq:rldef}
\\
& \widetilde{r}_l := \Sigma[\widetilde{k}_1^{(n_1)}] + \cdots + \Sigma[\widetilde{k}_{l}^{(n_{l})}] + \Sigma[{k}_{l+1}^{(m_{l+1})}] + \cdots + \Sigma[{k}_L^{(m_L)}] . \label{eq:rltildedef}
\end{align}
In the case \eqref{eq:WforFirst} we set
$$
	\underline{W}_{p_l,q_l}^{m_l,n_l}[w](\,\cdot\,;K_l^{(m_l,n_l)})
		= {{W}}_{\quad p_l,q_l}^{^{(0)}m_l,n_l}[w](K_l^{(m_l,n_l)}) \, ,
$$
which was defined in \eqref{eq:defofWW1} and in case \eqref{eq:WforSecond} we use
\eqref{eq:defofWW2}, i.e.
$$
	\underline{W}_{p_l,q_l}^{m_l,n_l}[w](r;K_l^{(m_l,n_l)})
		= W_{p_l,q_l}^{m_l,n_l}[w](r,K_l^{(m_l,n_l)}) \, .
$$
\end{theorem}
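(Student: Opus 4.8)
The plan is to prove Theorem~\ref{thm:wicktheorem} by the classical normal-ordering argument, carried out entirely at the algebraic level of integral kernels; this is what ``formal identity'' means here, and all questions of convergence of the resulting kernel sums are a separate matter, settled in the situations we need by the operator-norm bounds of Lemma~\ref{lem:operatornormestimates}, Theorem~\ref{thm:injective}, Proposition~\ref{initial:thmE22} and Lemma~\ref{lem:approx:V}. The first step is to expand each factor $W[w]$ via \eqref{eq:WforFirst} resp.\ \eqref{eq:WforSecond} into the sum over $(m_l,n_l)$, so that the left-hand side becomes a formal sum over multi-indices of multiple integrals of the schematic shape
$$
F_0(H_f)\, a^*(k_1^{(m_1)})\, w_{m_1,n_1}(H_f,\,\cdot\,)\, a(\widetilde{k}_1^{(n_1)})\, F_1(H_f)\,\cdots\, a^*(k_L^{(m_L)})\, w_{m_L,n_L}(H_f,\,\cdot\,)\, a(\widetilde{k}_L^{(n_L)})\, F_L(H_f),
$$
integrated against $\prod_l |K_l^{(m_l,n_l)}|^{-1/2}\,dK_l^{(m_l,n_l)}$ and, in the second case, flanked on both sides by $P_{\rm red}$.

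The core step is to bring each such summand into normal order: commute all annihilation operators to the far right and all creation operators to the far left, using the canonical commutation relations $[a(k),a^*(k')]=\delta(k-k')$. An annihilation operator originating in the $l$-th factor either survives all the way to the right, where it becomes one of the $N$ external annihilation variables, or it is contracted, by one use of the CCR, with a creation operator of some factor $l'>l$; symmetrically each creation operator of the $l$-th factor is either external or is contracted with an annihilation operator of some factor $l'<l$. Simultaneously, whenever a creation operator $a^*(K)$ is commuted leftwards past some $F_j(H_f)$ or past a kernel $w_{\cdot,\cdot}(H_f,\,\cdot\,)$, and whenever an annihilation operator is commuted rightwards past such an object, one applies the pull-through formula of Lemma~\ref{lem:pullthrough} to replace its argument $H_f$ by $H_f+\omega(K)$. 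Keeping track of these shifts, $F_j$ ends up with argument $H_f+\widetilde r_j$, the sum $\widetilde r_j$ of \eqref{eq:rltildedef} collecting the external annihilator momenta of factors $\le j$ together with the external creator momenta of factors $>j$, while the kernel of the $j$-th factor ends up with argument $H_f+r_j$ as in \eqref{eq:rldef}; the $j$ versus $j-1$ asymmetry between \eqref{eq:rldef} and \eqref{eq:rltildedef} reflects that the $j$-th factor's own external annihilators pass $F_j$ (which sits to their right) but not its kernel (which sits to their left), and that its own external creators pass neither. A clean way to organise this accounting is induction on $L$.

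It then remains to repackage the contraction sum. For a fixed contraction scheme the $j$-th factor uses the kernel $w_{m_j+p_j,\,n_j+q_j}$, with $p_j$ of its creation arguments and $q_j$ of its annihilation arguments serving as the contracted ones. Because each kernel is symmetric separately in its creation and in its annihilation arguments (property (i) of $\WW_{m,n}^{[d]}$, resp.\ the symmetry built into $\mathcal M$), the sum over which particular arguments play the internal role contributes the multiplicity $\binom{m_j+p_j}{p_j}\binom{n_j+q_j}{q_j}$ times a single representative. The contracted momenta, together with the $\delta$-functions produced by the CCR, are integrated out: gathering the internal operators of the $j$-th factor into $\underline{W}_{p_j,q_j}^{m_j,n_j}[w]$ — which is exactly \eqref{eq:defofWW1} in case \eqref{eq:WforFirst} and \eqref{eq:defofWW2} in case \eqref{eq:WforSecond}, the cutoffs $1_{[0,1]}(H_f)$ there coming from the $P_{\rm red}$'s — and sandwiching the remaining internal creation and annihilation operators against the Fock vacuum produces precisely the vacuum expectation value in \eqref{eq:complicated} (the end factors $F_0$ and $F_L$ there reduce to ordinary evaluations at $r+\widetilde r_0$ and $r+\widetilde r_L$, with $r$ the external field-energy variable of $\widetilde w_{M,N}$, while $F_1,\dots,F_{L-1}$ stay operator-valued, because the internal field energy $H_f$ still lives inside the vacuum expectation). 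Finally the $M$ surviving external creation operators on the far left and the $N$ surviving external annihilation operators on the far right reassemble, after integration, into $H(\cdot)$ applied to the kernel $\widetilde w_{M,N}$; since $H(\cdot)$ depends only on the symmetrization of its kernel one may replace $\widetilde w_{M,N}$ by $\widetilde w_{M,N}^{({\rm sym})}$, which is the asserted identity. The two cases $w\in\mathcal M$ and $w\in\WW_\xi^{[d]}$ are treated verbatim, using the respective definition of $\underline{W}$.

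I expect the main obstacle to be organisational rather than conceptual: getting every $H_f$-argument shift right — checking the precise form of $r_j$ and $\widetilde r_j$ and the placement of the $1_{[0,1]}(H_f)$ factors — and counting the combinatorial multiplicities so that they collapse exactly to the stated binomial coefficients. The matrix-valued nature of the kernels introduces no further difficulty, since the creation and annihilation operators act only on the Fock tensor factor while the matrix-valued kernels are never commuted past one another, so their left-to-right ordering is preserved throughout. As the statement is a formal identity, no analytic input is required at this stage.
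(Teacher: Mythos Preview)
Your outline is correct and follows the standard normal-ordering argument. The paper itself does not give a proof of this theorem: it simply cites \cite{BacFroSig98-2} and remarks that the argument is essentially the same as Theorem~3.6 in \cite{BCFS} or Theorem~7.2 in \cite{HasHer11-1}, and your sketch is precisely the argument found in those references.
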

A proof can be found in \cite{BacFroSig98-2}. We note that the proof is essentially the same as the proof of Theorem 3.6 in \cite{BCFS} or Theorem 7.2 in
\cite{HasHer11-1}.

\bibliography{renormfordegen}

\end{document}